\documentclass[11pt]{article}
\usepackage{color}
\usepackage[normalem]{ulem}
\usepackage{tikz}
\usepackage{todonotes}
\usepackage{epsfig,enumerate,float,amsmath,amsfonts,amsbsy,amssymb,amsthm,mathrsfs,lineno,ifpdf,relsize,float,etoolbox}
\usepackage{xcolor}
\usepackage[colorlinks=true, allcolors=blue]{hyperref}
\usepackage[ruled,vlined,linesnumbered,procnumbered]{algorithm2e}
\usepackage[mathscr]{euscript}
\usepackage{graphicx}
\usepackage{setspace}
\usepackage{makecell}
\usepackage[utf8]{inputenc}
\usepackage[english]{babel}
\usepackage[numbers]{natbib}
\usepackage[usenames,dvipsnames]{pstricks}
\usepackage{pst-grad}
\usepackage{float}
\usepackage{tikz}
\usetikzlibrary{positioning,fit,backgrounds,arrows.meta}
\theoremstyle{plain}
\newtheorem{theorem}{Theorem}
\newtheorem{lemma}{Lemma}
\newtheorem{coro}{Corollary}
\newtheorem{defi}{Definition}
\newtheorem{obs}{Observation}
\newtheorem{claim}{Claim}

\makeatletter
\AtBeginEnvironment{procedure}{\let\c@algocf\c@procedure}
\makeatother

\let\oldenumerate\enumerate
\renewcommand{\enumerate}{
	\oldenumerate
	\setlength{\itemsep}{1.5pt}
	\setlength{\parskip}{0pt}
	\setlength{\parsep}{0pt}
}
\title{Complexity and algorithms for proper conflict-free coloring in graphs}
\author{$^1$Dinabandhu Pradhan, $^1$Vaishali Sharma, \\ \\
	$^{1}$Department of Mathematics \& Computing\\ Indian Institute of Technology (ISM), Dhanbad\\
	\small \tt Email: dina@iitism.ac.in; vaishali.sharma7791@gmail.com}

\begin{document}

	\maketitle
	\begin{abstract}
		A \emph{proper conflict-free (PCF) $k$-coloring} of a graph $G$ is a proper $k$-coloring such that there exists a color that appears exactly once in the neighborhood of every non-isolated vertex $v\in V(G)$.
		The \emph{PCF chromatic number}, denoted by $\chi_{pcf}(G)$, is the least integer $k$ such that there exists a PCF $k$-coloring of $G$. Given a graph $G$ and a positive integer $k$, PCF $k$-\textsc{colorability} is to decide whether $G$ admits a PCF $k$-coloring. Ahn et al. [Discrete Appl. Math. 377 (2025) 10-17] proved that PCF $k$-\textsc{colorability} is NP-complete for bipartite graphs. We strengthen this result by proving that PCF $k$-\textsc{colorability} is NP-complete for perfect elimination bipartite graphs, which is a proper subclass of bipartite graphs. We also show that the PCF chromatic number of a graph cannot be approximated within $O(n^{1-\varepsilon})$ unless P=NP, for any $\varepsilon>0$. 
		
		\medskip
		On the positive side, we provide linear-time algorithms for PCF $k$-\textsc{colorability} in block graphs, proper interval graphs, chain graphs, and pseudo-split graphs. We show that $\chi_{pcf}(G)\leq \omega(G)+1$ for block graphs, proper interval graphs, and pseudo-split graphs (except $C_5$), and we characterize all graphs for which the equality holds.
		
	\end{abstract}
	
	\bigskip
	\noindent\textbf{Keywords:}
	Coloring; PCF coloring; block graphs; proper interval graphs;  algorithms; NP-completeness.

	\section{Introduction and notation}
	
	All graphs considered in this paper are finite, simple, connected, and undirected. We follow the notation and terminology used in \cite{textBook}. Let $G=(V,E)$ be a graph. The neighborhood of a vertex $v\in V(G)$ is $N_G(v)=\{u\in V(G): uv\in E(G)\}$ and the closed neighborhood of $v$ is $N_G[v]=N_G(v)\cup\{v\}$. When the context of the graph is clear, we simply write $N(v)$ and $N[v]$. For $k\in \mathbb{N}$, we use the notation $[k]$ to denote the set $\{1,2,\ldots,k\}$. For a set $S\subseteq V(G)$, $G[S]$ denotes the subgraph induced by vertices of $S$. 
	A \emph{proper $k$-coloring} of a graph $G$ is an assignment $c:V(G) \to [k]$ such that $c(u) \neq c(v)$ for each $uv \in E(G)$. The \emph{chromatic number} of a graph $G$, denoted by $\chi(G)$, is the minimum $k$ such that there exists a proper $k$-coloring of $G$.
	
	\medskip
	
	A hypergraph $\mathcal{H}=(\mathcal {V(H), E(H)})$ consists of a vertex set $\mathcal{V(H)}$ and an edge set $\mathcal{E(H)}$ of (hyper)edges such that each $e\in \mathcal{E(H)}$ is a subset of the vertex set $\mathcal{V(H)}$ of arbitrary size. 
	Even et al.~\cite{ELRS} introduced the notion of conflict-free coloring for hypergraphs. A conflict-free coloring of a hypergraph $\mathcal{H}$ is an assignment of $k$ colors to the vertices of the hypergraph $\mathcal{H}$ such that for each (hyper)edge $e \in \mathcal{E(H)}$ there exists a color that appears exactly once on its vertices. The study of conflict-free coloring originated from frequency-assignment and sensor network problems. Cheilaris~\cite{C} introduced the concept of conflict-free coloring for graphs. A conflict-free coloring of a graph $G$ is an assignment of $k$ colors to the vertices of the graph $G$ such that every non-isolated vertex $v\in V(G)$ has a color that appears exactly once on its neighborhood. 
	
%	Several variants of this coloring are known and have been studied previously \cite{AADFGHKS, BKM1, BKM2, BKM3, BKP, CKP, CL, HGY}.

	\medskip
	
	A \emph{proper conflict-free $k$-coloring} (or \emph{PCF $k$-coloring}, for short) of a graph $G$ is a proper $k$-coloring such that there exists a color that appears exactly once in the open neighborhood of every non-isolated vertex $v\in V(G)$.
	The \emph{PCF chromatic number} of a graph $G$, denoted by $\chi_{pcf}(G)$, is the least integer $k$ for which $G$ admits a PCF $k$-coloring.
	The proper conflict-free coloring was formalized by Fabrici et al.~\cite{FLRS} in their study of unique-maximum neighborhood colorings. They proved that every planar graph admits a PCF $8$-coloring, and constructed planar graphs requiring at least $6$ colors. Cho et al. \cite{CCKP} completely determined the threshold on the maximum average degree of a graph G, denoted by $\text{mad}(G)$, that guarantees a PCF $c$-coloring
	for all $c$ and also provided tightness examples. In addition, they proved that every planar graph with girth at least $5$ admits a proper conflict-free $7$-coloring. Caro et al. \cite{CPS} studied the PCF coloring for several basic graph classes including trees, cycles, hypercubes, and subdivisions of complete graphs. Liu \cite{Liu} showed that $\chi_{pcf}$ is bounded in several graph classes, including odd-minor–free graphs and graphs of bounded layered treewidth.
	
	% \medskip
	% An \emph{odd coloring} of a graph $G$ is a proper coloring such that there exists a color appearing an odd number of times in the neighborhood of every non-isolated vertex $v\in V(G)$.
	% The \emph{odd chromatic number}, denoted by $\chi_{o}(G)$, is the least integer $k$ for which $G$ admits an odd $k$-coloring of.
	% Every PCF-coloring is an odd coloring, and hence $\chi_o(G) \le \chi_{pcf}(G)$. However, $\chi_o(G)$ and $\chi_{pcf}(G)$ can differ significantly.
	% The notion of odd coloring was introduced by Petruševski and Škrekovski \cite{PS1}, and subsequently studied (see \cite{KO,Priyamvada}).
	%\medskip
	% Bhyravarapu et al. \cite{BKR} studied the parameterized complexity of \textsc{Odd $k$-coloring}. They have shown that \textsc{Odd $k$-coloring} can be solved in polynomial time on cographs and split graphs but remains NP-complete on perfect elimination bipartite graphs and star-convex bipartite graphs. 
	
	\medskip
	
	Given a graph $G$, the decision version of finding a PCF coloring of $G$ is defined as follows:
	
	\begin{center}
		\fbox{
			\begin{minipage}{0.9\linewidth}
				\underline{PCF $k$-\textsc{colorability}}
				
				\vspace{0.5em}
				
				\textbf{Input:} A graph $G = (V, E)$ and an integer $k>0$.
				
				\textbf{Question:} Does there exist a PCF coloring 
				$f : V(G) \to [k]$ of $G$?
			\end{minipage}
		}
	\end{center}

	Caro et al. \cite{CPS} proved that \textsc{PCF $k$-colorability} is NP-complete for general graphs. 
	Ahn et al. \cite{AIO} proved that PCF $k$-\textsc{colorability} is NP-complete for bipartite graphs. Recently, Sharma et al.~\cite{SPP} showed that the PCF chromatic number is bounded by a constant for AT-free graphs, split graphs, and $P_4$-sparse graphs. They also improved existing upper bounds for minor-closed $k$-planar graphs and $K_t$-minor-free graphs.
	To the best of our knowledge, no other algorithmic results are known for PCF $k$-\textsc{colorability}. 
	In this paper, we analyze the computational complexity of \textsc{PCF $k$-coloring} in several graph classes. The contributions of this paper are summarized as follows.
	
	\begin{itemize}
		\item We prove that \textsc{PCF $k$-colorability} is NP-complete for perfect elimination bipartite graphs.
		\item We show that the PCF chromatic number of a graph cannot be approximated within $O(n^{1-\varepsilon})$ for any $\varepsilon>0$, unless P=NP. The reduction used to prove the hardness result for approximating the PCF chromatic number yields that PCF $k$-\textsc{colorability} is NP-complete for dually chordal graphs.
		\item We present linear-time algorithms to compute an optimal PCF coloring for block graphs, proper interval graphs, chain graphs, and pseudo-split graphs. Note that block graphs and proper interval graphs are dually chordal graphs, whereas chain graphs are perfect elimination bipartite graphs.
		
		\item We show that $\chi_{pcf}(G)\leq \omega(G)+1$ for block graphs, proper interval graphs, and pseudo-split graphs (except $C_5$), and we characterize all graphs for which the equality holds.
	\end{itemize}
	
	\section{NP-completeness}
	
	PCF $k$-\textsc{colorability} is shown to be NP-complete for bipartite graphs \cite{AIO}. In this section, we strengthen this result by showing that PCF $k$-\textsc{colorability} is NP-complete for perfect elimination bipartite graphs.
	
	\medskip
	
	Let $G=(X,Y,E)$ be a bipartite graph. An edge $xy\in E(G)$ is called \emph{bisimplicial} if the subgraph induced by $N(x)\cup N(y)$ is a complete bipartite graph. A \emph{perfect edge elimination scheme} of a graph $G$ is an ordering of edges $(x_1y_1,\ldots,x_ky_k)$ such that each edge $x_{j+1}y_{j+1}$ is bisimplicial in the graph $G[V(G)\setminus\{x_1, y_1, x_2, y_2, \ldots, x_{j}, y_{j}\}]$.
	A graph is a perfect elimination bipartite graph if it admits a perfect edge elimination scheme.
	
	\medskip

	% In this section, we establish the NP-completeness of PCF $k$-\textsc{colorability} for $k \geq 3$ in perfect elimination bipartite graph. 
	First, we show that PCF $3$-\textsc{colorability} is NP-complete for perfect elimination bipartite graphs (see Theorem~\ref{np-pebg-thm1}). Then we show that for each $k\geq 4$, PCF $k$-\textsc{colorability} is NP-complete for perfect elimination bipartite graphs (see Theorem~\ref{np-pebg-thm2}). 
	
	\begin{theorem}\label{np-pebg-thm1}
		PCF $3$-\textsc{colorability} is NP-complete for perfect elimination bipartite graphs.
	\end{theorem}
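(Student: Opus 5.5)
Membership in NP is immediate: given a coloring $f:V(G)\to[3]$, we check in polynomial time that it is proper and that every non-isolated vertex sees some color exactly once in its open neighborhood. For hardness we will reduce from a known NP-complete problem. The most natural source is the bipartite reduction of Ahn et al.~\cite{AIO}. An alternative is a direct reduction from \textsc{Positive Not-All-Equal 3-SAT}, or from proper $3$-colorability of general graphs. We commit to a reduction from proper $3$-\textsc{colorability} of a graph $H$, since that problem is NP-complete in general. The idea is to subdivide edges and attach gadgets, and then to check that the resulting graph is bipartite and admits a perfect edge elimination scheme.

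\textbf{Construction.} Given $H$, we build $G$ as follows. There is one vertex $v$ for each $v\in V(H)$, placed in side $X$. For each edge $uv\in E(H)$ we add a small bipartite gadget $D_{uv}$. It contains a vertex $e_{uv}$ in side $Y$ adjacent to $u$ and $v$, together with pendant structures hanging off $e_{uv}$. These pendants are chosen so that in every PCF $3$-coloring the unique color in $N(e_{uv})$ must come from the gadget's private neighbors, and this forces $u$ and $v$ to receive distinct colors. Concretely, we will give $e_{uv}$ a pendant path $e_{uv}-a-b$. The conflict-free requirement at the leaf $b$ is automatic. We then attach further pendant paths so that the color pattern on $N(e_{uv})$ is forced to look like $\{c(u),c(v),c(a),\dots\}$ with exactly one unique color exactly when $c(u)\neq c(v)$. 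We will also attach private pendant leaves or paths to each original vertex $v$, so that the PCF condition at $v$ is always satisfiable regardless of how many edge-gadgets meet at $v$. For instance, give $v$ three pendant paths whose colors we may choose freely.

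\textbf{Perfect elimination.} Every vertex of degree one yields a bisimplicial edge together with its neighbor, provided that neighbor's other neighbors form a complete bipartite structure. We eliminate the pendant paths from the leaves inward; each leaf edge $\ell w$ is bisimplicial when $w$ has degree $\le 2$ at that moment, or more generally when $N(\ell)=\{w\}$, because then $N(\ell)\cup N(w)$ induces a star. After all gadget pendants are removed, each $e_{uv}$ has only the neighbors $u$ and $v$. We then pick an edge $e_{uv}u$ where $u$ is currently a leaf-like vertex, and in general we eliminate the star-like structures. Since the remaining graph is the subdivision of $H$, which can still contain cycles, we instead make the scheme work by first eliminating the edges (original vertex, its private pendant neighbor). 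Each such edge $vp$ with $N(p)=\{v\}$ is bisimplicial because $N(v)\cup N(p)$ induces a star $K_{1,d}$. Removing $v$ and $p$ together then deletes all original vertices, which leaves only isolated or pendant gadget parts that are trivially eliminable. This is the key trick: a pendant neighbor at every high-degree vertex makes the graph perfect elimination bipartite. We therefore design every gadget so that each vertex of $X$ (and each needed vertex of $Y$) carries a pendant leaf.

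\textbf{Correctness.} We will prove that $H$ is $3$-colorable if and only if $G$ has a PCF $3$-coloring. In the forward direction, we extend a proper coloring of $H$ by explicit color choices on the gadgets and verify uniqueness at each vertex locally. In the reverse direction, we show that the pendant structure forces $c(u)\neq c(v)$ for every edge $uv$. The main obstacle is exactly this forcing. With only $3$ colors and a vertex $e_{uv}$ having neighbors $u$, $v$ and a pendant leaf $a$, the PCF condition at $e_{uv}$ is already satisfied by $a$ whenever $c(a)\notin\{c(u),c(v)\}$ or $c(u)=c(v)\neq c(a)$, so a single leaf gives no forcing. We would first try to give $e_{uv}$ two pendant leaves and argue that they must receive the same color. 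The reason to expect this is that the two colors other than $c(e_{uv})$ are the only options for them, and the PCF conditions at the leaves and the degree-$2$ gadget vertices can be arranged to force this. Once the two leaves share a color, that color is not unique in $N(e_{uv})$, so uniqueness must come from $u$ or $v$, which forces $c(u)\neq c(v)$. If leaf colors cannot be forced, we fall back on mirroring the gadget of Ahn et al.~\cite{AIO}, and we only verify that their gadgets, augmented with pendant leaves, remain valid and become perfect elimination bipartite.
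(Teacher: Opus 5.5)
\textbf{Genuine gap.} The gadget $D_{uv}$ and the reverse-direction forcing are never specified: you stop at ``we would first try'' and fall back on an unverified modification of Ahn et al. More seriously, the architecture you commit to cannot work with three colors. In a PCF $3$-coloring, a vertex $w$ of degree at least $2$ sees exactly the two colors other than $c(w)$, one of them exactly once, so all but one neighbor of $w$ share a color.

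In your graph each $v\in V(H)$ is a single vertex adjacent to $e_{uv}$ for every $u\in N_H(v)$. The reverse direction needs the gadget to force $c(u)\neq c(v)$. Once it does, properness at $e_{uv}$ fixes $c(e_{uv})$ as the third color, so the colors on the $e_{uv}$ around $v$ simply mirror the color classes of $N_H(v)$. Pendant paths at $v$ only raise multiplicities and can never lower them. So your claim that private pendants make the condition at $v$ satisfiable ``regardless of how many edge-gadgets meet at $v$'' is false.

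Concretely, take $H=K_{2,2,2}$. It is $3$-colorable, its coloring is unique up to renaming, and in it every vertex has two neighbors in each other class. Then every $v$ sees both available colors at least twice, so $G$ has no PCF $3$-coloring. Thus, once your gadget does its job, a yes-instance is mapped to a no-instance.

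The forcing you call the main obstacle is self-inflicted. A bare degree-$2$ vertex $e_{uv}$ already forces $c(u)\neq c(v)$ by its own PCF condition. Pendant leaves at $e_{uv}$, by contrast, cannot be forced to agree: with $c(u)=c(v)=a$, color the leaves $a$ and $b$, and $b$ is unique in $N(e_{uv})$.

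\textbf{The missing idea.} Spread each vertex of $H$ over a bounded-degree structure so that no vertex of $G$ has many forced neighbors. The paper reuses the construction of Theorem~2.2 of \cite{XXWH}. Each $v$ becomes a cycle $C_v$ of length $6\deg(v)$ with pendants at positions $4,10,\ldots$. For each edge $uv$, a position $\equiv 1 \pmod 6$ of $C_u$ and one of $C_v$ are joined through a new degree-$2$ vertex $x_{uv}$.

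The PCF conditions at the degree-$2$ cycle vertices force all positions $\equiv 1 \pmod 3$ of $C_v$ to share one color, which acts as the color of $v$. The vertex $x_{uv}$ forces distinct colors across each edge of $H$. Since every vertex has degree at most $3$, a proper $3$-coloring of $H$ extends. The paper takes this equivalence from \cite{XXWH} and only adds the perfect edge elimination scheme.

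Your pendant trick for that scheme is the right one and is exactly what the paper does. The pendant edges $v'_{i_4}v_{i_4}$ are bisimplicial. After they are removed, the cycles break into paths that are peeled off two vertices at a time, and the leftover $P_3$'s through the $x_{uv}$ are eliminated last.
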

	\begin{proof}
		We reduce an instance of $3$-\textsc{colorability}  into an instance of PCF $3$-\textsc{colorability}  for perfect elimination bipartite graphs. Let $G$ be a graph with the vertex set $V(G) = \{v_1, v_2, \ldots, v_n\}$. We use the same construction used in the proof of Theorem $2.2$ of~\cite{XXWH}. For the completeness of the paper, we provide the construction of the graph $G'$ from $G$.
		
		\begin{figure}[h]
			\begin{center}
				\includegraphics[width=\linewidth]{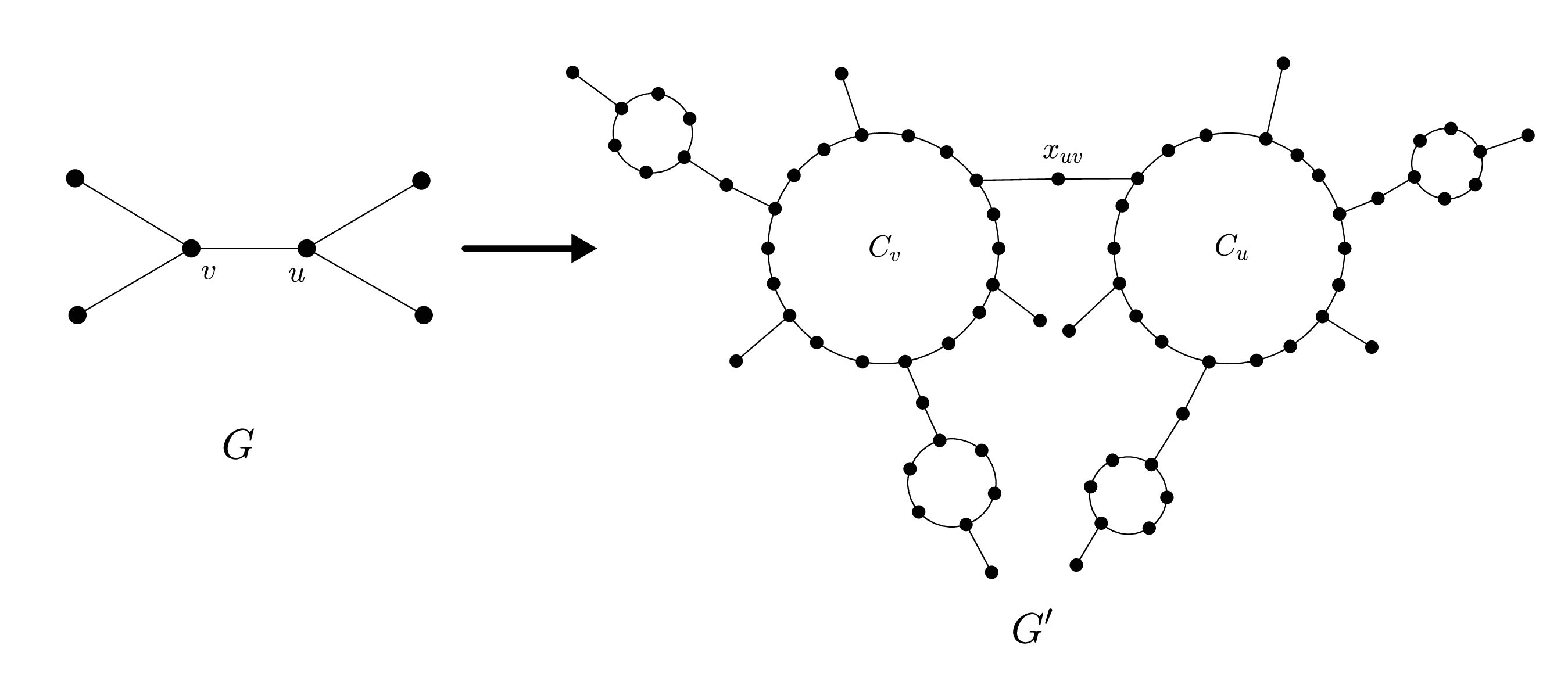}
				\caption{Construction of the graph $G'$ from the graph $G$.}
				\label{k=3}
			\end{center}	
		\end{figure}

		\begin{enumerate}
			\item Replace each vertex $v_i$ of $G$ by a cycle $C_{v_i}$ of length $6\cdot\deg(v_i)$, with an outgoing ‘‘half-edge’’ at each of the positions
			$1, 4, 7, 10, 13, \ldots$ along the cycle, representing twice the $\deg(v)$ edges in $G$ incident with $v$, and all other vertices on $C_v$ having
			degree $2$. We denote every vertex of $C_{v_i}$ by $v_{i_1}, v_{i_2}, \ldots, v_{i_{6\cdot\deg(v)}}$ and every vertex at the half edge of the positions $4, 10, 16, \ldots$ by $v_{i_4}', v_{i_{10}}', v_{i_{16}}', \ldots$.
			\item For every edge $uv \in E(G)$ we identify one of the half-edges at positions $1, 7, 13, \ldots$ from each of the $C_u$ and $C_v$,
			and we glue the two half-edges together at a newly added vertex $x_{uv}$ with degree $2$.
		\end{enumerate}
		
		\begin{claim} \label{reduction3}
			$G'$ is a perfect elimination bipartite graph.
		\end{claim}
		\begin{proof}
			% First, we show that $G'$ is a bipartite graph. 
			The new graph $G'$ is clearly bipartite, since every edge of $G$ is replaced by a $P_2$ while the segments of the cycles representing the vertices of $G$ are of even length equal to $6$. 
			Now, we provide a perfect edge elimination scheme for the bipartite graph $G'$. Let $E_i = (v_{i_4}'v_{i_4}, v_{i_3}v_{i_2}, v_{i_5}v_{i_6}, v_{i_{10}}'v_{i_{10}}, v_{i_9}v_{i_8}, v_{i_{11}}\\v_{i_{12}}, \ldots, v_{i_{6\cdot\deg(v_i)-2}}'v_{i_{6\cdot\deg(v_i)-2}}, v_{i_{6\cdot\deg(v_i)-3}}v_{i_{6\cdot\deg(v_i)-4}}, v_{i_{6\cdot\deg(v_i)-1}}v_{i_{6\cdot\deg(v_i)}})$ for each $i\in[n]$. 
			Let $X_{u_iv_i} = (x^{1}_{u_iv_i}x^{2}_{u_iv_i} )$ such that $x^{1}_{u_iv_i}x^{2}_{u_iv_i}\in E(G')$ is one of the edges corresponding to the edge $u_iv_i\in E(G)$. 
			Let $\sigma = (E_1, E_2, \ldots, E_n, X_{e_1}, X_{e_2}, \ldots, X_{e_m}) = \{e_1', e_2', \ldots, e_k'\}$. Let $S_i=\{e_1', e_2', \ldots, e_{i-1}'\}$. Let $S_i'$ be the set of endpoints of each edge $e_i'\in S_i$.
			Since every edge $e_i'\in\sigma$ is bisimplicial in $G'[V(G)\setminus S_i']$ for each $i \in [k]$. Also, the graph $G'[V(G') \setminus S_k']$ does not have any edge. Therefore, $\sigma$ is a perfect edge elimination scheme of the graph $G'$. Thus $G'$ is a perfect elimination bipartite graph.
		\end{proof}
		
		It has been shown in Theorem $2.2$ of \cite{XXWH} that 
		$G$ is $3$-colorable if and only if $G'$ is PCF $3$-colorable. Therefore, PCF $3$-\textsc{colorability} is NP-complete for perfect elimination bipartite graphs. 
	\end{proof}
	
	\begin{theorem}\label{np-pebg-thm2}
		For $k\geq 4$,   PCF $k$-\textsc{colorability} is NP-complete for perfect elimination bipartite graphs. 
	\end{theorem}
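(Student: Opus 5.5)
We prove Theorem~\ref{np-pebg-thm2} by reduction from PCF $3$-\textsc{colorability} on perfect elimination bipartite graphs, which is NP-complete by Theorem~\ref{np-pebg-thm1}. Membership in NP is clear, since a coloring can be checked in polynomial time. The plan is induction on $k$: given a perfect elimination bipartite graph $H$, build a perfect elimination bipartite graph $H'$ such that $H$ is PCF $(k-1)$-colorable if and only if $H'$ is PCF $k$-colorable. The gadget must force one of the $k$ colors to be unusable on $V(H)$ in a controlled way, while keeping $H'$ bipartite with a perfect edge elimination scheme.

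\textbf{Construction.} Take a fresh color-forcing gadget. For each vertex $v\in V(H)$, attach a pendant structure that makes a designated new neighbor $w_v$ of $v$ carry the new color $k$, and make $w_v$ a unique color witness for $v$. Concretely:
\begin{enumerate}
\item Add a new vertex $w_v$ adjacent to $v$.
\item To $w_v$, attach $k-1$ pendant paths (or small trees) whose PCF constraints force the colors on $N(w_v)\setminus\{v\}$.
\end{enumerate}
A simpler alternative, which I would try first, is to take two copies of $H$ with opposite bipartition roles, or to add a single global vertex class. However, a universal vertex would destroy bipartiteness, so pendant gadgets seem safer. Because all added structures are trees hanging off vertices of $H$, $H'$ stays bipartite. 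The perfect edge elimination scheme is obtained by first eliminating the pendant edges from the leaves inward (a pendant edge $xy$ with $x$ a leaf is bisimplicial whenever $N(y)$ minus the leaf is handled appropriately, so this ordering needs care) and then appending the scheme of $H$.

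\textbf{Correctness.} For the forward direction, a PCF $(k-1)$-coloring of $H$ extends to $H'$: color every $w_v$ with $k$. Then $k$ appears exactly once in $N_{H'}(v)$, so $v$'s conflict-free condition is automatically satisfied, and the gadget vertices can be colored explicitly. The difficulty is that this makes the reduction too weak: $v$ no longer needs a witness inside $H$.

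So the real design must instead force color $k$ onto the \emph{$H$-side}, or make $k$ appear at least twice around each $v$ so it cannot serve as a witness. I would therefore attach to each $v$ two pendant gadgets whose roots are forced to color $k$. The gadget is a structure in which some vertex must receive a specified color up to permutation, for instance a star or tree that forces $k$ distinct colors locally. Then $k$ appears twice in $N(v)$, so it is not a unique color there, and $v$ itself cannot be colored $k$. Consequently $V(H)$ uses only $k-1$ colors, and the unique witness of $v$ must come from $N_H(v)$, which yields a PCF $(k-1)$-coloring of $H$.

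\textbf{Main obstacle.} The hardest step is the forcing gadget. It must pin down a \emph{common} color $k$ across all vertices $v$; I would achieve this by linking every gadget to one shared gadget via even-length paths. At the same time it must keep the graph perfect elimination bipartite, and it must remain PCF $k$-colorable regardless of how $H$ is colored. I would verify that the linking paths admit a bisimplicial elimination order, eliminating interior edges of the paths first.
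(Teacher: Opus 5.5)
\textbf{This is a genuine gap: the forcing gadget, which is the whole proof, is never constructed, and the ingredients you suggest for it do not work.}

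The skeleton is sound. Suppose each $v\in V(H)$ gets exactly two new neighbors forced to carry one common color $k$. Then $v$ cannot use $k$, and $k$ occurs twice in $N(v)$ so it cannot be the witness; hence the restriction to $H$ is a PCF $(k-1)$-coloring. Eliminating the gadget edges first and then running $H$'s own scheme is also a workable way to get a perfect edge elimination scheme. The problems are in the gadget itself:
\begin{itemize}
\item A pendant tree forces nothing about its root. For any color $c\neq f(v)$ on the root you can extend: give one child a color outside $\{c,f(v)\}$, all other children $f(v)$, and recurse. So stars and trees cannot pin a color, let alone force $k\geq 4$ distinct colors.
\item A path whose internal vertices have degree $2$ only requires vertices at distance at most two along it to differ. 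It can force inequality (length two) but never equality of its ends. Linking all gadgets to a shared one by even-length paths therefore does not make the forced color common.
\item These links create cycles through $H$. Since $V(H)$ meets both sides of $H$'s bipartition, connectors of one fixed parity from a single shared vertex produce odd closed walks. So bipartiteness is no longer automatic, contrary to your ``trees hanging off $H$'' argument.
\end{itemize}

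The missing idea is that a common degree-$2$ neighbor of $x$ and $y$ forces $f(x)\neq f(y)$ in every PCF coloring. Two consequences follow. Subdividing all edges of a clique $A=\{a_1,\dots,a_k\}$ forces $k$ distinct colors on $A$. Joining a vertex $x$ to each $a_j$ through a private $2$-vertex then forces $x$ to take the unique color missing from $A$ in any PCF $(k+1)$-coloring, and this is the same color for every such $x$.

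The paper uses exactly this gadget and skips your induction entirely. It reduces from ordinary $k$-\textsc{colorability}, and:
\begin{itemize}
\item it subdivides every edge of $G$, so the subdivision vertices encode properness and all of $V(G)$ lies on one side, avoiding your parity problem;
\item it gives each $v_i$ one forced neighbor $x_i$ of color $k+1$, so $v_i$ avoids $k+1$ and, in the forward direction, $k+1$ is $v_i$'s witness;
\item it adds pendants at the $v_i$ and $a_j$, so that the pendant edges at $A$, then the edges $z^i_1x_i$, then the pendant edges at the $v_i$ form the elimination scheme.
\end{itemize}
Such pendants at $v$ would break your PCF-to-PCF reduction, because they hand $v$ a witness outside $H$. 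This is one more reason your route needs a more delicate gadget.
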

	\begin{proof}
		We reduce an instance of $k$-\textsc{colorability} in general graphs to an instance of PCF $(k+1)$-\textsc{colorability} in perfect elimination bipartite graphs. 
		Let $G$ be a graph with $V(G) = \{v_1, v_2, \ldots, v_n\}$ and $E(G) = \{e_1, e_2, \ldots, e_m\}$. Let $G'$ be constructed from $G$ as follows.
		
		\begin{enumerate}
			\item Subdivide each edge $e_i\in E(G)$ and denote the new vertex by $v_{e_i}$. Let $V_E = \{v_{e_1}, v_{e_2}, \ldots, v_{e_m}\}$.
			\item Attach a pendant vertex $v_i'$ to every vertex $v_i \in V(G)$. Let $V_1 = \{v_1', v_2', \ldots, v_n'\}$.
			\item For each $i\in [n]$, take a copy $x_i$ of every vertex $v_i \in V(G)$ and make it adjacent to its original vertex, that is, $x_iv_i \in E(G')$. Let $V_2 = \{x_1, x_2, \ldots, x_n\}$. 
			\item Take a $k$-clique $A$, where  $A = \{a_1, a_2, \ldots, a_k\}$. Connect every vertex $x_i$ of $V_2$ to every vertex $a_j$ of the clique $A$ by a common $2$-neighbor $z^i_j$. Let $V_3 = \{z^i_j\colon ~ i\in [n], j\in[k]\}$.
			\item Attach a pendant vertex $a_i'$ to every vertex of the clique $A$. Let $V_4 = \{a_1', \ldots, a_k'\}$.
			\item Subdivide every edge of the clique $A$. Represent every new vertex corresponding to the edge $a_ia_j\in A$ by $a_{ij}$. Let $V_5 = \{a_{ij}: i, j\in[k],  i\neq j\}$.
		\end{enumerate}
		
		\begin{figure}[h]
			\begin{center}
				\includegraphics[width=\linewidth, height=0.5\textheight]{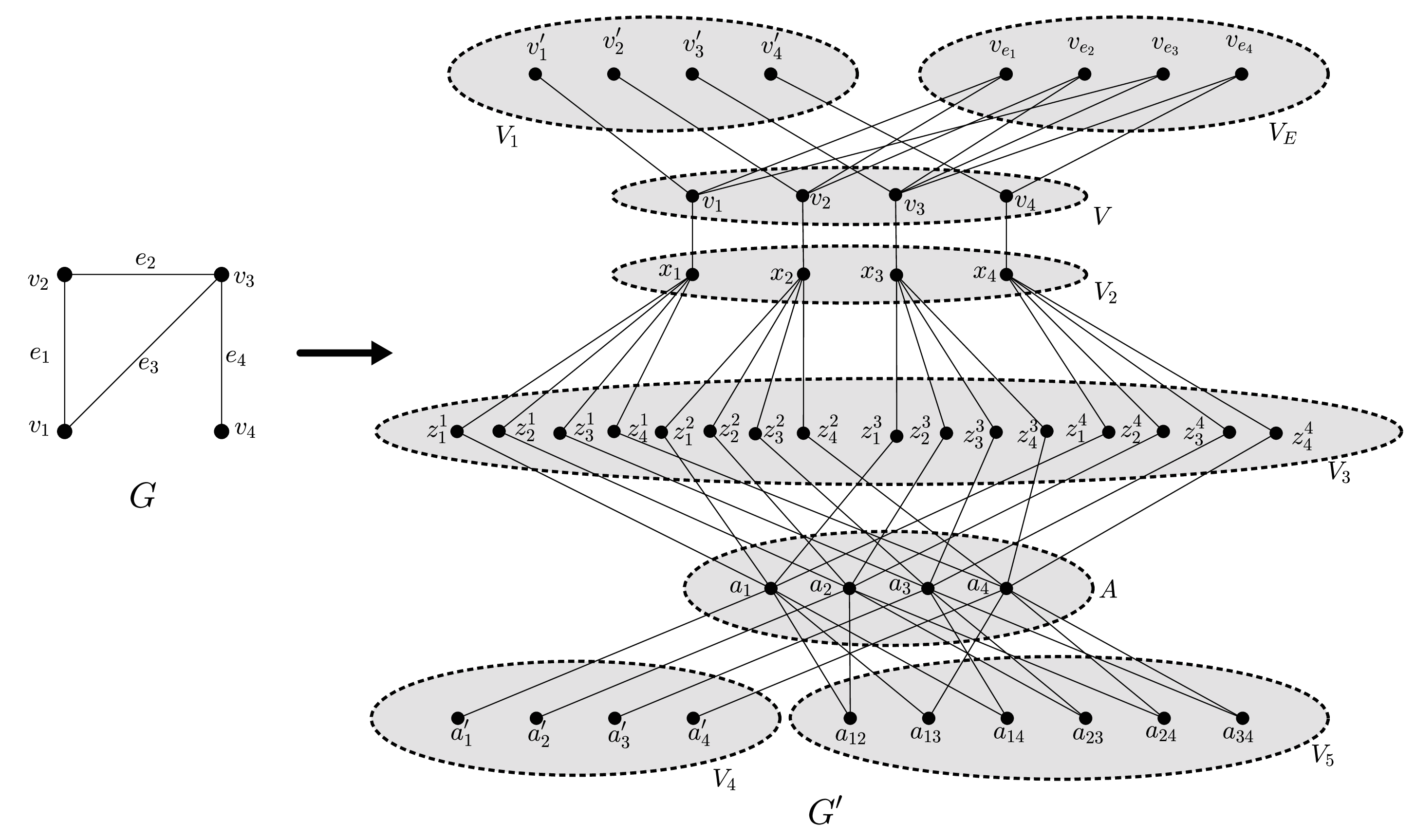}
				\caption{Construction of the graph $G'$ from the graph $G$ for the case $k=3$.}
				\label{k=3}
			\end{center}	
		\end{figure}
		
		\begin{claim} \label{reduction_gen}
			$G'$ is a perfect elimination bipartite graph.
		\end{claim}
		\begin{proof}
			First, we prove that $G'$ is a bipartite graph.
			Let $X = V \cup V_3 \cup V_4 \cup V_5$ and $Y = V_E \cup V_1 \cup V_2 \cup A$. Clearly, $X$ and $Y$ are independent sets. Therefore, $G'$ is a bipartite graph. Now we provide a perfect edge elimination scheme for the bipartite graph $G'$. Let $\sigma = (e_1' = a_1' a_1, e_2' = a_2' a_2, \ldots, a_k' a_k, z^1_1 x_{1}, z^2_1 x_{2}, \ldots, z^n_1 x_{n}, v_1' v_1, e_{l-1}' = v_2' v_2, \ldots, e_l' = v_n' v_n)$. For every edge $e_i'\in \sigma$, let $S_i = \{uv\colon~ uv\in\{e_1', e_2', \ldots, e_{i-1}'\} \}$. 
			Since every edge $e_i'\in \sigma$ is a bisimplicial edge in the graph $G'[V(G') \setminus S_i]$ for each $i$. Moreover, there are no edges in the graph $G'[V(G') \setminus S_l]$. Therefore, $\sigma$ is a perfect edge elimination scheme of the graph $G'$. Thus, the constructed graph $G'$ is a perfect elimination bipartite graph.
		\end{proof}
		
		\begin{claim} \label{(k)PCF_PEBG}
			$G$ is $k$-colorable if and only if $G'$ is PCF $(k+1)$-colorable for $k \geq 3$.
		\end{claim}
		\begin{proof}
			Let $G$ be $k$-colorable, $k\geq 3$. Let $f$ be a proper $k$-coloring of the graph $G$ with colors $1, 2, \ldots, k$. Let $f'$ be a coloring of the graph $G'$ such that $f'(v_i) = f(v_i)$ for each vertex $v_i \in V$, $f'(v_i') = c \in \{1, 2, \ldots, k\}\setminus f(v_i)$ for each vertex $v_i'\in V_1$, $f'(x_i) = k+1$ for each $x_i\in V_2$, $f'(a_i) = i$ for each $a_i\in A$, $f'(a_i') = k+1$ for each $a_i' \in [k]$, $f'(a_{ij}) = c \in \{1, 2, \ldots, k\}\setminus\{i, j\}$ for each $a_{ij}\in V_5$. Assign every vertex $v$ of the set $V_E$ a color from the set $\{1, 2, \ldots, k\}$ by avoiding the colors of its neighbors. Assign every vertex $z^i_j\in V_3$ a color from the set $\{1, 2, \ldots, k\}\setminus\{j\}$.
			Clearly, adjacent vertices of $G'$ are assigned different colors under $f'$. Thus, $f'$ is a proper coloring of the graph $G'$. 
			Clearly, the color $k+1$ appears exactly once in $N(v_i)$ for each vertex $v_i\in V$. Thus, each vertex $v_i\in V$ has a unique color in $N(v_i)$. The neighbors $s, t \in V$ of the vertex $v\in V_E$ are adjacent in $G$; therefore assigned different colors under $f$ and $f'$. Thus, every vertex $v\in V_E$ has a unique color in $N(v)$. Since every vertex $v$ of the set $V_1\cup V_4$ is a pendant vertex, $v$ has a unique color in $N(v)$. Since every vertex $v\in V_2$ has at least $3$ neighbors in $V_3$ which are assigned distinct colors, $v$ has a unique color in $N(v)$. Every vertex of the set $V_3\cup V_5$ is a $2$-vertex and its neighbors are assigned different colors. Thus, every vertex $v\in V_3\cup V_5$ has a unique color in $N(v)$.
			Clearly, the color $k+1$ appears exactly once in $N(a_i)$ for each vertex $a_i\in A$ . Thus, each vertex $a_i\in A$ has a unique color in $N(a_i)$. 
			Thus, every vertex of $G'$ has a unique color in its neighborhood. Hence, $G'$ is PCF $(k+1)$-colorable.
			
			\medskip
			
			Let $G'$ be PCF $(k+1)$-colorable. Let $f$ be a PCF $(k+1)$-coloring of the graph $G'$. Let $f'$ be a coloring of the graph $G$ such that $f'(v) = f(v)$ for each vertex $v\in V(G)$. For each edge $xy\in E(G)$, there exists a $2$-vertex $v$ in $G'$ such that $vx, vy\in E(G')$. Since $f'$ is a PCF coloring, $f(x)\neq f(y)$. Therefore, $f'(x) \neq f'(y)$ for each edge $xy\in E(G)$. Thus, $f'$ is a proper coloring. Since each pair of vertices of $A$ has a common $2$-neighbor, every vertex of $A$ must be assigned distinct colors, say $1, 2, \ldots, k$. Also, every vertex of the set $V_2$ has a common $2$-neighbor with every vertex of the set $A$. Thus, every vertex of the set $V_2$ must receive the color $k+1$. Since every vertex of $V$ has a neighbor in $V_2$, none of the vertex in $V$ is assigned color $k+1$. Thus, $f$ uses at most $k$ colors on the vertices of the set $V$. Therefore, $f$ is a proper $k$-coloring of the graph $G$.
		\end{proof}
		Therefore, for $k\geq 4$, PCF $k$-\textsc{colorability} is NP-complete for perfect elimination bipartite graphs.
	\end{proof}
	
	By Theorems~\ref{np-pebg-thm1}-\ref{np-pebg-thm2}, we have the following theorem.
	
	\begin{theorem}
		For $k\geq 3$,	PCF $k$-\textsc{colorability} is NP-complete for perfect elimination bipartite graphs.
	\end{theorem}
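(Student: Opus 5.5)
This statement is a combination of the two preceding results, so the plan is to split on $k$ and invoke the matching theorem. Before that, we record membership in NP, which is the same for every $k$. A certificate is an assignment $f\colon V(G)\to[k]$. Checking that $f$ is proper takes $O(|E|)$ time. For each non-isolated vertex $v$, we count the multiplicities of colors in $N(v)$ and check that some color occurs exactly once; this takes $O(\sum_v \deg(v)) = O(|E|)$ time overall. Hence PCF $k$-\textsc{colorability} lies in NP, also when restricted to perfect elimination bipartite graphs.

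For NP-hardness we consider two cases. For $k=3$, Theorem~\ref{np-pebg-thm1} applies directly: the reduction from $3$-\textsc{colorability} produces the graph $G'$, which is a perfect elimination bipartite graph by Claim~\ref{reduction3}. For $k\geq 4$, Theorem~\ref{np-pebg-thm2} applies. The point to check is that its underlying reduction goes from $(k-1)$-\textsc{colorability} to PCF $k$-\textsc{colorability}. Claim~\ref{(k)PCF_PEBG} gives this equivalence whenever $k-1\geq 3$, and $(k-1)$-\textsc{colorability} is NP-complete for every $k-1\geq 3$ by Karp's classical result. So the two cases cover every $k\geq 3$ with no gap.

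The only step that needs care is the bookkeeping in the case $k\geq 4$. We must confirm two things: that the shift by one color matches the range $k\geq 4$, and that the construction of $G'$ is polynomial in the size of $G$. For the size bound, $G'$ has $O(n+m+nk+k^2)$ vertices, and $k$ is at most $n$ in any nontrivial instance, so the reduction runs in polynomial time. Combining membership in NP with hardness in both cases gives the statement for all $k\geq 3$.
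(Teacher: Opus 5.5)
Your proposal is correct and follows the paper, which obtains this statement directly by combining Theorem~\ref{np-pebg-thm1} for $k=3$ with Theorem~\ref{np-pebg-thm2} for $k\geq 4$; the latter is the reduction from $(k-1)$-\textsc{colorability}. Your NP-membership check and polynomial-size bookkeeping are details the paper leaves implicit. One small quibble: NP-hardness of $k$-\textsc{colorability} for each fixed $k\geq 3$ is usually credited to Stockmeyer and to Garey, Johnson and Stockmeyer, not to Karp, whose result treats $k$ as part of the input.
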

	
	\section{Hardness of approximation}
	
	In this section, we provide a hardness result for approximating the PCF chromatic number. In particular, we show that for every $\varepsilon>0$, the PCF chromatic number cannot be approximated  within a factor of $O(n^{1-\varepsilon})$ for graphs having $n$ vertices, unless P=NP. To achieve this, we need the following result. 
	
	\begin{theorem}[\cite{Zuckerman}]\label{chromaticnumber}
		For any $\varepsilon > 0$, the chromatic number of a graph cannot be approximated within $O(n^{1-\varepsilon})$ unless P=NP.
	\end{theorem}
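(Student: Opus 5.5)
This statement is quoted from \cite{Zuckerman} and is not proved in the paper. Our plan follows the standard route: a PCP-based gap for \emph{clique}, turned into a gap for the \emph{chromatic number}, amplified to $n^{1-\varepsilon}$, and then derandomized.

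\textbf{Step 1: a gap instance for coloring.} We start from a PCP for an NP-complete language that has
\begin{itemize}
\item logarithmic randomness,
\item free-bit complexity $f$ (the number of accepting answer patterns per random string is at most $2^f$), and
\item \emph{amortized} free-bit complexity $\bar f$ arbitrarily close to $0$.
\end{itemize}
Apply the FGLSS construction to the verifier. Each vertex is a pair (random string, accepting answer pattern), and two vertices are adjacent when they are consistent. This gives a graph $H$ with the following property.
\begin{itemize}
\item If $x$ is in the language, $H$ has a large clique, one for each proof.
\item If $x$ is not in the language, every clique of $H$ is tiny.
\end{itemize}
To get the statement for chromatic number rather than clique number, we follow Feige--Kilian and work with the complement. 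In the YES case we need the vertices of $\overline{H}$ to be covered by few independent sets, that is, by few cliques of $H$. To arrange this, we randomize the PCP over a family of proofs, for instance shifted or randomly permuted proofs, so that the consistent cliques coming from several proofs together cover all accepting configurations. In the NO case, $\overline{H}$ has no large independent set, so $\chi(\overline{H}) \ge |V|/\alpha(\overline{H})$ is large.

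\textbf{Step 2: amplification.} We amplify the gap by running the verifier many times. Naive independent repetition uses too much randomness, so we sample tuples of random strings from a disperser, equivalently from an extractor with near-optimal parameters. This is Zuckerman's key contribution. It keeps the graph size polynomial while the ratio of soundness to completeness drops to $n^{-(1-\varepsilon)}$.

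Concretely, we choose the number of repetitions $t=O(\log n)$ and a disperser with seed length $O(\log n)$. In the NO case, a clique in the product corresponds to a set of random strings where the verifier accepts with high probability. The disperser property bounds the size of such a set by roughly $|V|^{\bar f/(1+\bar f)+\delta}$. Since $\bar f\to 0$, this bound is below $|V|^{\varepsilon}$.

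\textbf{Step 3: conclusion.} The YES case gives $\chi\le |V|^{\varepsilon}$ and the NO case gives $\chi\ge |V|^{1-\varepsilon}$. Hence a polynomial-time $O(n^{1-2\varepsilon})$-approximation would decide the NP-complete language. Rescaling $\varepsilon$ gives the theorem.

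\textbf{Main obstacle.} The main obstacle is Step 2: constructing deterministic dispersers (extractors) with seed length $O(\log n)$ and entropy loss small enough that the exponent reaches $1-\varepsilon$ rather than some fixed constant. For this we would invoke known explicit extractor constructions (Zuckerman's, or later Guruswami--Umans--Vadhan), and verify that the sampler's hitting property transfers to the FGLSS graph's independence number.
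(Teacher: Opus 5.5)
The paper gives no proof of this statement; it imports it from Zuckerman as a black box, so a citation is all it needs. Your roadmap is the right one: H{\aa}stad's PCP with amortized free bits tending to $0$, the Feige--Kilian covering framework, and Zuckerman's disperser-based derandomized product. As a sketch, however, it has a hole exactly where chromatic number differs from clique.

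Step~2 only controls the NO case. For colouring, the delicate side is the YES case: you must show that the amplified graph still has small chromatic number, and Step~3 only asserts $\chi\le|V|^{\varepsilon}$. The covering of $V(\overline H)$ by few consistent classes does not automatically carry over to sampled $t$-tuples, since different coordinates of one tuple may need different proofs. Feige--Kilian and Zuckerman handle this with the fractional chromatic number. Draw an independent set for each coordinate independently from an optimal fractional colouring of the base graph. This covers every sampled tuple with probability at least $\chi_f(\overline H)^{-t}$, so the product has $\chi_f\le\chi_f(\overline H)^{t}\approx 2^{tf}$. Lov\'asz's bound $\chi\le(1+\ln\alpha)\chi_f$ then converts this back to $\chi$. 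The factor $|V|^{\bar f/(1+\bar f)}$ in your Step~2 is, up to lower-order terms, this YES-case quantity; the NO-case bound is just $2^{k}$, where $k$ is the disperser's entropy threshold. Also, the base covering property does not come from generically randomizing over shifted or permuted proofs (permuted proofs need not even be accepted). It is a zero-knowledge-type property of the specific long-code PCP, which Feige--Kilian verify.

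This also moves your ``main obstacle''. Let $R$ be the number of random bits fed to the sampler, so $|V|\approx 2^{R}$. The YES-case bound $2^{tf}$ is useful only if $tf\le\varepsilon R$. So the number of samples $t$, which is the disperser's degree, must be \emph{linear} in the source length, i.e.\ the seed length must be $\log_2 R+O(1)$. The disperser lower bound $t\gtrsim(R-k)/(f+g)$ already forces $tf\gtrsim R\,\bar f/(1+\bar f)$, so only the constant factor has any slack, and that slack is absorbed by taking $\bar f$ small. Your parameters are inconsistent on this point: $t=O(\log n)$ samples correspond to a seed of $O(\log\log n)$ bits, not $O(\log n)$. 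Read literally, a seed of $O(\log n)$ bits in the graph size gives polynomially many samples, and $2^{tf}$ becomes astronomically larger than $|V|$. Even a seed of $c\log R$ bits with $c>1$ (degree $R^{c}$) already makes $2^{tf}$ exceed $|V|$, and the reduction proves nothing. Hence generic $O(\log)$-seed extractors, such as the Guruswami--Umans--Vadhan constructions you offer as an alternative, are not interchangeable here. Dispersers of degree $O(R)$ for constant entropy rate are precisely Zuckerman's new ingredient, hence ``linear degree extractors'' in his title. That, rather than small entropy loss, is what your sketch would have to supply.
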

	
	\begin{theorem}\label{inapproximation}
		For any $\varepsilon>0$, the PCF chromatic number of a graph cannot be approximated within $O(n^{1-\varepsilon})$ unless P=NP.
	\end{theorem}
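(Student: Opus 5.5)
We will give an approximation-preserving reduction from the chromatic number (Theorem~\ref{chromaticnumber}) to the PCF chromatic number. The construction is meant to satisfy $\chi(G)\le \chi_{pcf}(G')\le \chi(G)+1$ with $|V(G')|=O(n)$. Given a graph $G$ on $n$ vertices, we build $G'$ by adding a universal vertex $u$ adjacent to every vertex of $G$, and attaching a pendant vertex $u'$ to $u$. (If more control is needed, we may also attach a pendant vertex to every vertex of $G$.) Then $|V(G')|=n+2$, or at most $2n+2$ with the extra pendants. The added universal vertex also suggests why the paper remarks that the reduction yields dually chordal graphs: a graph with a universal vertex is dually chordal.

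The key steps, in order:
\begin{enumerate}
\item \emph{Lower bound.} $G$ is a subgraph of $G'$ and any PCF coloring is proper, so $\chi_{pcf}(G')\ge \chi(G)$.
\item \emph{Upper bound.} Take an optimal proper coloring $c$ of $G$ with colors $[\chi(G)]$. Give $u$ a new color $\chi(G)+1$, and give $u'$ any color from $[\chi(G)]$.
\begin{itemize}
\item Every vertex $v\in V(G)$ has $u$ as its only neighbor colored $\chi(G)+1$, so that color is unique in $N(v)$.
\item The pendant $u'$ has a single neighbor, so it is satisfied.
\item The vertex $u$ sees $u'$, so we must make some color appear exactly once in $N(u)=V(G)\cup\{u'\}$. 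To ensure this, we choose the color of $u'$ to be a color class of $c$ that we make empty. Concretely, we use $\chi(G)+2$ colors: $u$ gets $\chi(G)+1$ and $u'$ gets $\chi(G)+2$. Then $\chi(G)+2$ appears exactly once in $N(u)$, and properness holds.
\end{itemize}
This gives $\chi_{pcf}(G')\le \chi(G)+2$. (If we drop $u'$, the vertex $u$ still needs a unique color among $V(G)$, so keeping $u'$ with a fresh color is the simplest fix.)
\item \emph{Transfer of the ratio.} Suppose a polynomial algorithm approximated $\chi_{pcf}$ within $O(N^{1-\varepsilon})$ on $N$-vertex graphs. Run it on $G'$ with $N\le 2n+2$. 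It returns a value $k$ with $\chi(G)\le \chi_{pcf}(G')\le k\le O(N^{1-\varepsilon})\chi_{pcf}(G')\le O(n^{1-\varepsilon})(\chi(G)+2)$. This is $O(n^{1-\varepsilon})\chi(G)$ since $\chi(G)\ge 1$. The algorithm also returns an actual coloring, and restricting it to $V(G)$ gives a proper coloring of $G$ with at most $k$ colors. This contradicts Theorem~\ref{chromaticnumber}.
\end{enumerate}

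The main obstacle is making sure the additive $+2$ and the blow-up in vertex count do not spoil the ratio. The vertex count grows only linearly, so $N^{1-\varepsilon}=\Theta(n^{1-\varepsilon})$. The additive constant is absorbed because $\chi(G)+2\le 3\chi(G)$. We will also double-check that the gadget vertices $u$ and $u'$ always have a uniquely appearing color in every case, including when $G$ has isolated vertices (each such vertex then has $u$ as its sole neighbor, which is fine).
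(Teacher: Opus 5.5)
Your proof is correct. It uses the paper's overall strategy: pad $G$ with a universal-vertex gadget so the conflict-free condition becomes easy, then absorb the additive constant via $\chi(G)+2\le 3\chi(G)$. The gadget and the strength of the resulting relation differ, though.

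The paper adds two adjacent universal vertices $x,y$. Both see everything, so their colors are forced to be private. This gives the exact equivalence ``$G$ is $k$-colorable iff $G'$ is PCF $(k+2)$-colorable'', i.e.\ $\chi_{pcf}(G')=\chi(G)+2$.

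You use one universal vertex $u$ plus a pendant $u'$, and only prove the sandwich $\chi(G)\le\chi_{pcf}(G')\le\chi(G)+2$. For your gadget the true value really is only pinned to $\{\chi(G)+1,\chi(G)+2\}$: $G=K_2$ gives $\chi(G)+1$, while $G=C_4$ gives $\chi(G)+2$.

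For the inapproximability statement the sandwich is all that is needed. Its lower bound is just the observation that PCF colorings are proper. Your transfer step is also explicit about direction: it starts from the algorithm's output on $G'$ and restricts to $V(G)$. The paper's displayed ratio computation is phrased starting from a coloring of $G$ instead.

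What the paper's exact equivalence buys in addition is the fixed-$k$ corollary. It maps $3$-\textsc{colorability} to PCF $5$-\textsc{colorability} on dually chordal graphs, which your sandwich does not yield directly. Your $G'$ does have the universal vertex $u$, so it is dually chordal, and your reduction gives the inapproximability on that class as well.
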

	\begin{proof}
		Let $G$ be a graph with the vertex set $V(G) = \{v_1, v_2, \ldots, v_n\}$ and the edge set $E(G) = \{e_1, e_2, \ldots, e_m\}$. We construct a graph $G'$ as shown in Figure~\ref{Dually_chordal_npc} from the graph $G$ in polynomial time using the following steps.
		\begin{enumerate}
			\item Add two vertices $x$ and $y$ and make them adjacent to every vertex of $G$.
			\item Add an edge between $x$ and $y$. 
		\end{enumerate}
		Therefore, $G'=(V(G'), E(G'))$ with $V(G') = V \cup \{x, y\}$ and $E(G') = E_A \cup E_X \cup E_Y \cup \{xy\}$, where $E_A = \{v_i v_j\colon~ v_i v_j \in E(G)\}$,  $E_B = \{x v_i:~ i \in [n]\}$, and $E_Y = \{y v_i:~ i \in [n]\}$.
		
		\begin{figure}[h]
			\begin{center}
				\includegraphics[width=0.5\linewidth]{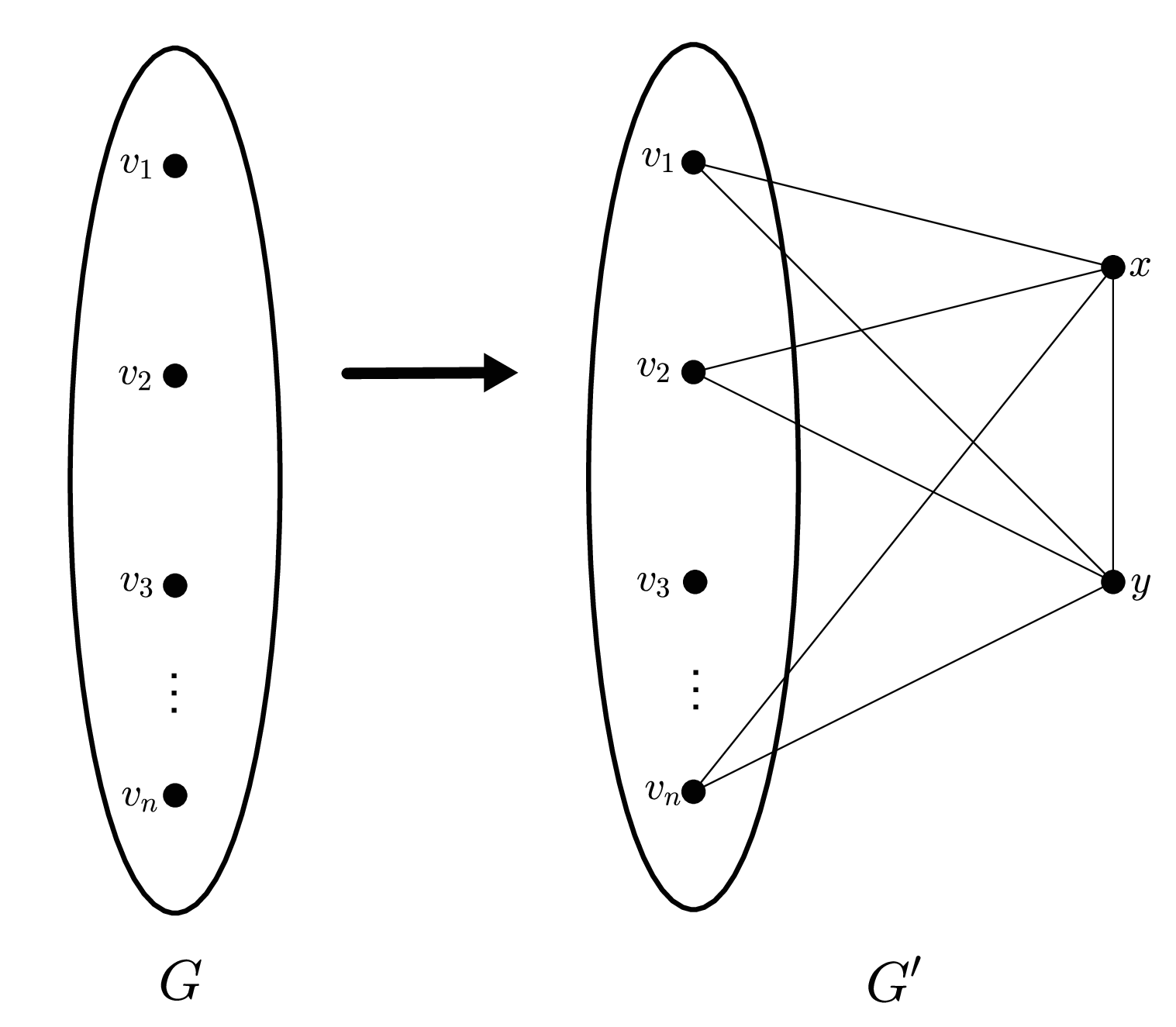}
				\caption{Construction of the graph $G'$ from the graph $G$.}
				\label{Dually_chordal_npc}
			\end{center}	
		\end{figure}
		
		\begin{claim} \label{PCF_dually}
			$G$ is $k$-colorable if and only if $G'$ is PCF $(k+2)$-colorable.
		\end{claim}
		\begin{proof}
			Let $c$ be a proper $k$-coloring of the graph $G$. We extend this coloring to PCF $(k+2)$-coloring of the graph $G'$. Define a coloring $c_* : V(G') \to [k+2]$ for the graph $G'$ such that $c_*(z) = c(z) \;\forall z \in V$, $c^*(x) = k+1$, and $c^*(y) = k+2$. Now, we show that the coloring $c_*$ is a PCF coloring. Since $c$ is a proper coloring and $x$ and $y$ are assigned colors $k+1$ and $k+2$, respectively, $c_*$ is a proper $(k+2)$-coloring. Now, we show that each vertex $v \in V(G')$ has a unique color in $N(v)$. Every vertex $v\in V(G')\setminus\{y\}$ has a unique color $k+2$ in $N(v)$. The vertex $y$ has a unique color $k+1$ in $N(y)$.
			Thus, every vertex of $G'$ has a unique color in its neighborhood. Therefore, $G'$ is PCF $(k+2)$-colorable.
			
			\medskip
			
			Now, let $c_1$ be a PCF $(k+2)$-coloring of the graph $G'$. We show that the graph $G$ is $k$ colorable. We restrict the coloring $c_1$ to the vertices of the set $V$, denoted  by $c_2$. Since $x$ and $y$ are adjacent to all other vertices of $G'$, $c_1(x)$ and $c_1(y)$ does not appear on any other vertex of the graph and $c_1(x) \neq c_1(y)$. Therefore, $c_2$ is a proper coloring of $G$ with $k$ colors. 
		\end{proof}
		
		Given any proper $k$-coloring $f$ of $G$, one can find a PCF coloring $f_{pcf}$ of $G'$ with $k+2$ colors. 
		Now, 
		$$ \frac{|f_{pcf}(V(G'))|}{\chi_{pcf}(G')} =  \frac{|f(V(G))|+2}{\chi(G)+2} \geq \frac{1}{3}\cdot\frac{|f(V(G))|}{\chi(G)}.$$
		Now, assume that there exists a polynomial time algorithm $\mathcal{A}$ that approximates PCF \textsc{colorability} within a factor $N^{1-\varepsilon}$ for a graph with $N$ vertices. Clearly, $N = n+2$. Now, 
		$$ \frac{|f(V(G))|}{\chi(G)} \leq 3\cdot (n+2)^{1-\varepsilon}.$$
		For $n\geq2$, $(n+2)^{1-\varepsilon}\leq 2^{1-\varepsilon}\cdot n^{1-\varepsilon}$.
		So we have,
		$$ \frac{|f(V(G))|}{\chi(G)} \leq 3\cdot2^{1-\varepsilon}\cdot n^{1-\varepsilon} = O(n^{1-\varepsilon}).$$
		This is a contradiction to Theorem~\ref{chromaticnumber}. Therefore, Theorem~\ref{inapproximation} follows. 
	\end{proof}
	
	\medskip
	
	A graph $G$ is called \emph{dually chordal} if it admits a maximum neighborhood ordering, that is, an ordering $(v_1,\dots,v_n)$ of the vertices such that for each $i<n$, $v_i$ has a neighbor $u_i$ in $\{v_{i+1},\dots,v_n\}$ with $N[v_i]\subseteq N[u_i]$ in the subgraph induced by $\{v_i,\dots,v_n\}$. 
	
	\medskip
	
	Now, we show that the graph $G'$ constructed in the proof of Theorem~\ref{inapproximation} is a dually chordal graph. Let $\alpha = (u_1, u_2, \ldots, u_{n-1} = x, u_n = y)$ be an ordering of the vertices of $G'$. 
	Since for every vertex $u_i \in \alpha$ with $i < n$, $u_n$ is a neighbor of $u_i$ in $G_i$ with $N[u_i]\subseteq N[u_n]$, where $G_i = G[V(G)\setminus\{v_1, \ldots, v_{i-1}\}]$. Thus, $G'$ admits a maximum neighborhood ordering. So $G'$ is a dually chordal graph. Therefore, we have following corollary due to Claim~\ref{PCF_dually}.
	\begin{coro}
		PCF $k$-\textsc{colorability} is NP-complete for dually chordal graphs for $k \geq 5$.
	\end{coro}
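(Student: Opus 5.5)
The plan is to reduce from graph $k$-\textsc{colorability} for $k\geq 3$, which is NP-complete, to PCF $(k+2)$-\textsc{colorability} on dually chordal graphs. This covers every target value $k+2\geq 5$. The reduction is exactly the construction $G\mapsto G'$ from the proof of Theorem~\ref{inapproximation}: add two adjacent universal vertices $x,y$. It is clearly polynomial, since it adds two vertices and $2n+1$ edges.

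Membership in NP is immediate: given a coloring $f:V(G')\to[k+2]$, we check properness edge by edge. For each vertex $v$, we then tally the colors in $N(v)$ and check that some color occurs exactly once. This takes polynomial time.

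For hardness, Claim~\ref{PCF_dually} already gives that $G$ is $k$-colorable if and only if $G'$ is PCF $(k+2)$-colorable. Both directions are short. In the forward direction, we color $x,y$ with the two fresh colors $k+1$ and $k+2$. Then color $k+2$ appears exactly once in $N(v)$ for every $v\neq y$, and color $k+1$ appears exactly once in $N(y)$. In the reverse direction, $x$ and $y$ are adjacent to everything and to each other. So their two colors are distinct and occur nowhere else, which leaves at most $k$ colors on $V(G)$. Since the restriction to $V(G)$ is proper, $G$ is $k$-colorable. One small point is that $G$ may have isolated vertices. They are not isolated in $G'$ because of $x$ and $y$, so the argument is unaffected.

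It remains to check that $G'$ is dually chordal, which is the discussion just before the corollary. We use the ordering with $x,y$ last. For each $i<n$, the vertex $y$ lies later in the ordering, is adjacent to $u_i$, and is universal in every induced subgraph $G'[\{u_i,\dots,u_n\}]$ that contains it. Hence $N[u_i]\subseteq N[y]$ there, and for $x$ itself $N[x]=N[y]$ in $\{x,y\}$. So the ordering is a maximum neighborhood ordering. The only step needing any care is this verification for the last vertices, and it is trivial. Combining the three parts proves the corollary for every $k\geq 5$.
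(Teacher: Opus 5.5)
Your proposal is correct and follows the paper's argument. It uses the same reduction from $k$-\textsc{colorability} (adding two adjacent universal vertices $x,y$), the equivalence of Claim~\ref{PCF_dually}, and the maximum neighborhood ordering ending in $x,y$ with $y$ universal to show $G'$ is dually chordal; your explicit NP-membership check and the remark on isolated vertices are details the paper leaves implicit.
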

	
	\section{Polynomial-time algorithms}
	
	In this section, we provide linear-time algorithms for block graphs, proper interval graphs, chain graphs, and pseudo-split graphs. We prove that $\chi_{pcf}(G)\leq\omega(G)+1$ block graphs, proper interval graphs, chain graphs, and pseudo-split graphs (except $C_5$),  and we characterize all graphs for which equality holds.
	
	\medskip
	
	We define some terminologies that will be used in this section.
	A \emph{partial PCF-coloring} of a graph $G$ is defined as a function $f$ that assigns colors to a subset of vertices (not necessarily all) of $G$, say $V'$, such that
	\begin{enumerate}
		\item $f(u)\neq f(v)$ whenever $uv\in E(G)$ and $u, v\in V'$.
		\item Every vertex $v\in V$ with $N(v)\subseteq V'$ has a color that appears exactly once in $N(v)$.
	\end{enumerate}
	
	For a partial coloring $f$ and $M\subseteq V(G)$, we define $f(M)$ to be the set of colors used on the colored vertices of $M$ under $f$.
	
	\subsection{Block graphs}
	
	A vertex $v$ of $G$ is a cut vertex if $G[V\setminus\{v\}]$ is disconnected. A \emph{block} of $G$ is a maximal connected subgraph that has no cut vertex. A graph $G$ is a \emph{block graph} if every block of $G$ is a clique.
	
	\medskip
	
	Note that the block graphs are perfect; thus $\chi(G) = \omega(G)$. Since any PCF coloring of a block graph $G$ requires at least $\chi(G) = \omega(G)$ colors, we first identify block graphs, referred to as \emph{special block graphs} (see Definition \ref{special_block}), for which $\chi_{pcf}(G) \geq \omega(G)+1$. We then present a linear-time algorithm that computes a PCF coloring using $\omega(G)+1$ colors for every special block graph, and $\omega(G)$ colors for every non-special block graph. This establishes a tight upper bound on $\chi_{pcf}(G)$ for block graphs and provides the characterization that $\chi_{pcf}(G) = \omega+1$ if and only if $G$ is a special block graph.
	
	\medskip
	
	Let $G$ be a block graph. Let $\mathcal{B} = \{B_1, B_2, \ldots, B_k\}$ be the set of all the blocks of $G$ and $\mathcal{C} = \{c_1, c_2, \ldots, c_l\}$ be the set of all the cut vertices of $G$ for some integers $k, l\geq 1$. 
	We represent $G$ by a tree-like structure known as \emph{cut-tree} (see \cite{AHU_textbook}), denoted by $T(G)$, where $V(T(G)) = \{B_1, B_2, \ldots, B_k, c_1, c_2, \ldots, \\c_l\}$ and $E(T(G)) = \{(B_i, c_j) :~ c_j \in V(B_i), i \in [k], j \in [l]\}$. The vertices $\{B_1, B_2, \ldots, B_k\}$ are the block vertices of $T(G)$ and the vertices $\{c_1, c_2, \ldots, c_l\}$ are the cut vertices of $T(G)$. The cut-tree $T(G)$ can be obtained in linear-time by a depth-first search of $G$. 
	\begin{figure}[h]
		\begin{center}
			\includegraphics[width=0.9\linewidth]{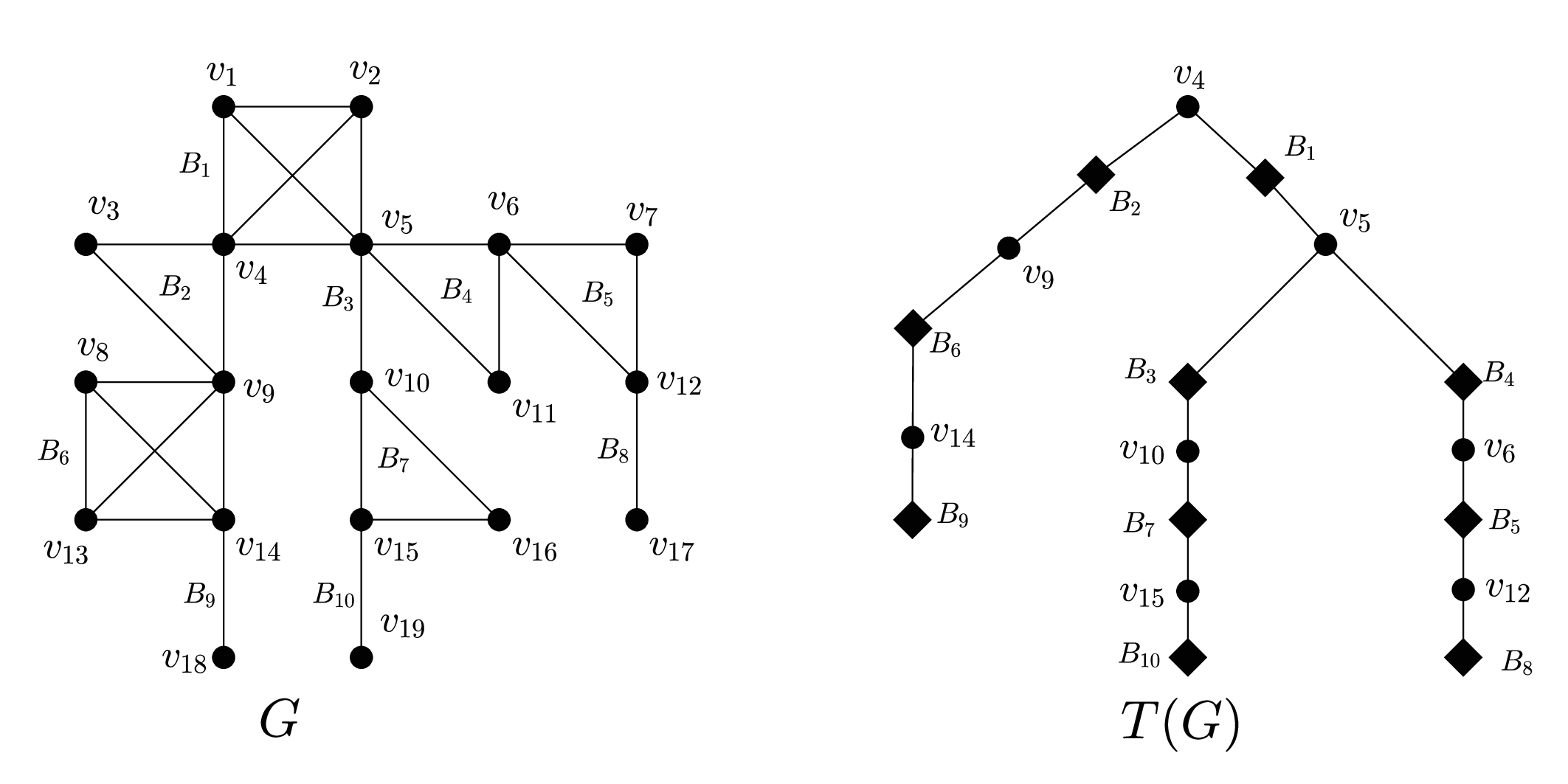}
			\caption{A block graph $G$ and its corresponding cut-tree $T(G)$.}
			\label{cut-tree}
		\end{center}	
	\end{figure}
	
	\begin{defi}[\textsf{Special block graph}]\label{special_block}
		A block graph $G$ is \emph{special} if there exists a cut vertex $v$ such that at least two blocks containing $v$ are maximum cliques of $G$. 
	\end{defi}
	We call a block graph $G$ is said to be \emph{non-special} if $G$ is not special.
	\begin{obs}\label{complete_block_graph}
		If $G$ is a complete graph, then $\chi_{pcf}(G) = \omega(G)$
	\end{obs}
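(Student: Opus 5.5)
Write $G = K_n$. Then $\omega(G) = n$, so the claim is $\chi_{pcf}(K_n) = n$. I will prove the lower bound and the upper bound separately, and handle the trivial case $n=1$ up front.

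For $n=1$ the graph is a single isolated vertex. The conflict-free condition is vacuous, so one color suffices and $\chi_{pcf}(K_1)=1=\omega(K_1)$. From now on assume $n\ge 2$.

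The lower bound is immediate. Every PCF coloring is in particular a proper coloring, so $\chi_{pcf}(G)\ge \chi(G)=\omega(G)=n$.

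For the upper bound I use the coloring $f(v_i)=i$ for $i\in[n]$, where $V(G)=\{v_1,\dots,v_n\}$.
\begin{itemize}
\item $f$ is proper, since all vertices receive distinct colors.
\item Fix a vertex $v_i$. Its neighborhood is $N(v_i)=V(G)\setminus\{v_i\}$, which is nonempty because $n\ge 2$.
\item Each vertex of $N(v_i)$ carries a different color. Hence every color appearing in $N(v_i)$ appears there exactly once, and any of them serves as the unique color for $v_i$.
\end{itemize}
Thus $f$ is a PCF $n$-coloring, and $\chi_{pcf}(G)\le n$.

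Combining the two bounds gives $\chi_{pcf}(G)=\omega(G)$. There is no real obstacle in this argument; the only point needing care is the degenerate case $K_1$, where the neighborhood is empty and the conflict-free requirement does not apply.
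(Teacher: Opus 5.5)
Your proof is correct and is essentially the argument the paper has in mind; the paper states this observation without proof, but its reasoning elsewhere (a clique neighborhood is rainbow under any proper coloring) is exactly your upper-bound step, combined with the trivial lower bound $\chi_{pcf}(G)\ge\chi(G)=\omega(G)$. Your separate handling of $K_1$, where the conflict-free condition is vacuous, is a small detail the paper leaves implicit.
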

	
	From now on, we assume that $G$ is a block graph that is not complete.
	
	\begin{obs}\label{block_vertex}
		Every vertex $v\in V(G)\setminus \mathcal{C}$ has a unique color in $N(v)$ under any proper coloring of $G$.
	\end{obs}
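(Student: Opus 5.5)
The observation follows from the structure of block graphs. A non-cut vertex lies in exactly one block, and that block is a clique, so its neighbourhood is a clique. A proper coloring makes every color on a clique appear exactly once, so the only thing left to check is that the neighbourhood is nonempty.

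\emph{Step 1: the neighbourhood lies in one block.} Let $v\in V(G)\setminus\mathcal{C}$. Suppose $v$ belonged to two distinct blocks $B_i$ and $B_j$. Two distinct blocks share at most one vertex, and any shared vertex is a cut vertex of $G$. This is the standard fact underlying the cut-tree $T(G)$: its edges join blocks only to the cut vertices they contain. Hence $v\notin\mathcal{C}$ lies in a unique block $B$.

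Every edge of $G$ lies in some block. So each neighbour $u$ of $v$ lies in a block containing both $u$ and $v$, and by uniqueness that block is $B$. Therefore $N(v)\subseteq V(B)\setminus\{v\}$. Conversely, $B$ is a clique, so $N(v)=V(B)\setminus\{v\}$.

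\emph{Step 2: the neighbourhood is nonempty.} Since $G$ is connected and, by the standing assumption after Observation~\ref{complete_block_graph}, not complete, $G$ has at least two vertices. Thus $v$ is non-isolated and $B$ contains at least one vertex other than $v$, so $N(v)\neq\emptyset$.

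\emph{Step 3: a unique color.} By Step~1, $N(v)$ is a clique. Under any proper coloring $f$ of $G$, the vertices of $N(v)$ receive pairwise distinct colors. Hence every color in $f(N(v))$ appears exactly once in $N(v)$. By Step~2 at least one such color exists, which is the required unique color.

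The only step needing care is Step~1, the claim that a non-cut vertex lies in a single block. It follows from the standard block--cut-vertex structure already invoked in the definition of $T(G)$. Everything else is immediate.
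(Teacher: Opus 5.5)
Your proof is correct and takes the same approach as the paper. The paper simply asserts that $N(v)$ is a clique for a non-cut vertex $v$, so that all colors on $N(v)$ are distinct under any proper coloring. Your Steps~1 and~2 supply the justification the paper leaves implicit: that $v$ lies in a unique block, and that $N(v)$ is nonempty.
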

	\begin{proof}
		Let $v\in V(G)\setminus \mathcal{C}$ be a vertex of $G$. Let $c$ be a proper coloring of the graph $G$. Since $N(v)$ is a clique, every vertex in $N(v)$ receives a distinct color under $c$. Thus, $v$ has a unique color in $N(v)$ under any proper coloring.
	\end{proof}
	
	\begin{lemma}\label{spcl_block_bound}
		If $G$ is a special block graph, then $\chi_{pcf}(G) \geq\omega(G) + 1$.
	\end{lemma}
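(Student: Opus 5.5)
Suppose for contradiction that $G$ admits a PCF coloring $f$ using only $\omega=\omega(G)$ colors. Let $v$ be the cut vertex from Definition~\ref{special_block}, and let $B_1$ and $B_2$ be two distinct blocks containing $v$, each a maximum clique of size $\omega$. The plan is to show that every color in $N(v)$ appears at least twice, which contradicts the PCF condition at $v$ (note that $v$ is non-isolated).

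The key step uses the clique structure. Since $B_1$ is a clique on $\omega$ vertices and $f$ is proper with only $\omega$ colors available, $f$ restricted to $B_1$ is a bijection onto $[\omega]$. Hence $f(B_1\setminus\{v\})=[\omega]\setminus\{f(v)\}$, and by the same argument $f(B_2\setminus\{v\})=[\omega]\setminus\{f(v)\}$. The sets $B_1\setminus\{v\}$ and $B_2\setminus\{v\}$ are disjoint subsets of $N(v)$, because two distinct blocks share at most one vertex and that vertex here is $v$.

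Now take any color $\alpha$ appearing in $N(v)$. Since $f$ is proper, $\alpha\neq f(v)$. Therefore $\alpha$ appears on some vertex of $B_1\setminus\{v\}$ and on a different vertex of $B_2\setminus\{v\}$. So $\alpha$ appears at least twice in $N(v)$, and no color is unique in $N(v)$. This contradicts the assumption that $f$ is a PCF coloring.

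Finally, $\chi_{pcf}(G)\ge\chi(G)=\omega(G)$ holds trivially, since a PCF coloring is in particular proper. Combined with the contradiction above, this gives $\chi_{pcf}(G)\ge\omega(G)+1$. The argument has no real obstacle; the only points to state explicitly are that $\omega$ colors force a bijection on each maximum clique, and that distinct blocks meet only in $v$.
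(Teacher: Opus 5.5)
Your proposal is correct and follows essentially the same argument as the paper: assume a PCF $\omega(G)$-coloring, observe that $B_1\setminus\{v\}$ and $B_2\setminus\{v\}$ each carry every color other than $f(v)$, and conclude that no color is unique in $N(v)$. You add two points the paper leaves implicit: that the two blocks meet only in $v$, and that every color in $N(v)$, including those on neighbors outside $B_1\cup B_2$, lies in $[\omega]\setminus\{f(v)\}$.
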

	
	\begin{proof}
		Let $G$ be a special block graph. Let $\omega(G) = k$. 
		By Definition \ref{special_block}, there exists a cut vertex $v$ such that $v$ is contained in at least two blocks, say $B_1$ and $B_2$, such that $B_1$ and $B_2$ are maximum cliques. For the sake of contradiction, assume that there exists a PCF $k$-coloring $c$ of $G$ with colors $\{1, 2, \ldots, k\}$.
		Without loss of generality, assume that $v$ is assigned color $1$ under $c$. Since $B_1$ is a maximum clique, at least $k-1$ colors different from $1$ must be assigned to the vertices of $B_1$ except $v$. Similarly, at least $k-1$ colors distinct from the color $1$ must be assigned to the vertices of $B_2$ except $v$. Since $v$ is adjacent to every vertex of $B_1$ and $B_2$, every color in the set $\{2, \ldots, k\}$ is used at least twice in $N(v)$. Therefore, $\chi_{pcf}(G) \ge \omega(G) + 1$.
	\end{proof}
	
	Let $T$ be a rooted tree. The \emph{parent} of any vertex $v\in V(T)$ is the immediate predecessor of $v$. Note that every vertex has a unique parent except the root vertex. For a vertex $v\in V(T)$ such that $v$ is not the root of $T$, we say $F(v) = u$ if $u$ is the parent of $v$.
	Let $v$ be a cut vertex of $T(G)$. Let $\mathcal{B}(v)$ be the set of all the blocks containing $v$ and let $\displaystyle V(\mathcal{B}(v)) = \bigcup_{B\in \mathcal{B}(v)}V(B)$.
	
	\begin{defi}[\textsf{Property 1}]\label{property_one}
		A vertex $v$ is said to satisfy \textsf{Property 1} if there exists $B\in \mathcal{B}(v)$ such that $|V(B)| > |V(B')|$ for each $B'\in \mathcal{B}(v)\setminus\{B\}$. 
	\end{defi}
	
	We present our algorithm to compute an optimal PCF coloring of a given block graph $G$. By Observation~\ref{block_vertex}, every vertex $v\in V(G)\setminus \mathcal{C}$ has a color that appears exactly once in $N(v)$ under any proper coloring. Therefore, we color each vertex $u$ of $G$ by avoiding the colors assigned to its neighbors. Moreover, we ensure the existence of a unique color in neighborhood of a cut vertex $v$ if $N(v)\setminus\{u\}$ is already colored.
	
	\medskip
	
	To have a better understanding of the algorithm, we provide a description of the algorithm. Let $f$ be a partial PCF coloring of $G$. 
	Let $f_1:S_1 \to [k]$ and $f_2:S_2 \to [k]$ be such that $S_1, S_2\subseteq V(G)$ and $S_1\cap S_2=\emptyset$ be two partial colorings.
	We define the union of $f_1$ and $f_2$ as $f:S_1\cup S_2\to[k]$ and denote it by $f = f_1\cup f_2$.
	
	\medskip
	
	Let $T(G)$ be a cut-tree of a non-complete block graph $G$ rooted at a cut vertex $c_1$. We start with a partial PCF coloring $f$ of $G$ such that no vertex of $G$ is colored initially.
	We initialize a set of colors $S = \{1\}$ and assign color $1$ to $c_1$. We color blocks of $G$ in breadth first search order of $T(G)$. While coloring each block, we ensure that $f$ remains a partial PCF coloring. 
	Let $v\in \mathcal{C}$ and let $B$ be an uncolored child of $v$ such that every block that precedes $B$ in breadth first search order of $T(G)$ is colored. 
	
	\begin{itemize}
		\item If $|V(B)|\geq |S|$, then we need at least $|V(B)|-|S|$ new colors. Thus, we add $|V(B)|-|S|$ new colors to the set $S$.
		\item We define a set of colors $S_B = f(V(\mathcal{B}(F(B))))\setminus f(F(B))$.
		The role of $S_B$ is to minimize the number of colors used in $N(v)$.
		New colors are used on the vertices of block $B$ only when $S_B$ is exhausted. This ensures that there are at most $\omega+1$ number of colors used in $N(v)$.
		\item Since the parent of each block $B$ is already colored, we need $|V(B)|-1$ colors distinct from $F(B)$ to color all the vertices of $B$. Therefore, we assign a list of $|V(B)|-1$ colors to block $B$ and then color each vertex of $B$ using the procedure \textsc{ColorBlock}($V(B)\setminus F(B), L(B)$).
		\item If $v$ satisfies \textsf{Property 1}, then there exists a block $B'$ containing $v$ such that $|V(B')|\geq |V(B'')|$ for each block $B''\neq B'$ containing $F(B)$.
		Since we color the vertices of each block $B\in \mathcal{B}(v)$ using the colors in $S_B$ first, $v$ has a color that appears exactly once in $N(v)$ under any proper coloring.
		\item Let $B\in \mathcal{B}(v)$ be such that each block $B'\in V(\mathcal{B}(v))\setminus\{B\}$ is colored.
		\item If $v$ does not satisfy \textsf{Property 1}, we color $u$ avoiding all the colors that have appeared on the colored vertices of $N(v)$.
		\item If $G$ is a special block graph, then we avoid at most $\omega(G)$ colors while coloring $u$. Thus, $\omega(G)+1$ colors are sufficient.
		\item If $G$ is not a special block graph, then we avoid at most $\omega(G)-1$ colors while coloring $u$. Thus, $\omega(G)$ colors are sufficient.
		\item Since $f$ remains a partial PCF coloring at every step, $f$ is a PCF coloring at the end of the algorithm.
	\end{itemize}
	
	\begin{procedure}[H]
		\relsize{-1}{ 
			
			\KwIn{A set of vertices $V(B)\setminus F(B)$, a list of colors $L(B)$ of size $|V(B)|-1$}
			\KwOut{A coloring of block vertices of $B$}
			
			\While{$($there exists an uncolored vertex $v \in V(B))$}{
				Let $c \in L(B)$\;
				$f(v) \gets c$\;
				$L(B) \gets L(B) \setminus \{c\}$\;
			}
			\Return{$f$}
			\caption{() \textsc{ColorBlock}($V(B)\setminus F(B), L(B))$}}
	\end{procedure}
	
	\begin{algorithm}[H]\label{block_algo}
		\relsize{-1}{		\DontPrintSemicolon
			\KwIn{A non-complete block graph $G = (V, E)$.}
			\KwOut{A PCF coloring of $G$.}
			Let $T(G)$ be a rooted cut-tree of $G$ rooted at a cut vertex $c_1$;\\
			Traverse the tree $T(G)$ in breadth first search order.\\
			Initialize $S = \{1\}$;\\
			$f(c_1)$ = 1;\\
			\While{$($there exists an uncolored block vertex $B$ in $T(G))$}{   
				Let $S_B = f(V(\mathcal{B}(F(B))))\setminus f(F(B))$.\\
				\If{$(|V(B)| > |S|)$}{
					Let $k = |S|$;\\
					$S = S \cup \{k+1, k+2, \ldots, k+|V(B)|-|S|\}$;\\
				}
				Assign color $f(F(B))$ to the vertex $F(B)$ in block $B$;\\
				\If{$((F(B)$ has an uncolored block-child except $B)$ or $(F(B)$ satisfies \textsf{Property 1}$))$}{
					\If{$(|S_B| \geq |V(B)|-1)$}{
						Let $C'\subseteq S_B$ be obtained by selecting first $|V(B)|-1$ smallest colors from $S_B$;\\
						$f$ = $f\;\cup\;$\textsc{ColorBlock}($V(B)\setminus F(B)$, $C'$);\\   
					} \Else{ 
						Let $C'$ be obtained by selecting first $|V(B)|-1-|S_B|$ smallest colors from $S\setminus (S_{B}\cup\{f(F(B))\})$;\\
						$f$ = $f\;\cup\;$\textsc{ColorBlock}($V(B)\setminus F(B)$, $S_B\cup C'$);\\
					}
				}\Else{  
					\If{$(S\setminus (S_{B}\cup\{f(F(B))\}) = \emptyset)$}{
						Let $k = |S|$;\\
						$S = S\cup \{k+1\}$;\\
					}
					Let $C'\subseteq S_B$ be obtained by selecting first $|V(B)|-2$ smallest colors from $S_B$;\\
					Let $c'\in S\setminus (S_{B}\cup\{f(F(B))\})$ be the least indexed color and let $c\in C'$;\\
					$f$ = $f\;\cup\;$\textsc{ColorBlock}($V(B)\setminus F(B)$, $(C'\setminus\{c\})\cup\{c'\})$;\\
				}
				Assign same colors to all the cut children of $B$ as assigned in $B$;\\           
			}	   
			\Return{f}
			\caption{\textsc{PCF-ColoringBlock($G$)}}}
	\end{algorithm}
	
	\begin{table}[ht]
		\centering
		\small
		\setlength{\tabcolsep}{3pt}
		\renewcommand{\arraystretch}{1.15}
		
		\caption{Illustration of \textsc{PCF-ColoringBlock} on the graph $G$ of Fig.~\ref{cut-tree}}
		\label{block_example}
		
		\begin{tabular}{|p{1.9cm}|c|c|p{1.6cm}|p{2.25cm}|p{4.3cm}|p{2.3cm}|}
			\hline
			
			\textbf{Iteration number} &
			\textbf{Block $B$} &
			\textbf{$F(B)$} &
			~~~~\textbf{ $S$} &
			~~~~~~\textbf{$S_B$} &
			\textbf{Condition} &
			\textbf{Colors used on $V(B)$} \\
			\hline

			0 &  &  & \begin{tabular}[t]{@{}l@{}}
				$~~~~\{1\}$\\
			\end{tabular} &  &  & $f(v_4) = 1$\\
			
			\hline
			
			1 & $B_2$ & $v_4$ & \begin{tabular}[t]{@{}l@{}}
				$~\{1,2,3\}$\\
				
			\end{tabular}
			& $S_{B_2} = \emptyset$ & $\exists$ an uncolored child of $v_4$ except $B_2$ and $v_4$ satisfies \textsf{Property 1} & \begin{tabular}[t]{@{}l@{}}
				$f(v_4)=1$\\
				$f(v_3)=2$\\
				$f(v_9)=3$
			\end{tabular}\\
			
			\hline
			
			2 & $B_1$ & $v_4$ & \begin{tabular}[t]{@{}l@{}}
				$\{1,2,3,4\}$\\
				
			\end{tabular}
			& $S_{B_1} = \{2, 3\}$ & $v_4$ satisfies \textsf{Property 1} & \begin{tabular}[t]{@{}l@{}}
				$f(v_4)=1$\\
				$f(v_1)=2$\\
				$f(v_2)=3$\\
				$f(v_5) = 4$\\
			\end{tabular}\\
			
			\hline
			
			3 & $B_6$ & $v_9$ & \begin{tabular}[t]{@{}l@{}}
				$\{1,2,3,4\}$\\
			\end{tabular} & $S_{B_6} = \{1, 2\}$ & $v_9$ satisfies \textsf{Property 1} & \begin{tabular}[t]{@{}l@{}}
				$f(v_9)=3$\\
				$f(v_8)=1$\\
				$f(v_{13})=2$\\
				$f(v_{14}) = 4$\\
			\end{tabular}\\
			
			\hline
			
			4 & $B_3$ & $v_5$ & \begin{tabular}[t]{@{}l@{}}
				$\{1,2,3,4\}$\\
			\end{tabular} & $S_{B_3} = \{1, 2, 3\}$ & $\exists$ an uncolored child of $v_5$ except $B_3$ and $v_5$ satisfies \textsf{Property 1} & \begin{tabular}[t]{@{}l@{}}
				$f(v_5)=4$\\
				$f(v_{10})=1$\\
			\end{tabular}\\
			
			\hline
			
			5 & $B_4$ & $v_5$ & \begin{tabular}[t]{@{}l@{}}
				$\{1,2,3,4\}$\\
			\end{tabular} & $S_{B_4} = \{1, 2, 3\}$ & $v_5$ satisfies \textsf{Property 1} & \begin{tabular}[t]{@{}l@{}}
				$f(v_5)=4$\\
				$f(v_6)=1$\\
				$f(v_{11})=2$
			\end{tabular}\\
			
			\hline
			
			6 & $B_9$ & $v_{14}$ & \begin{tabular}[t]{@{}l@{}}
				$\{1,2,3,4\}$\\
			\end{tabular} & $S_{B_9} = \{1, 2, 3\}$ & $v_{14}$ satisfies \textsf{Property 1} & \begin{tabular}[t]{@{}l@{}}
				$f(v_{14})=4$\\
				$f(v_{18})=1$\\
			\end{tabular}\\
			
			\hline
			
			7 & $B_7$ & $v_{10}$ & \begin{tabular}[t]{@{}l@{}}
				$\{1,2,3,4\}$\\
			\end{tabular} & $S_{B_7} = \{4\}$ & $v_{10}$ satisfies \textsf{Property 1} & \begin{tabular}[t]{@{}l@{}}
				$f(v_{10})=1$\\
				$f(v_{15})=4$\\
				$f(v_{16})=2$
			\end{tabular}\\
			
			\hline
			
			8 & $B_5$ & $v_{6}$ & \begin{tabular}[t]{@{}l@{}}
				$\{1,2,3,4\}$\\
			\end{tabular} & $S_{B_5} = \{2, 4\}$ & $v_6$ does not satisfy \textsf{Property 1} and $\not\exists$ any uncolored child of $v_6$ except $B_5$ & \begin{tabular}[t]{@{}l@{}}
				$f(v_6)=1$\\
				$f(v_7)=2$\\
				$f(v_{12})=3$
			\end{tabular}\\
			
			\hline
			
			9 & $B_{10}$ & $v_{15}$ & \begin{tabular}[t]{@{}l@{}}
				$\{1,2,3,4\}$\\
			\end{tabular} & $S_{B_{10}} = \{1, 2\}$ & $v_{15}$ satisfies \textsf{Property 1} & \begin{tabular}[t]{@{}l@{}}
				$f(v_{15})=4$\\
				$f(v_{19})=1$\\
			\end{tabular}\\
			
			\hline
			
			10 & $B_8$ & $v_{12}$ & \begin{tabular}[t]{@{}l@{}}
				$\{1,2,3,4\}$\\
			\end{tabular} & $S_{B_8} = \{1, 2\}$ & $v_{12}$ satisfies \textsf{Property 1} & \begin{tabular}[t]{@{}l@{}}
				$f(v_{12})=4$\\
				$f(v_{17})=1$\\
			\end{tabular}\\
			
			\hline
			
		\end{tabular}
	\end{table}
	
	\begin{lemma}\label{cut-vertex_unique}
		Let $c$ be a cut vertex of $G$. Then $c$ has a unique color in $N(c)$ under $f$ returned by the algorithm \textsc{PCF-ColoringBlock($G$)}.
	\end{lemma}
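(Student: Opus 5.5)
We argue by analysing the blocks containing $c$ and the order in which the algorithm \textsc{PCF-ColoringBlock($G$)} colors them. Let $\mathcal{B}(c)=\{B_0,B_1,\ldots,B_t\}$, where $B_0$ is the parent block of $c$ in $T(G)$ (absent if $c=c_1$ is the root) and $B_1,\ldots,B_t$ are the children of $c$, listed in the breadth first search order in which they are colored. Then $N(c)=\bigcup_{B\in\mathcal{B}(c)}(V(B)\setminus\{c\})$, and each $V(B)\setminus\{c\}$ is a clique, so it carries pairwise distinct colors. Hence a color appears exactly once in $N(c)$ if and only if it is used in exactly one of the sets $V(B)\setminus\{c\}$. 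The task is to show that the algorithm always leaves at least one color that occurs in exactly one block of $\mathcal{B}(c)$.

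\textbf{Case 1: $c$ satisfies \textsf{Property 1}.} Let $B^*$ be the unique largest block in $\mathcal{B}(c)$. Whenever a child $B_i$ is colored, the first branch of the algorithm applies, so the algorithm first draws colors from $S_{B_i}$, which is the set of colors already present in $V(\mathcal{B}(c))\setminus\{c\}$. New colors are introduced only after $S_{B_i}$ is exhausted. By induction on $i$, the color set $U_i$ used on $N(c)$ after coloring $B_0,\ldots,B_i$ satisfies $|U_i|=\max_{j\le i}(|V(B_j)|-1)$. Indeed, each block either reuses old colors entirely or extends the palette only up to its own size, and the parent block $B_0$ contributes $|V(B_0)|-1$ colors at the start.

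Next, consider the moment the colors of $B^*$ are set. If $B^*$ is a child, it needs strictly more colors than any block colored before it. At least one color $\gamma$ used in $B^*$ is then new, and no later block can use $\gamma$. Later blocks are strictly smaller than $B^*$, so they never exhaust $S_B$, and since we pick the smallest colors first, they avoid the larger-indexed $\gamma$. If instead $B^*=B_0$, the same smallest-first argument applies. Every child is smaller than $B_0$, so each child takes only the $|V(B_i)|-1$ smallest colors of $S_{B_i}$, and therefore the largest color of $B_0$ (with respect to the fixed order) is never reused. In both subcases $\gamma$ is unique in $N(c)$. \emph{I expect this to be the main obstacle:} one must fix a consistent ordering of colors so that the smallest-first rule provably avoids $\gamma$, and also check that colors first introduced in $B_0$ rank above those reused by the children.

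\textbf{Case 2: $c$ does not satisfy \textsf{Property 1}.} Then the last child $B_t$ falls into the else branch, because $c$ has no other uncolored child and \textsf{Property 1} fails. In that branch the algorithm puts onto $B_t$ a color $c'\in S\setminus(S_{B_t}\cup\{f(c)\})$, creating a new color in $S$ if none is available. Since $S_{B_t}$ contains every color already used in $N(c)$, the color $c'$ does not appear in any other block of $\mathcal{B}(c)$. It appears exactly once in $V(B_t)$, so $c'$ is unique in $N(c)$. It remains to check that $B_t$ has room for $c'$, that is, $|V(B_t)|\ge 2$; this holds because every block of a connected non-complete graph has at least two vertices.

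Finally, the coloring of the parent block $B_0$ of $c$ happens before any child of $c$ is colored, since $c$ inherits its color from $B_0$ (``assign same colors to all the cut children''). Therefore the sets $S_B$ computed for the children of $c$ already include the colors of $B_0$, which justifies the induction in Case 1 and the freshness of $c'$ in Case 2.
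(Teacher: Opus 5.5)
Your Case 2 (the fresh color $c'$ placed on the last child) is sound and matches the paper's Case 2. The subcase $B^*=B_0$ of Case 1 also works, since every child then only reuses the smallest colors of $f(V(B_0)\setminus\{c\})$ and never the largest one.

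\textbf{The gap.} It is the subcase where the unique largest block $B^*$ is a child of a non-root cut vertex $c$. You assert that the new color $\gamma$ on $B^*$ is ``larger-indexed'' and is therefore skipped by later, smaller children, and you propose to ``fix a consistent ordering'' to guarantee this. The ordering is not yours to choose. The algorithm takes the new colors as the \emph{smallest} elements of $S\setminus(S_{B^*}\cup\{f(c)\})$. These can be smaller than colors already on $N(c)$, notably those inherited from $B_0$. A later child takes the smallest colors of $S_B$, so it can reuse $\gamma$. For $c=c_1$ your argument does go through: $f(c_1)=1$ and the colors on $N(c_1)$ always form an initial segment $\{2,\dots,u+1\}$. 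For other $c$, nothing controls the colors of $B_0$ relative to the fresh ones.

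\textbf{A counterexample to the statement.} This cannot be patched inside the proof, because the statement fails for the algorithm as written. Take the block graph with blocks $\{r,p,a\}$, $\{r,b\}$, $\{p,c\}$, $\{c,d,e\}$, $\{c,g\}$, rooted at $r$; it is not special and $\omega=3$. A legal run of the algorithm proceeds as follows. It sets $f(r)=1$, $f(p)=3$, $f(a)=2$ (\textsc{ColorBlock} may use the list $\{2,3\}$ in either order) and $f(b)=2$. Then $S_{\{p,c\}}=\{1,2\}$, so $f(c)=1$. At $c$, \textsf{Property 1} holds. If $\{c,d,e\}$ is processed first, then $S_B=\{3\}$ and the fresh color is $\min(\{1,2,3\}\setminus\{3,1\})=2$, so $\{f(d),f(e)\}=\{2,3\}$. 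Next, for $\{c,g\}$ we have $S_B=\{2,3\}$ and $f(g)=2$. The colors on $N(c)$ are $3,3,2,2$, so $c$ has no unique color.

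The paper's Case 1 merely asserts that \textsf{Property 1} yields a unique color, and it misses the same point. To get a correct statement you must change the algorithm. One option: every child of $c$ processed after $B^*$ avoids one designated fresh color $\gamma$ of $B^*$. There is room for this, since then $|S_B|\ge |V(B^*)|-1\ge |V(B)|$. With that rule, your Case 1 argument goes through with $\gamma$ in place of ``the largest color''.
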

	\begin{proof}
		Suppose the algorithm \textsc{PCF-ColoringBlock($G$)} colors the children of $c$ in the order $B_1, \ldots, B_k$.
		The algorithm \textsc{PCF-ColoringBlock($G$)} colors all the blocks and cut vertices of $T(G)$ preceding $B_k$ in breadth first search order of $T(G)$ before coloring $B_k$. So $c$, every block $B\in \mathcal{B}(c)\setminus\{B_k\}$, and $F(c)$ (if $F(c)$ exists) are colored when the algorithm \textsc{PCF-ColoringBlock($G$)} processes $B_k$. 
		Let $S_{B_k} = f(V(\mathcal{B}(c))\setminus f(c)$. Let $B_{\max}$ be the maximum sized block containing $c$ except $B_k$. If there are more than one such blocks, then pick one block arbitrarily. Note that $f(B) \subseteq f(B_{\max})$ for each block $B\in \mathcal{B}(c)\setminus\{B_k\}$. Also, $S_{B_k} = f(B_{\max})\setminus \{f(c)\}$.
		
		\medskip
		\noindent\textbf{Case 1:} $c$ satisfies \textsf{Property 1}.
		
		\medskip
		The algorithm \textsc{PCF-ColoringBlock($G$)} colors the vertices of $B_k$ by selecting first $|V(B_k)|-1$ smallest colors from the set $S_{B_k}$, when $|V(B_k)|-1 \leq |S_{B_k}|$. Otherwise, the algorithm \textsc{PCF-ColoringBlock($G$)} colors vertices of $B_k$ with the colors of the set $S_{B_k}$ and then colors the remaining vertices with first $|V(B_k)|-S_{B_k}|-1$ smallest colors from the set $S\setminus (S_{B_k}\cup\{f(c)\})$. 
		Since $c$ satisfies \textsf{Property 1}, there exists a block $B'$ containing $c$ such that $|V(B')|\geq |V(B'')|$ for each block $B''\neq B'$ containing $c$. Therefore, there exists a color $c$ that appears on exactly one vertex of exactly one block containing $c$. Thus, $c$ has a unique color in $N(c)$ under $f$. 
		
		\medskip
		\noindent\textbf{Case 2:} $c$ does not satisfy \textsf{Property 1}.
		
		\medskip
		Note that $|V(B_k)|-1 \leq |S_{B_k}|$.
		The algorithm \textsc{PCF-ColoringBlock($G$)} colors the vertices of $B_k$ by selecting $|V(B_k)|-2$ first smallest colors from the set $S_{B_k}$ and one color, say $\alpha$, from the set $S\setminus (S_{B_k}\cup\{f(c)\})$. Since $f(B)\subseteq f(B_{\max})$ for each block $B\in \mathcal{B}(c)\setminus\{B_k\}$ and $S_{B_k} = f(B_{\max})\setminus \{f(c)\}$, $\alpha$ is a unique color in $N(c)$.
	\end{proof}
	
	\begin{lemma}\label{block_correct_time}
		The algorithm \textsc{PCF-ColoringBlock($G$)} returns a PCF coloring of a non-complete block graph $G$ in $O(|V|+|E|)$ time.
	\end{lemma}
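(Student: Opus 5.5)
The proof has two parts: correctness of the returned coloring $f$, and the linear running time.

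For correctness, we first show that $f$ is proper. Every edge of a block graph lies inside exactly one block, so it suffices to check that each block $B$ receives pairwise distinct colors. When $B$ is processed, its parent cut vertex $F(B)$ already has a color. The remaining $|V(B)|-1$ vertices receive colors from the list passed to \textsc{ColorBlock}, and \textsc{ColorBlock} removes each color from the list once it is used. So we need the list to have exactly $|V(B)|-1$ distinct colors, none equal to $f(F(B))$. This holds in both branches:
\begin{itemize}
\item In the \textsf{Property 1} branch, $S_B$ excludes $f(F(B))$ by definition, and the extra colors are drawn from $S\setminus(S_B\cup\{f(F(B))\})$. The update of $S$ at the start of the iteration guarantees enough such colors, because $|S|\ge |V(B)|$.
\item In the other branch, the replacement color $c'$ is chosen from $S\setminus(S_B\cup\{f(F(B))\})$, which the algorithm makes nonempty by adding a new color if necessary. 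The set $C'$ of $|V(B)|-2$ colors from $S_B$ exists because $|S_B|\ge |V(B)|-1$ there, as noted in Case~2 of Lemma~\ref{cut-vertex_unique}.
\end{itemize}
Cut children of $B$ simply inherit the colors already assigned inside $B$, so no conflict arises. Hence $f$ is a proper coloring.

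Next we check the conflict-free condition. Every vertex of $V(G)\setminus\mathcal{C}$ has a uniquely occurring color in its neighborhood by Observation~\ref{block_vertex}, since $f$ is proper. Every cut vertex has one by Lemma~\ref{cut-vertex_unique}. Since $G$ is connected and non-complete, it has no isolated vertices, and so $f$ is a PCF coloring.

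The main subtlety is the first step for the non-\textsf{Property 1} branch: we must ensure $|S_B|\ge |V(B)|-2$ and that $c'$ can be added without breaking properness. The first follows from $|V(B_{\max})|\ge |V(B)|$, which holds when \textsf{Property 1} fails and $B$ is the last child. That branch is only entered when no other uncolored child of $F(B)$ remains.

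For the running time:
\begin{itemize}
\item Computing $T(G)$ by depth-first search takes $O(|V|+|E|)$ time.
\item The breadth-first traversal visits each block once.
\item For each block $B$, the work is building $S_B$ and the color list and running \textsc{ColorBlock}. \textsc{ColorBlock} costs $O(|V(B)|)$.
\item We never recompute $S_B$ from scratch. Instead, for each cut vertex $c$ we maintain $f(B_{\max})\setminus\{f(c)\}$ incrementally, using the observation in Lemma~\ref{cut-vertex_unique} that $S_B=f(B_{\max})\setminus\{f(c)\}$. \textsf{Property 1} is checked once per cut vertex by scanning the sizes of its blocks.
\item Selecting the smallest colors is done by keeping the colors sorted, which is possible since colors are introduced in increasing order. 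Each such selection therefore costs $O(|V(B)|)$.
\end{itemize}
Summing over blocks gives $\sum_B |V(B)| = O(|V|+|E|)$, since $|V(B)|\le |E(B)|+1$ and the number of blocks is $O(|V|)$.
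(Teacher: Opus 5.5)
Your outline is the paper's own proof: properness blockwise, Observation~\ref{block_vertex} for non-cut vertices, Lemma~\ref{cut-vertex_unique} for cut vertices, and a per-block cost bound. Your incremental maintenance of $S_B=f(B_{\max})\setminus\{f(c)\}$ is a real improvement, since the paper only asserts that computing $S_B$ takes $O(|V(B)|)$ time. However, the conflict-free step for cut vertices has a genuine gap, and it is inherited from the paper. Case~1 of Lemma~\ref{cut-vertex_unique} (the \textsf{Property 1} case) fails when the unique largest block at a cut vertex is a child processed before a smaller sibling. You never examine that branch; you single out the other one as the subtle case.

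Here is a counterexample. Take the cliques $A_1=\{r,a,b,d\}$, $A_2=\{r,e\}$, $P=\{d,w,p\}$, $H_1=\{w,g_1,g_2,g_3\}$, $H_2=\{w,h_1,h_2\}$. This is a non-special block graph with $\omega(G)=4$. Root $T(G)$ at $r$ and process the blocks in the BFS order $A_1,A_2,P,H_1,H_2$.
\begin{itemize}
\item At $A_1$, $S$ grows to $\{1,2,3,4\}$ and the list is $\{2,3,4\}$. \textsc{ColorBlock} may match list colors to vertices arbitrarily, so let $f(d)=4$.
\item At $P$, $S_P=\{1,2,3\}$ and $d$ satisfies \textsf{Property 1}, so the list is $\{1,2\}$. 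Let $f(w)=1$ and $f(p)=2$.
\item At $H_1$, $S_{H_1}=\{2,4\}$ and $H_2$ is still uncolored, so $H_1$ receives $\{2,4\}\cup\{3\}$.
\item At $H_2$, $w$ satisfies \textsf{Property 1}, $S_{H_2}=\{2,3,4\}$, and $H_2$ receives $\{2,3\}$.
\end{itemize}
The colors on $N(w)$ are then $4,2$ (from $P$), $2,3,4$ (from $H_1$) and $2,3$ (from $H_2$). No color occurs exactly once, so the output is not PCF. The mechanism is that the fresh color given to the largest block can be smaller than colors of $S_B$. A later sibling takes the smallest colors and absorbs it, while every color of $S_B$ already occurs in $P$.

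So the algorithm needs a repair before your argument, or the paper's, can go through. One option is to process the children of each cut vertex in non-decreasing order of size, which a counting sort of the blocks does in linear time. Case~1 then holds.
\begin{itemize}
\item If the unique largest block at $w$ is its parent $P$, each child $B$ receives the $|V(B)|-1\le |V(P)|-2$ smallest colors of $f(P)\setminus\{f(w)\}$. Hence the largest color of that set is unique in $N(w)$.
\item If the unique largest block is a child, it is processed last. The colors already present at $w$ are exactly those of the largest earlier block, which is strictly smaller. So it receives at least one color new at $w$, and that color is unique.
\end{itemize}

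Two smaller points.
\begin{itemize}
\item The pseudocode list $(C'\setminus\{c\})\cup\{c'\}$ literally has only $|V(B)|-2$ colors, and it is undefined when $|V(B)|=2$. Your claim that both branches supply $|V(B)|-1$ colors therefore needs the intended reading $C'\cup\{c'\}$, which is the one used in Case~2 and in Table~\ref{block_example}.
\item Locating $c'$ can cost $\Theta(|V(B_{\max})|)$ rather than $O(|V(B)|)$. This is harmless, because it happens once per cut vertex and $\sum_c |V(B_{\max}(c))|\le \sum_B |V(B)|^2 = O(|V|+|E|)$.
\end{itemize}
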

	\begin{proof}
		
		The algorithm \textsc{PCF-ColoringBlock($G$)} traverses the cut-tree $T(G)$ in breadth first search order, coloring the vertices of each block vertex $B$ of $T(G)$. 
		The algorithm \textsc{PCF-ColoringBlock($G$)} starts by coloring the root $c_1$ of $T(G)$. 
		Thus, $F(B)$ is already colored whenever an uncolored block $B$ is processed. 
		The algorithm \textsc{PCF-ColoringBlock($G$)} assigns distinct colors to each vertex of $B$ different from $f(F(B))$.  
		Since adjacent vertices receive distinct colors, $f$ is a proper coloring of $G$. Now we show that every vertex $v\in V(G)$ has a unique color $N(v)$. If $v\in V(G)\setminus \mathcal{C}$, then by Observation~\ref{block_vertex}, $v$ has a unique color $N(v)$. If $v\in \mathcal{C}$, then by Lemma~\ref{cut-vertex_unique}, $v$ has a unique color $N(v)$. Therefore, every vertex of $G$ has a unique color in its neighborhood. Thus, $f$ is a PCF coloring of $G$.
		
		\medskip
		
		Now, we analyze the running of the algorithm \textsc{PCF-ColoringBlock($G$)}.
		The cut-tree $T(G)$ can be constructed in $O(|V|+|E|)$ time using a depth-first search \cite{AHU_textbook}. the algorithm \textsc{PCF-ColoringBlock($G$)} traverses each vertex of $T(G)$ in breadth-first order. The algorithm \textsc{PCF-ColoringBlock($G$)} colors vertices of an uncolored block in each iteration. For an uncolored block $B$, the algorithm \textsc{PCF-ColoringBlock($G$)} first calculates $S_B$, then expands the set $S$ (if needed), and then colors the uncolored vertices of $B$ based on some conditions each which can be checked in $O(1)$ time. This can be done in $O(|V(B)|)$ time. Since each vertex is colored exactly once, it takes $O(|V(G)|$ time in processing all blocks of $G$. Therefore, the total running time of the algorithm \textsc{PCF-ColoringBlock($G$)}
		is $O(|V|+|E|)$.
	\end{proof}
	
	\begin{lemma}\label{block_optimal}
		Let $G$ be a non-complete block graph. Then the algorithm \textsc{PCF-ColoringBlock($G$)} uses at most $\ell$ colors, where $\ell =  \begin{cases} 
			\omega(G)+1, & \text{ if } G \text{ is special;} \\
			\omega(G), & \text{ otherwise }.
		\end{cases}$
	\end{lemma}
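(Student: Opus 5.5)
We will show that the colour set $S$ maintained by \textsc{PCF-ColoringBlock($G$)} never exceeds $\ell$ elements. Since every assigned colour is taken from $S$, this proves the lemma. There are only two places where $S$ grows.

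\textbf{First place: the update triggered by $|V(B)|>|S|$.} This update raises $|S|$ to exactly $|V(B)|\le\omega(G)$. So it alone keeps $|S|\le\omega(G)$.

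\textbf{Second place: the \texttt{else} branch, with $S\setminus(S_B\cup\{f(F(B))\})=\emptyset$.} This is the only step that can push $|S|$ above $\omega(G)$, and it adds a single colour. Let $c=F(B)$. The branch is entered only when every other block child of $c$ is already coloured and $c$ does not satisfy \textsf{Property 1}. We will prove two things about this step:
\begin{enumerate}
\item if $G$ is non-special, the emptiness condition never holds;
\item if $G$ is special, $|S|$ is at most $\omega(G)$ just before the step, so it becomes at most $\omega(G)+1$. Moreover, once $|S|=\omega(G)+1$ the condition can never hold again, because $S_B\cup\{f(c)\}$ has at most $\omega(G)$ elements.
\end{enumerate}

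\textbf{Key invariant.} When a block $B$ with parent $c$ is processed, $S_B=f(B_{\max})\setminus\{f(c)\}$, as noted in the proof of Lemma~\ref{cut-vertex_unique}. Here $B_{\max}$ is a largest already-coloured block in $\mathcal{B}(c)$, taking the parent block of $c$ into account. This rests on the observation $f(B')\subseteq f(B_{\max})$ for the earlier blocks $B'$ at $c$, which holds because each block is coloured with the smallest available colours of $S_B$ first. Hence $|S_B\cup\{f(c)\}|=|V(B_{\max})|$. In the branch, $c$ fails \textsf{Property 1}, so no block at $c$ is strictly larger than all the others; in particular $|V(B)|\le |V(B_{\max})|$.

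\textbf{The two cases.}
\begin{itemize}
\item \emph{$G$ is non-special.} If $|V(B_{\max})|=\omega(G)$, then $c$ lies in a maximum clique $B_{\max}$. Failing \textsf{Property 1} forces a second block at $c$ of size $|V(B_{\max})|=\omega(G)$, contradicting non-specialness. So $|V(B_{\max})|\le\omega(G)-1$. On the other hand, $|S|\ge\omega(G)$ once any maximum clique has been coloured. The awkward situation is when no maximum clique has been coloured yet; then we need $|S|>|V(B_{\max})|$ directly. We handle this by strengthening the invariant to: $|S|$ always equals the size of the largest block coloured so far, and the \texttt{else} branch with an empty complement is entered only if $|V(B_{\max})|=|S|$. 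Consequently $S\setminus(S_B\cup\{f(c)\})$ is nonempty whenever $|V(B_{\max})|<|S|$. In the remaining case $|V(B_{\max})|=|S|$, the block $B_{\max}$ is a largest block coloured so far, and we must check it cannot tie with $B$. If $|V(B)|=|S|$ we have a tie at $c$, which is exactly the situation the new colour is for. We will argue that a tie forcing a new colour beyond $\omega(G)$ can only occur when both tied blocks are maximum cliques, i.e.\ $G$ is special.
\item \emph{$G$ is special.} The same reasoning bounds every increase by one above $\omega(G)$, giving at most $\omega(G)+1$ colours.
\end{itemize}

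\textbf{Expected main obstacle.} The delicate step is the non-special case when ties among non-maximum blocks at $c$ (size $<\omega(G)$) occur before any maximum clique has been coloured. BFS order does not guarantee that a maximum clique is coloured first, so the algorithm might add a colour prematurely. The bound still holds in that case: the later $|V(B)|>|S|$ update then adds only $|V(B)|-|S|$ colours, so the total still never exceeds $\omega(G)$.
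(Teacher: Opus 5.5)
Your route is the paper's. You bound the two places where $S$ grows, note that the size-triggered update never pushes $|S|$ above $\omega(G)$, and show that the tie-triggered update at a cut vertex $c$ can push $|S|$ past $\omega(G)$ only if two maximum cliques meet at $c$. Your explicit invariant $|S_B\cup\{f(c)\}|=|V(B_{\max})|$, under which the update fires exactly when $|S|=|V(B_{\max})|$, is what the paper's assertion tacitly relies on (that a new colour is needed only when $F(B)$ has two blocks of size $|S|$). Making it explicit is an improvement.

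Its justification is not really ``smallest colours first''. The actual reason is this: a block no larger than the current $B_{\max}$ is coloured entirely inside $S_B$, while a larger one uses all of $S_B$ plus fresh colours and becomes the new $B_{\max}$. Hence the union of colours at $c$ stays equal to $f(B_{\max})$ until the last child.

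However, two statements in your write-up are false and must be removed:
\begin{itemize}
\item item 1, that the emptiness condition never holds when $G$ is non-special;
\item the ``strengthened invariant'' that $|S|$ always equals the size of the largest block coloured so far.
\end{itemize}
For a counterexample, take the path $b c_1 a$ (edges $bc_1$, $c_1a$) with a triangle $ade$ attached at $a$, and root the cut-tree at $c_1$. This $G$ is non-special with $\omega(G)=3$. When the second edge-block at $c_1$ is processed, $c_1$ fails \textsf{Property 1} and $S=\{1,2\}=S_B\cup\{f(c_1)\}$. So $S$ grows to $\{1,2,3\}$ although no block of size $3$ has been coloured yet.

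Neither claim is needed. The firing condition is $|S|=|V(B_{\max})|$, and you already showed $|V(B_{\max})|\le\omega(G)-1$ at a cut vertex of a non-special graph that fails \textsf{Property 1}. Hence every firing leaves $|S|\le\omega(G)$, and the other update never exceeds $\omega(G)$. In the special case, the same condition $|S|=|V(B_{\max})|\le\omega(G)$ gives $|S|\le\omega(G)+1$ after any firing, and rules out any firing once $|S|=\omega(G)+1$.

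Also, the tie need not involve $B$: the parent block of $c$ and an earlier child may tie while $B$ is smaller. So phrase the tie in terms of $B_{\max}$, not as $|V(B)|=|S|$. Your closing paragraph essentially reaches this repaired argument, so with these corrections the proof is complete.
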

	\begin{proof}
		By Lemma~\ref{block_correct_time}, the algorithm \textsc{PCF-ColoringBlock($G$)} returns a PCF coloring. We will prove that the algorithm \textsc{PCF-ColoringBlock($G$)} uses at most $\ell$ colors.
		The algorithm \textsc{PCF-ColoringBlock($G$)} initializes a set of available colors $S=\{1\}$. The algorithm \textsc{PCF-ColoringBlock($G$)} enlarges $S$ under one of the following conditions while processing a block $B$:
		
		\begin{enumerate}
			\item $|V(B)|>|S|$.
			
			\item $B$ is the last uncolored block child of $F(B)$ such that $F(B)$ does not satisfy \textsf{Property 1} and $S\setminus (S_{B}\cup\{f(F(B))\}) = \emptyset$.
		\end{enumerate}
		
		If $|V(B)|>|S|$, then the algorithm \textsc{PCF-ColoringBlock($G$)} adds $|V(B)|-|S|$ new colors to $S$ when it encounters a block $B$ with $|V(B)|>|S|$. Since $|V(B)|\leq\omega(G)$ for each block $B$, $|S|\leq\omega(G)$. 
		
		\medskip
		
		Now assume that $B$ is the last uncolored block child of $F(B)$ such that $F(B)$ does not satisfy \textsf{Property 1} and $S\setminus (S_{B}\cup\{f(F(B))\}) = \emptyset$.
		This means that every color in $S$ has already appeared in $N(F(B))$.
		A new color is needed only when $F(B)$ has at least two block children, say $B_1$, $B_2$, such that $|V(B_1)| = |V(B_2)| = |S|$. First, assume that $G$ is not special. By Definition~\ref{special_block}, no cut vertex is contained in two maximum cliques. Therefore, at most one block has size $\omega(G)$ among all blocks containing a cut vertex $v$.
		Therefore, $S$ does not get expanded after $|S|=\omega(G)$.
		Hence, the algorithm \textsc{PCF-ColoringBlock($G$)} uses at most $\omega(G)$ when $G$ is non-special. 
		
		\medskip
		
		Now, assume that $G$ is a special block graph. Thus, there exists a cut vertex $v$ that is contained in at least two blocks $B_1,B_2$ such that $|V(B_1)| = |V(B_2)| = \omega(G)$. When the algorithm \textsc{PCF-ColoringBlock($G$)} processes the last block-child of $v$, it adds one new color to $S$. Since a maximum block containing $v$ has already been processed, we obtain $|S|=\omega(G)+1$. 
		We claim that the algorithm \textsc{PCF-ColoringBlock($G$)} never adds a new color to $S$ after $|S|=\omega+1$. 
		Since the algorithm \textsc{PCF-ColoringBlock($G$)} avoids at most $\omega(G)$ colors while coloring the vertices of a block, at least one color is always available from $S$.
		Hence, the algorithm \textsc{PCF-ColoringBlock($G$)} uses $\omega(G)+1$ colors.
	\end{proof}
	
	\begin{theorem}\label{block_bound_char}
		$\chi_{pcf}(G)\leq\omega(G)+1$. Moreover, $\chi_{pcf}(G)=\omega(G)+1$ if and only if $G$ is a special block graph.
	\end{theorem}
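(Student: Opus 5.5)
The plan is to assemble the theorem from the results already established for block graphs, splitting into the complete and non-complete cases, since the algorithm is only defined for non-complete graphs.

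\textbf{Complete graphs.} If $G$ is complete, Observation~\ref{complete_block_graph} gives $\chi_{pcf}(G)=\omega(G)\le\omega(G)+1$. A complete graph has no cut vertex, so it is not special. Both directions of the characterization therefore hold trivially here.

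\textbf{Non-complete graphs: upper bound and the ``only if'' direction.} Now let $G$ be a non-complete block graph. By Lemma~\ref{block_correct_time}, \textsc{PCF-ColoringBlock($G$)} returns a PCF coloring. By Lemma~\ref{block_optimal}, this coloring uses at most $\omega(G)+1$ colors, which gives $\chi_{pcf}(G)\le\omega(G)+1$. If $G$ is not special, Lemma~\ref{block_optimal} improves the count to $\omega(G)$ colors, so $\chi_{pcf}(G)\le\omega(G)$. On the other hand, $\chi_{pcf}(G)\ge\chi(G)=\omega(G)$, because every PCF coloring is proper and block graphs are perfect. Hence $\chi_{pcf}(G)=\omega(G)$ whenever $G$ is non-special. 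Taking the contrapositive, $\chi_{pcf}(G)=\omega(G)+1$ forces $G$ to be special.

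\textbf{Non-complete graphs: the ``if'' direction.} Conversely, suppose $G$ is special. Lemma~\ref{spcl_block_bound} gives $\chi_{pcf}(G)\ge\omega(G)+1$, and combined with the upper bound above this yields equality.

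\textbf{Main obstacle.} The only delicate point is relying on Lemma~\ref{block_optimal}. Its argument has to guarantee two things: the palette $S$ grows beyond $\omega(G)$ only at a cut vertex lying in two maximum blocks, and once $|S|=\omega(G)+1$ no further colors are ever added. Since that lemma is stated earlier, I would take it as given and not re-verify it.
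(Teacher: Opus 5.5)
Your proposal is correct and follows the paper's argument: the upper bound and the non-special case come from Lemmas~\ref{block_correct_time} and~\ref{block_optimal} (correctness and color count of \textsc{PCF-ColoringBlock}), and the special case comes from Lemma~\ref{spcl_block_bound}. Your version is a bit more careful than the paper's: you treat complete graphs separately via Observation~\ref{complete_block_graph}, since the algorithm only handles non-complete graphs, and you explicitly derive $\chi_{pcf}(G)=\omega(G)$ for non-special $G$ from the $\omega(G)$-color bound and $\chi_{pcf}(G)\ge\chi(G)=\omega(G)$, a step the paper leaves implicit.
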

	\begin{proof}
		Since the algorithm \textsc{PCF-ColoringBlock($G$)} returns a PCF coloring of a block graph $G$ with at most $\omega(G)+1$ colors, $\chi_{pcf}(G) \leq \omega(G)+1$. By Claim~\ref{spcl_block_bound}, $\chi_{pcf}(G)\geq \omega(G)+1$ when $G$ is a special block graph. Thus, $\chi_{pcf}(G)=\omega(G)+1$ if and only if $G$ is a special block graph.
	\end{proof}

	\subsection{Proper interval graphs}
	
	Given a family $\mathcal{F}$ of sets, the \emph{intersection graph} of $\mathcal{F}$ is the graph whose vertex set is $\mathcal{F}$, where two vertices $A,B\in\mathcal{F}$ are adjacent if and only if $A\cap B\neq\emptyset$. 
	A graph $G$ is a \emph{proper interval graph} if it is the intersection graph of intervals on the real line such that no interval properly contains another.
	
	\medskip
	
	We note that Sharma et al. \cite{SPP} also studied PCF chromatic number of proper interval graphs. They stated that $\chi_{pcf}(G)\leq \omega(G)+1$, with equality if and only if $\Delta(G)=2\omega(G)-2$. Moreover, they stated that $\chi_{pcf}(G)$ can be computed in linear time. Due to space constraints, proofs of these results were not provided in~\cite{SPP}. In this section, we independently prove these results and provide a linear-time algorithm to compute $\chi_{pcf}(G)$ and construct an optimal PCF coloring of a proper interval graph $G$.

	Since any PCF coloring of $G$ requires at least $\chi(G) = \omega(G)$ colors, we first identify proper interval graphs, referred to as \emph{special proper interval graphs} (see Definition \ref{special_pig}), for which $\chi_{pcf}(G) \geq \omega(G)+1$. We then present a linear-time algorithm that computes a PCF coloring using $\omega(G)+1$ colors for every special proper interval graph, and $\omega(G)$ colors for every non-special proper interval graph. This establishes a tight upper bound on $\chi_{pcf}(G)$ for proper interval graphs and provides the characterization that $\chi_{pcf}(G) = \omega(G)+1$ if and only if $G$ is a special proper interval graph.
	
	\medskip
	
	Let $G$ be a proper interval graph. A vertex $v\in V(G)$ is said to be \emph{simplicial} if $N(v)$ is a clique. Let $\alpha = (v_1, v_2, \ldots, v_n)$ be an ordering of the vertices of the graph $G$. The ordering $\alpha$ is said to be a \emph{perfect elimination ordering} (PEO) if $v_i$ is a simplicial vertex in the graph $G[S]$, where $S = \{v_{i+1}, \ldots, v_n\}$. The ordering $\alpha$ is said to be a \emph{bi-compatible elimination ordering} (BCO) if $\alpha$ and $\alpha^{-1} = (v_n, \ldots, v_1)$ both are perfect elimination orderings. A graph $G$ is a proper interval graph if and only if $G$ has a BCO \cite{REJRL}.
	Let $G$ be a proper interval graph with BCO $\alpha = (v_1, v_2, \ldots,v_n)$. 
	For each $v \in V(G)$, $\min[v] := \min \{\, i : v_i \in N[v] \,\}$
	and $\max[v] := \max \{\, i : v_i \in N[v] \,\}$ with respect to the ordering $\alpha$.

	\begin{defi}[\textsf{Special proper interval graph}]\label{special_pig}
		A proper interval graph $G$ is said to be \emph{special} if there exists a vertex $v\in V(G)$ such that $v$ is contained in two distinct maximum cliques $C_1$ and $C_2$ such that $C_1\cap C_2 = \{v\}$.
	\end{defi}
	
	\begin{figure}[h]
		\includegraphics[width=\textwidth]{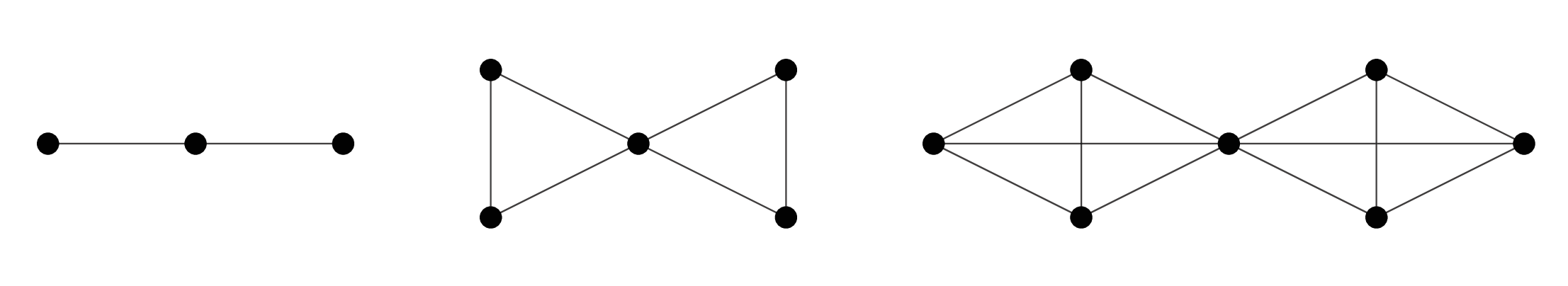}
		\caption{Three special proper interval graphs}%: $\chi_{o}(G) = \omega(G)+1$.}
	\label{special_proper_interval_graph}
\end{figure}

We call a proper interval graph $G$ is said to be \emph{non-special} if $G$ is not special.
\begin{lemma}\label{spcl_pig_bound}
	If $G$ is a special proper interval graph, then $\chi_{pcf}(G) \geq\omega(G)+1$.
\end{lemma}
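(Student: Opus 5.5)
The argument follows Lemma~\ref{spcl_block_bound} almost verbatim. Let $\omega(G)=k$. By Definition~\ref{special_pig}, there is a vertex $v$ lying in two maximum cliques $C_1$ and $C_2$ with $C_1\cap C_2=\{v\}$. Suppose, for contradiction, that $c$ is a PCF $k$-coloring of $G$ with colors $\{1,\ldots,k\}$, and assume without loss of generality that $c(v)=1$.

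Since $C_1$ is a clique of size $k$ and $c$ is proper, the $k-1$ vertices of $C_1\setminus\{v\}$ receive pairwise distinct colors, all different from $1$. Hence they use exactly the colors $\{2,\ldots,k\}$. The same holds for $C_2\setminus\{v\}$.

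The hypothesis $C_1\cap C_2=\{v\}$ makes the sets $C_1\setminus\{v\}$ and $C_2\setminus\{v\}$ disjoint, and both lie in $N(v)$. Therefore every color in $\{2,\ldots,k\}$ appears at least twice in $N(v)$. Color $1$ does not appear in $N(v)$ at all, since $c$ is proper and $c(v)=1$. So no color appears exactly once in $N(v)$.

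The vertex $v$ is non-isolated, because $k\ge 2$ whenever two distinct maximum cliques share a vertex. This contradicts $c$ being a PCF coloring, so $\chi_{pcf}(G)\ge \omega(G)+1$.

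The only point needing care is the disjointness of $C_1\setminus\{v\}$ and $C_2\setminus\{v\}$, which is exactly what the condition $C_1\cap C_2=\{v\}$ gives. The proper interval structure is not used at all.
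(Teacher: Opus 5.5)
Your proof is correct and follows the paper's argument essentially verbatim: with $c(v)=1$, each of the disjoint sets $C_1\setminus\{v\}$ and $C_2\setminus\{v\}$ uses every color in $\{2,\ldots,k\}$, so no color appears exactly once in $N(v)$. Your extra checks — that color $1$ is absent from $N(v)$, that the two sets are disjoint, and that $v$ is non-isolated because $k\ge 2$ — fill in small details the paper leaves implicit.
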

\begin{proof}
	Let $G$ be a special proper interval graph. Let $\omega(G) = k$. 
	By Definition \ref{special_pig}, there exists a vertex $v$ such that $v$ is contained in two distinct maximum cliques $C_1$ and $C_2$ such that $C_1\cap C_2 = \{v\}$.
	Each of the sets $C_1\setminus\{v\}$ and $C_2\setminus \{v\}$ requires $k-1$ distinct colors in any proper coloring of $G$. Without loss of generality, assign the color $1$ to the vertex $v$.
	Now, every color in the set $\{2, \ldots, k\}$ is used at least twice in $N(v)$. So $v$ has no unique color in $N(v)$ and hence a new color is required in $N(v)$. Thus, $\chi_{pcf} \geq\omega(G)+1$.
\end{proof}

\begin{algorithm}[h]\label{pig_algo}
	\DontPrintSemicolon
	\relsize{-1}{
		\KwIn{A proper interval graph $G = (V, E)$.}
		\KwOut{A PCF coloring of $G$.}
		Let $\alpha = (v_1, v_2, \ldots, v_n)$ be a BCO of $G$.\\
		
		\For{$(i = 1$ to $n)$}{
			Let $c$ be the least indexed color that does not appear on any vertex in $N(v_i)$;\\
			$f(v_i) = c$;\\
			\If{ $(N(v_{\min[v_i]})\subseteq \{v_1, \ldots, v_i\}$ and $v_{\min[v_i]}$ does not have a unique color $N(v_{\min[v_i]})$ $)$}{
				Let $c'\neq c$ be the least indexed color that does not appear on any vertex in $N(v_i)$;\\
				$f(v_i) = c'$;\\
			}
		}
	}
	\Return{f}		
	\caption{\textsc{PCF-ColorPIG}($G$)}
\end{algorithm}

\begin{lemma}\label{special-pig_optimal}
	If $G$ is a proper interval graph, then the algorithm \textsc{PCF-ColorPIG}($G$) returns a PCF coloring $f$ of $G$. 
\end{lemma}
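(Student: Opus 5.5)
We prove two things: that $f$ is proper, and that every vertex $u$ has a color appearing exactly once in $N(u)$. The main tool is the structure of a BCO. Since $\alpha$ is a BCO, for every $i$ the sets $\{v_{\min[v_i]},\ldots,v_i\}$ and $\{v_i,\ldots,v_{\max[v_i]}\}$ are cliques, and $N[v_i]$ is exactly their union. Moreover, $\min[\cdot]$ and $\max[\cdot]$ are non-decreasing along $\alpha$. We read ``$N(v_i)$'' in the algorithm as the already colored neighbours, namely $v_{\min[v_i]},\ldots,v_{i-1}$.

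\textbf{Properness.} When $v_i$ is processed, its colored neighbours form the clique $\{v_{\min[v_i]},\ldots,v_{i-1}\}$, so they carry distinct colors. Both choices $c$ and $c'$ avoid all colors on colored neighbours. The later neighbours $v_j$ with $j>i$ will in turn avoid $f(v_i)$ when they are colored. Hence $f$ is proper.

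\textbf{Uniqueness.} Fix a vertex $u=v_j$ and let $t=\max[u]$. Then $N(u)$ becomes fully colored exactly at step $t$. Its left part $L=\{v_{\min[u]},\ldots,v_{j-1}\}$ and its right part $R=\{v_{j+1},\ldots,v_t\}$ are cliques, so each color appears at most twice in $N(u)$: at most once in $L$ and at most once in $R$. It therefore suffices to find a color used in exactly one of $L$ and $R$.

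\emph{Easy case.} Suppose $|L|\neq|R|$, or some color of $R$ is missing from $L$. Then such a color exists, and we are done.

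\emph{Critical case.} Otherwise $f(R)=f(L)$. In this case $v_t$ is processed at a step where $N(u)\subseteq\{v_1,\ldots,v_t\}$ and $u$ has no unique color. We want to argue that $u=v_{\min[v_t]}$, so that the \textbf{if}-test at step $t$ fires and recolors $v_t$ with $c'$. If $u\neq v_{\min[v_t]}$, then $\min[v_t]<j$, so $v_t$ is adjacent to all of $L$. Since $R\setminus\{v_t\}$ already used $|R|-1$ colors, all of them in $f(L)$, the greedy choice at step $t$ must avoid the colors of $L$. So $f(v_t)\notin f(L)$, contradicting $f(R)=f(L)$. Hence $u=v_{\min[v_t]}$, and at step $t$ the recoloring gives $v_t$ a second-least available color $c'$, distinct from $c$.

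\emph{Main obstacle.} The hard part is showing that after recoloring $u$ does get a unique color. When the greedy value $c$ lies in $f(L)$, the new value $c'$ must lie outside $f(L)$, or at least outside $f(L)\cap f(R\setminus\{v_t\})$. The first step is to show that $f(R\setminus\{v_t\})$ consists of the least colors in a controlled way, by induction on the greedy rule. We must also check that recoloring $v_t$ does not break the unique color of any other vertex $w$. Every $w$ with $\max[w]<t$ is not adjacent to $v_t$. Any earlier $w$ with $\max[w]=t$ has $\min[v_t]\le w$, so $w$ lies in the clique containing both $u$ and $v_t$. For such $w$ the uniqueness argument is rerun at the end with the final value $f(v_t)$. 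We would also handle the case where $c'$ does not exist within $\omega$ colors; here the algorithm may use a new color, which is allowed since this lemma only claims PCF-ness and not the color count. Finally, note that the check at step $t$ concerns only $v_{\min[v_t]}$. Uniqueness for all other vertices $w$ whose neighbourhood completes at step $t$ therefore follows from the easy case, because for those $w$ we showed $v_t$'s color differs from all of $f(L_w)$.
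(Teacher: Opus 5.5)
Your decomposition is the same as the paper's: the vertex $v_{\min[v_t]}$ is handled by the recoloring, and every other vertex whose neighbourhood is completed at step $t$ is handled by a clique argument. As written, though, there are two gaps.

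\textbf{The adjacency claim is false.} You claim ``if $u\neq v_{\min[v_t]}$ then $\min[v_t]<j$, so $v_t$ is adjacent to all of $L$''. By monotonicity, $\min[v_t]\ge\min[u]$, so $v_t$ is adjacent only to $v_{\min[v_t]},\dots,v_{j-1}$. For a counterexample, take the BCO $v_1,\dots,v_5$ with maximal cliques $\{v_1,v_2,v_3\}$ and $\{v_2,v_3,v_4,v_5\}$, and let $u=v_3$, $t=5$. Then $\min[v_5]=2<3$, but $v_5\not\sim v_1\in L$. So your greedy-based contradiction fails, and so does your closing remark that $v_t$'s colour differs from all of $f(L_w)$.

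The correct argument needs no greedy property. If $\min[v_t]<j$, the clique $\{v_{\min[v_t]},\dots,v_t\}$ contains $v_{j-1}\in L$ and all of $R$. Hence $f(v_{j-1})\in f(L)\setminus f(R)$, and this colour is unique in $N(u)$ under \emph{every} proper colouring. In particular it survives whatever final value $v_t$ receives. This is exactly the paper's remark that $C$ contains $B$ and at least one vertex of $A$.

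\textbf{The central step is left unproven.} You leave open, as the ``main obstacle'', the claim that after recolouring, $u=v_{\min[v_t]}$ has a unique colour. The route you sketch (forcing $c'\notin f(L)$, or controlling $f(R\setminus\{v_t\})$ by induction on the greedy rule) is unnecessary and is not carried out. The missing observation takes one line.

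\begin{itemize}
\item $R\setminus\{v_t\}$ consists of already coloured neighbours of $v_t$, so $c\notin f(R\setminus\{v_t\})$.
\item Since $L$ and $R$ are cliques, the test fires exactly when $f(L)=f(R\setminus\{v_t\})\cup\{c\}$. In particular, $c\in f(L)$.
\item After recolouring, $f(R)=f(R\setminus\{v_t\})\cup\{c'\}$ with $c'\neq c$. Hence $c\in f(L)\setminus f(R)$.
\end{itemize}

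So $c$ itself appears exactly once in $N(u)$, whether or not $c'\in f(L)$. The existence of $c'$ is not an issue, since the algorithm's palette is unbounded.

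With these two fixes your proof is complete and essentially coincides with the paper's.
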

\begin{proof}
	Let $G$ be a proper interval graph. Let $\alpha = (v_1, v_2, \ldots, v_n)$ be a BCO of $G$. For each $i\in [n]$, we prove by induction on iteration $i$ that $f$ is a partial PCF coloring after iteration $i$. Since the algorithm \textsc{PCF-ColorPIG}($G$) assigns color $1$ to the vertex $v_1$,  $f$ is a partial PCF coloring; thus the base case is true. Assume that 
	$f$ is a partial PCF coloring after the iteration $i-1$. 
	Now we show that $f$ is a partial PCF coloring after the iteration $i$.
	
	\medskip
	
	The algorithm \textsc{PCF-ColorPIG}($G$) assigns a color $c$ to $v_i$ avoiding the colors assigned to the vertices of $N(v_i)$. Therefore, no two adjacent vertices are assigned the same color.
	
	\medskip
	
	Let $T = \{v_j:~ N(v_j)\subseteq \{v_1, \ldots, v_{i-1}\}\}$. Since we do not color any vertex in $N(v)$ for any $v\in T$ in the $i^{th}$ iteration, each vertex $v\in T$ has a unique color in $N(v)$ after the $i^{th}$ iteration. Let $M = \{v_j : ~j\leq i, ~\max[v_j]=i\}$. 
	Therefore, it is sufficient to show that each vertex of the set $M$ has a unique color in its neighborhood. If $M = \emptyset$, then we are done. So we may assume that $M \neq \emptyset$. Clearly, $v_{\min[v_i]}\in M$. If $v_{\min[v_i]}$ does not have a unique color in $N(v_{\min[v_i]})$ after coloring $v_i$ with the color $c$, then the algorithm \textsc{PCF-ColorPIG}($G$) recolors the vertex $v_i$ with a new color $c'\neq c$ avoiding the colors of each neighbor of $v_i$. Thus, $v_{\min[v_i]}$ has a unique color in $N(v_{\min[v_i]})$ after $i^{th}$ iteration.
	
	\medskip
	
	Now, let $v_k \in M$ be such that $k\notin \{\min[v_i], i\}$. Let $A = N(v_k)\cap\{v_1, v_2, \ldots, v_k\}$ and $B = N(v_k)\cap\{v_k, v_{k+1} \ldots, v_n\}$. Since $A$ and $B$ are cliques, we have $|A| = |B|$ whenever $v_k$ does not have a unique color in $N(v_k)$ under a proper coloring. Also, $v_k$ does not have a unique color in $N(v_k)$ only when $f(A) = f(B)$. Since $C = \{v_{\min[v_i]}, \ldots, v_i\}$ is a clique and $\min[v_i]< k$, $C$ contains $B$ and at least one vertex of $A$. Thus, we have $f(A) \neq f(B)$. 
	Thus, $v_k$ has a unique color in $N(v_k)$ after coloring $v_{\max[v_i]}$ under $f$. 
	Finally, assume that $v_i\in M$. This is the case only when $i=n$.
	Since $N(v_n)$ is a clique, $v_n$ has a unique color in $N(v_n)$ under $f$.  
	So $f$ is a partial PCF coloring after coloring $v_i$.
	Therefore, by induction $f$ is a partial PCF coloring after iteration $i$ for each $i\in [n]$. Thus, 
	the algorithm \textsc{PCF-ColorPIG}($G$) returns a PCF coloring $f$ of $G$.
\end{proof}

\begin{lemma}
	Let $G$ be a proper interval graph. Then the following are true.
	\begin{enumerate}
		\item If $G$ is not special, then the algorithm \textsc{PCF-ColorPIG}($G$) returns an optimal PCF coloring with $\omega(G)$ colors.
		\item If $G$ is special, then the algorithm \textsc{PCF-ColorPIG}($G$) returns an optimal PCF coloring with $\omega(G)+1$ colors.
	\end{enumerate}
\end{lemma}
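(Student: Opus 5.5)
Correctness of the output of \textsc{PCF-ColorPIG}($G$) is already given by Lemma~\ref{special-pig_optimal}, and $\chi_{pcf}(G)\ge\chi(G)=\omega(G)$ always holds. Moreover, Lemma~\ref{spcl_pig_bound} gives $\chi_{pcf}(G)\ge\omega(G)+1$ when $G$ is special. So it suffices to bound the number of colors the algorithm uses: at most $\omega(G)$ if $G$ is not special, and at most $\omega(G)+1$ in general. Optimality in both items then follows from these lower bounds.

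The key structural fact is that in a BCO $\alpha=(v_1,\dots,v_n)$ the already-colored neighbors of $v_i$ form the clique $L_i=\{v_{\min[v_i]},\dots,v_{i-1}\}$. This holds because $\alpha^{-1}$ is a PEO and neighborhoods are consecutive, so $L_i\cup\{v_i\}$ is a clique and $|L_i|\le\omega(G)-1$.

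In a default iteration (no recoloring), $v_i$ avoids at most $\omega-1$ colors, so a color in $[\omega]$ is available. In a recoloring iteration, $v_i$ avoids $|L_i|$ colors plus the rejected color $c$, so some color in $[|L_i|+2]$ is available. This is at most $\omega+1$ always, which settles item 2. Since the greedy choice is least-indexed, the color used is $\le\omega$ unless $|L_i|=\omega-1$, that is, unless $L_i\cup\{v_i\}$ is a maximum clique.

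The main step, and the expected obstacle, is item 1: we must show that a recoloring with $|L_i|=\omega-1$ forces $G$ to be special. Let $u=v_{\min[v_i]}$, with index $p$. The recoloring occurs only when $N(u)\subseteq\{v_1,\dots,v_i\}$, so $\max[u]=i$ and $N(u)=A\cup B$. Here $A=N(u)\cap\{v_1,\dots,v_{p-1}\}$ and $B=\{v_{p+1},\dots,v_i\}$ are both cliques (PEO in both directions). Also $u$ has no unique color with $f(v_i)=c$. Since $A$ and $B$ are each properly colored, every color of $N(u)$ appearing once in $A$ must also appear in $B$ and vice versa, hence $f(A)=f(B)$ and $|A|=|B|$. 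Now $|B|=|L_i|=\omega-1$, so $|A|=\omega-1$, and both $A\cup\{u\}$ and $B\cup\{u\}$ are cliques of size $\omega$. They are distinct, and their intersection is $\{u\}$ because $A$ and $B$ are disjoint. This is exactly Definition~\ref{special_pig}, contradicting non-speciality.

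Hence in a non-special graph every iteration uses a color in $[\omega]$, giving item 1. I would double-check the claim that $A$ is a clique (via $\alpha$ being a PEO applied to the earlier neighbors of $u$). I would also check the boundary cases where $u=v_i$ or $A=\emptyset$; there $u$ trivially has a unique color, since $B$ alone is a properly colored clique, so no recoloring happens.
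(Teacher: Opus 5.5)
Your proposal is correct and follows the same route as the paper. Both bound the number of colors avoided at a recoloring step, using that a recoloring forces $f(A)=f(B)$, hence $|A|=|B|$, for the earlier and later neighborhoods of $u=v_{\min[v_i]}$; optimality then comes from $\chi_{pcf}(G)\ge\omega(G)$ and Lemma~\ref{spcl_pig_bound}. Your observation is that $|L_i|=\omega(G)-1$ at a recoloring would make $A\cup\{u\}$ and $B\cup\{u\}$ two maximum cliques meeting only in $u$. That is exactly the justification missing from the paper's bare assertion $|B\cup\{v_i\}|\le\omega(G)-1$ in the non-special case.
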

\begin{proof}
	(a) Let $v_i\in \alpha$. The algorithm \textsc{PCF-ColorPIG}($G$) avoids at most $\omega(G)-1$ colors while coloring the vertex $v_{\max[v_i]}$. Let $A = N(v_k)\cap\{v_1, v_2, \ldots, v_k\}$ and $B = N(v_k)\cap\{v_k, v_{k+1} \ldots, v_n\}$.
	If $v_i$ does not have a unique color in $N(v_i)$ after assigning a color $c$ to $v_{\max[v_i]}$, then $|A| = |B|$ and $\min[v_{\max[v_i]}] = i$. 
	First, assume that $G$ is non-special. Note that $|B\cup\{v_i\}|\leq\omega(G)-1$. Thus, the algorithm \textsc{PCF-ColorPIG}($G$) avoids at most $\omega(G)-1$ colors while recoloring the vertex $v_{\max[v_i]}$. Since $\chi_{pcf}(G)\geq\omega(G)$ and the algorithm \textsc{PCF-ColorPIG}($G$) returns a PCF coloring with at most $\omega(G)$ colors, the algorithm \textsc{PCF-ColorPIG}($G$) returns an optimal PCF coloring of a non-special graph $G$. 
	
	\medskip
	
	Now, assume that $G$ is special. Note that $|B\cup\{v_i\}|\leq\omega(G)$. Thus, the algorithm \textsc{PCF-ColorPIG}($G$) avoids at most $\omega(G)$ colors while recoloring the vertex $v_{\max[v_i]}$. Since $\chi_{pcf}(G)\geq\omega(G)+1$ by Lemma~\ref{spcl_pig_bound} and the algorithm \textsc{PCF-ColorPIG}($G$) returns a PCF coloring with at most $\omega(G)+1$ colors, the algorithm \textsc{PCF-ColorPIG}($G$) returns an optimal PCF coloring of a special graph $G$. 
\end{proof}

\begin{lemma}
	The algorithm \textsc{PCF-ColorPIG}($G$) computes $f$ in $O(|V|+|E|)$ time.
\end{lemma}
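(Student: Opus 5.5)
We need to show that \textsc{PCF-ColorPIG}($G$) runs in $O(|V|+|E|)$ time. The plan is to charge the work of each iteration $i$ to the degree of $v_i$, and to maintain auxiliary data that lets the test in the \textbf{if}-statement be answered in $O(\deg(v_i))$ time. Summing over $i$ then gives $O(\sum_i \deg(v_i)+n)=O(|V|+|E|)$.

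\textbf{Preprocessing.} A BCO $\alpha=(v_1,\ldots,v_n)$ of a proper interval graph can be computed in linear time, for instance by a LexBFS-based recognition algorithm \cite{REJRL}. With $\alpha$ in hand:
\begin{enumerate}
\item We compute $\min[v]$ and $\max[v]$ for every $v$ by one scan of each adjacency list, in $O(|V|+|E|)$ total.
\item We store, for each vertex $u$, a counter array $\mathrm{cnt}_u$ indexed by colors. It records how many colored neighbors of $u$ carry each color.
\item We also store $\mathrm{uniq}_u$, the number of colors whose count in $N(u)$ is exactly one.
\end{enumerate}
The arrays must not cost $\Theta(n)$ per vertex. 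Each vertex gets a color of index at most $\deg(v_i)+2$, since at most $\deg(v_i)$ colors are forbidden and one more is skipped. So the colors seen in $N(u)$ all lie in $[\Delta+2]$, and it suffices to allocate $\mathrm{cnt}_u$ lazily, or as a hash table of size $O(\deg(u))$. Either way the total space and initialization are $O(|V|+|E|)$.

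\textbf{Iteration $i$.} We first mark the colors of the already colored neighbors of $v_i$ in a global boolean array of length $\Delta+2$. We then scan $1,2,\ldots$ for the least unmarked color $c$, and also for the second least $c'$. Both scans take $O(\deg(v_i))$ steps, because at most $\deg(v_i)$ colors are marked. Afterwards we unmark the same entries, again in $O(\deg(v_i))$.

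Next we evaluate the condition on $u=v_{\min[v_i]}$.
\begin{enumerate}
\item Checking $N(u)\subseteq\{v_1,\ldots,v_i\}$ is the test $\max[u]\le i$, which is $O(1)$.
\item To decide whether $u$ has a unique color after $v_i$ receives $c$, we read $\mathrm{cnt}_u[c]$ together with $\mathrm{uniq}_u$. The value $\mathrm{uniq}_u$ after the update is $\mathrm{uniq}_u+1$ if $\mathrm{cnt}_u[c]=0$, is $\mathrm{uniq}_u-1$ if $\mathrm{cnt}_u[c]=1$, and is unchanged otherwise. This takes $O(1)$.
\end{enumerate}
Having fixed the final color of $v_i$, we update $\mathrm{cnt}_w$ and $\mathrm{uniq}_w$ for every neighbor $w$ of $v_i$. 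That is $O(1)$ per neighbor, so $O(\deg(v_i))$ in all.

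\textbf{Main obstacle.} The step that needs care is keeping the bookkeeping linear. The naive approach rescans $N(v_{\min[v_i]})$ to recount colors, and this can cost $\Theta(\omega^2)$ repeatedly. The incremental counters $\mathrm{cnt}$ and $\mathrm{uniq}$ are exactly what avoid that rescan. Summing $O(1+\deg(v_i))$ over all $i$ together with the linear preprocessing yields the claimed $O(|V|+|E|)$ bound.
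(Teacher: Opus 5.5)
\textbf{Gap: the counter data structure.} Your route differs from the paper's, and the difference is where your argument is not airtight. Fixed-length arrays $\mathrm{cnt}_u$ indexed by $[\Delta+2]$ cost $\Theta(n\Delta)$ in total. That is not $O(|V|+|E|)$, even for proper interval graphs. For example, attach a path on $n$ vertices to one vertex of a clique on $\sqrt{n}$ vertices: then $|E|=O(n)$ but $n\Delta=\Theta(n^{3/2})$. ``Lazy allocation'' is left unspecified. If it means creating only the entries actually touched, you need a dictionary. If it means the uninitialized-array trick, you still reserve $\Theta(n\Delta)$ memory. Hash tables of size $O(\deg(u))$ give only \emph{expected} $O(1)$ operations, so you prove an expected linear bound, not the deterministic $O(|V|+|E|)$ the lemma asserts.

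\textbf{The counters are unnecessary.} The paper keeps no incremental bookkeeping. It finds $c$ (and, if needed, $c'$) in $O(\deg(v_i))$ time. When $N(v_{\min[v_i]})\subseteq\{v_1,\ldots,v_i\}$ holds, it simply scans $N(v_{\min[v_i]})$ to test for a unique color, charging $O(\deg(v_{\min[v_i]}))$, and then sums degrees. What makes that sum legitimate, left implicit in the paper, is the observation you are missing. Since $v_i\in N[v_{\min[v_i]}]$, we always have $\max[v_{\min[v_i]}]\ge i$. So the condition is equivalent to $\max[v_{\min[v_i]}]=i$. Hence for a fixed vertex $u$, the scan of $N(u)$ can be triggered only in iteration $i=\max[u]$, that is, at most once. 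You already evaluate the $O(1)$ test $\max[u]\le i$ first. After that, the direct rescan costs $O(\sum_u\deg(u))=O(|E|)$ in total. So the ``main obstacle'' you describe, repeated $\Theta(\omega^2)$ rescans, does not arise.

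\textbf{Fix.} Drop $\mathrm{cnt}_u$ and $\mathrm{uniq}_u$. In iteration $i=\max[u]$, count the colors on $N(u)$ (with $v_i$ given color $c$) using your global scratch array of length $\Delta+2$: mark, test, unmark, in $O(\deg(u))$ time. Combined with your $O(\deg(v_i))$ search for $c$ and $c'$, this gives a deterministic linear bound and is exactly the paper's proof.
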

\begin{proof}
	Let $G=(V,E)$ be a proper interval graph and let 
	$\alpha=(v_1,v_2,\ldots,v_n)$ be a BCO $G$.
	A BCO of a proper interval graph can be computed in $O(|V|+|E|)$ time. The algorithm \textsc{PCF-ColorPIG}($G$) colors the vertices in the order $v_1,v_2,\ldots,v_n$.
	Consider the $i^{th}$ iteration.
	The algorithm \textsc{PCF-ColorPIG}($G$) colors the vertex $v_i$ with the smallest indexed color $c$ that does not appear on any vertex of $N(v_i)$. This can be done by checking in $N(v_i)$ and hence it takes $O(\deg(v_i))$ time. After coloring $v_i$, the algorithm \textsc{PCF-ColorPIG}($G$) checks whether $v_{\min[v_i]}$ has a unique color in $N(v_{\min[v_i]})$ when $N(v_{\min[v_i]})\subseteq \{v_1, \ldots, v_i\}$. This can be done by checking each neighbor of $v_{\min[v_i]}$. Thus, it takes $O(\deg(v_{\min[v_i]}))$ time. If recoloring of the vertex $v_i$ is needed, the algorithm \textsc{PCF-ColorPIG}($G$) again finds the smallest indexed color distinct from $c$ that does not appear on any vertex of $N(v_i)$. This takes $O(\deg(v_i))$ time. Therefore, the total running time of all iterations is $\sum_{v \in V} O(\deg(v)) = O(|E|)$.
	Hence, the algorithm \textsc{PCF-ColorPIG}($G$) runs in $O(|V|+|E|)$ time.
\end{proof}

\begin{theorem}\label{pig_bound}
	$\chi_{pcf}(G)\leq\omega(G)+1$. Moreover, $\chi_{pcf}(G)=\omega(G)+1$ if and only if $G$ is a special proper interval graph.
\end{theorem}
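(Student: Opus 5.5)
The plan is to derive Theorem~\ref{pig_bound} directly from the three preceding lemmas, in the same way Theorem~\ref{block_bound_char} was obtained for block graphs. Let $G$ be a proper interval graph and run \textsc{PCF-ColorPIG}($G$) on a BCO $\alpha$ of $G$.

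\textbf{Step 1 (upper bound).} By Lemma~\ref{special-pig_optimal}, the output $f$ is a PCF coloring of $G$. By the preceding lemma on the number of colors:
\begin{itemize}
\item if $G$ is not special, $f$ uses at most $\omega(G)$ colors;
\item if $G$ is special, $f$ uses at most $\omega(G)+1$ colors.
\end{itemize}
In either case $\chi_{pcf}(G)\le\omega(G)+1$.

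\textbf{Step 2 (lower bound).} Every PCF coloring is proper, and proper interval graphs are perfect (they are chordal). Hence $\chi_{pcf}(G)\ge\chi(G)=\omega(G)$.

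\textbf{Step 3 (characterization).}
\begin{itemize}
\item If $G$ is special, Lemma~\ref{spcl_pig_bound} gives $\chi_{pcf}(G)\ge\omega(G)+1$. Combined with Step~1, this yields equality.
\item If $G$ is not special, Step~1 and Step~2 together give $\chi_{pcf}(G)=\omega(G)\neq\omega(G)+1$.
\end{itemize}
This establishes the ``if and only if''.

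\textbf{Main obstacle.} The theorem itself is routine once the lemmas are in hand; the real content is the non-special case of the counting lemma. Recoloring $v_i$ is triggered only when $u=v_{\min[v_i]}$ has no uniquely colored neighbor once $v_i$ is colored. In that situation the back part $A=N(u)\cap\{v_1,\dots,v_{k}\}$ and the forward part $B=N(u)\cap\{v_{k},\dots,v_n\}$, where $u=v_k$, are cliques with $f(A)=f(B)$.

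To justify that the recoloring never needs color $\omega(G)+1$ in the non-special case, argue as follows.
\begin{itemize}
\item Suppose instead that $v_i$ must avoid $\omega(G)$ colors. Then $|B\cup\{u\}|=\omega(G)$, so $B\cup\{u\}$ is a maximum clique.
\item Because $f(A)=f(B)$ forces $|A|=|B|$, the set $A\cup\{u\}$ is also a maximum clique.
\item In a BCO the two cliques $A\cup\{u\}$ and $B\cup\{u\}$ meet only in $u$. This contradicts non-specialness.
\end{itemize}
I would verify this intersection property carefully, using the umbrella property of the BCO: no vertex before $u$ is adjacent to any vertex after $u$ when both are neighbors of $u$ and $f(A)=f(B)$ is tight. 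This is the step where I expect the argument to need the most care.
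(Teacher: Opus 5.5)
Your proposal is correct and takes the paper's route: the theorem follows from Lemma~\ref{special-pig_optimal}, the subsequent color-count lemma, and Lemma~\ref{spcl_pig_bound}. You are in fact more explicit than the paper, since you use the non-special case of the color-count lemma together with $\chi_{pcf}(G)\ge\chi(G)=\omega(G)$ to get the ``only if'' direction, which the paper's proof leaves implicit.

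One correction to your closing aside: that $A\cup\{u\}$ and $B\cup\{u\}$ meet only in $u$ is immediate, because $A$ (earlier neighbors) and $B$ (later neighbors) are disjoint by definition. The ``umbrella'' claim you planned to use is both unnecessary and false. Take $v_1,\dots,v_5$ with $v_iv_j\in E$ iff $|i-j|\le 2$ and $u=v_3$: every proper $3$-coloring has $f(A)=f(B)$, yet $v_2\in A$ is adjacent to $v_4\in B$.
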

\begin{proof}
	Since the algorithm \textsc{PCF-ColorPIG}($G$) returns a PCF coloring of a proper interval graph $G$ with at most $\omega(G)+1$ colors, $\chi_{pcf}(G) \leq \omega(G)+1$.  By Claim~\ref{spcl_pig_bound}, $\chi_{pcf}(G)\geq \omega(G)+1$ when $G$ is a special proper interval graph. Thus, $\chi_{pcf}(G)=\omega(G)+1$ if and only if $G$ is a special proper interval graph.
\end{proof}

\subsection{Chain graphs}

A bipartite graph $G=(X,Y,E)$ is a \emph{chain graph} if there exist orderings $\sigma_X=(x_1,\ldots,x_p)$ of $X$ and $\sigma_Y=(y_1,\ldots,y_q)$ of $Y$ such that $N(x_1)\subseteq N(x_2)\subseteq \cdots\subseteq N(x_p)$ and $N(y_1)\supseteq N(y_2)\supseteq \cdots\supseteq N(y_q)$.
Let $G=(X,Y,E)$ be a chain graph with chain orderings $\sigma_X=(x_1, x_2, \ldots, x_p)$ and $\sigma_Y=(y_1, y_2, \ldots, y_q)$ such that $N(x_1) \subseteq N(x_2) \subseteq \ldots \subseteq N(x_p)$ and $N(y_1) \supseteq N(y_2) \supseteq \ldots \supseteq N(y_q)$. 

\begin{obs}\label{chain_bound}
	$\chi_{pcf}(G)\leq 4$.
\end{obs}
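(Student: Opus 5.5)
The plan is to write down one explicit coloring that uses the two nested neighbourhood orderings. Since all graphs are assumed connected, the first step is to show that $x_p$ is adjacent to every vertex of $Y$ and $y_1$ is adjacent to every vertex of $X$.

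For the first claim, let $y\in Y$. Since $G$ is connected and has an edge, $y$ has some neighbour $x_i$. As $N(x_i)\subseteq N(x_p)$, we get $y\in N(x_p)$, so $N(x_p)=Y$. The symmetric argument with $N(y_1)\supseteq N(y_j)$ for all $j$ gives $N(y_1)=X$.

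Next I would define $f:V(G)\to[4]$ by $f(x_p)=1$, $f(x_i)=2$ for $i<p$, $f(y_1)=3$ and $f(y_j)=4$ for $j>1$. Every edge of $G$ joins $X$ to $Y$, and $f(X)\subseteq\{1,2\}$ is disjoint from $f(Y)\subseteq\{3,4\}$. Hence $f$ is a proper coloring.

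For the conflict-free condition, take any $y\in Y$. Then $x_p\in N(y)$ is the only vertex of $G$ with color $1$, so color $1$ appears exactly once in $N(y)$. Likewise, for any $x\in X$ we have $y_1\in N(x)$, and $y_1$ is the only vertex with color $3$, so color $3$ appears exactly once in $N(x)$. The degenerate cases $p=1$ or $q=1$ cause no trouble, since the scheme then simply uses fewer colors.

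Therefore $f$ is a PCF coloring with at most $4$ colors, and $\chi_{pcf}(G)\leq 4$. The only real point in the argument is the universality of $x_p$ and $y_1$, which follows directly from connectedness and the inclusion chains.
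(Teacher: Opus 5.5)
Your proposal is correct and matches the paper's proof: the paper uses the same coloring, one color each for $X\setminus\{x_p\}$ and $Y\setminus\{y_1\}$ and private colors on $x_p$ and $y_1$. It then uses $N(x_p)=Y$ and $N(y_1)=X$ in the same way. The only difference is that you justify these two equalities from connectedness and the inclusion chains, which the paper asserts without argument.
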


\begin{proof}
	We define a vertex coloring $c$ such that $c(x) = 1$ for each vertex $x\in X\setminus\{x_p\}$, $c(y) = 2$ for each vertex $y\in Y\setminus \{y_1\}$, $c(x_p) = 3$, and $c(y_1) = 4$. We now show that the coloring $c$ is a PCF coloring.  
	Since $c(x) \neq c(y)$ for each edge $e = (x, y) \in E(G)$, $c$ is a proper coloring. Since $N(y_1) = X$ and $c(y_1) = 4$ occurs only once in $Y$, $4$ is a unique color in $N(x)$ for each vertex $x\in X$. Since $N(x_p) = Y$ and $c(x_p) = 3$ occurs only once in $X$, $3$ is a unique color in $N(y)$ for each vertex  $y\in Y$. Thus, every vertex $v \in V(G)$ has a unique color in $N(v)$. Therefore, $c$ is a PCF $4$-coloring. Hence,  $\chi_{pcf}(G)\leq4$.
\end{proof}

\begin{obs}{\label{3pcf-nbd_color}}
	If $c$ is a PCF $3$-coloring of a graph $G$, then every vertex $v\in V(G)$ with $deg(v) \geq 2$ sees exactly two colors in its neighborhood.
\end{obs}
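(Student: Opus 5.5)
The argument is a short pigeonhole count on the neighborhood of a vertex $v$ with $\deg(v)\geq 2$. Let $c$ be a PCF $3$-coloring with colors $\{1,2,3\}$ and suppose $c(v)=1$. Since $c$ is proper, no neighbor of $v$ receives color $1$, so every color appearing in $N(v)$ lies in $\{2,3\}$. Hence $v$ sees at most two colors, and it remains to rule out that it sees only one.

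Suppose every vertex of $N(v)$ has the same color, say $2$. Since $\deg(v)\geq 2$, color $2$ then appears at least twice in $N(v)$, and no other color appears there at all. So no color appears exactly once in $N(v)$. Because $v$ is non-isolated, this contradicts the PCF condition at $v$. Therefore both colors $2$ and $3$ appear in $N(v)$, and $v$ sees exactly two colors, namely the two colors different from $c(v)$.

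There is no real obstacle here. The only points to state explicitly are that properness removes $c(v)$ from $N(v)$, and that the hypothesis $\deg(v)\geq 2$ is what makes a single repeated color fail the uniqueness requirement. For a vertex of degree $1$ the single neighbor supplies a unique color, which is why the degree hypothesis is needed. It is worth remarking after the proof that $N(v)$ always uses exactly the color set $\{1,2,3\}\setminus\{c(v)\}$, since this is the form in which the observation will most likely be applied to chain graphs.
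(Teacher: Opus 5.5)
Your proof is correct and complete. The paper states this observation without proof, and your argument is the natural one it leaves implicit: properness keeps $c(v)$ out of $N(v)$, and $\deg(v)\geq 2$ rules out a single repeated color, so both remaining colors must appear.
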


\begin{theorem}\label{chain_characterization_pcf}
	Let $G$ be a chain graph. Then,
	\begin{enumerate}
		\item $\chi_{pcf}(G) = 1$ if and only if $|E(G)|=0$.
		\item $\chi_{pcf}(G) = 2$ if and only if $|E(G)|=1$.
		\item $\chi_{pcf}(G) = 3$ if and only if $|E(G)|\geq 2$ and $G$ has at most two vertices of degree at least $2$.
		\item $\chi_{pcf}(G) = 4$; otherwise.
	\end{enumerate}
\end{theorem}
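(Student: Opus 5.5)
The plan is to handle items (a) and (b) directly, then prove the lower bound $\chi_{pcf}(G)\ge 3$ whenever $|E(G)|\ge 2$, give explicit PCF $3$-colorings for (c), and for (d) rule out PCF $3$-colorings and invoke Observation~\ref{chain_bound}. Recall that $G$ is connected, so $N(x_p)=Y$ and $N(y_1)=X$.

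\textbf{Items (a), (b) and the lower bound.} If $|E(G)|=0$ then $G$ is a single vertex, and one color suffices. Conversely, one color is not proper once there is an edge. If $|E(G)|=1$ then $G=K_2$, and two colors are necessary and sufficient. If $|E(G)|\ge 2$, connectivity gives a vertex $v$ of degree at least $2$. In any proper $2$-coloring of the connected bipartite graph $G$, all of $N(v)$ receives the same color, so no color appears exactly once in $N(v)$. Hence $\chi_{pcf}(G)\ge 3$.

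\textbf{Item (c), upper bound.} Assume at most two vertices have degree at least $2$. If exactly one vertex has degree at least $2$, then $G$ is a star. Color the center $1$, one leaf $3$ and the remaining leaves $2$; the center then sees color $3$ exactly once, and every leaf trivially has a unique color. Otherwise the two vertices of degree at least $2$ must be $x_p$ and $y_1$, and every other vertex is a leaf, so $G$ is a double star. Color $x_p$ with $1$, $y_1$ with $2$, and all leaves with $3$. This is proper, $x_p$ sees color $2$ exactly once, and $y_1$ sees color $1$ exactly once.

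\textbf{Item (d).} By Observation~\ref{chain_bound} it suffices to show that no PCF $3$-coloring exists when at least three vertices have degree at least $2$. By the symmetry between $X$ and $Y$ we may assume that two of them lie in $X$. By nestedness these are $x_{p-1}$ and $x_p$, and both are adjacent to $y_1$ and $y_2$. Consequently $y_1$ and $y_2$ also have degree at least $2$.

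Suppose $c$ is a PCF $3$-coloring. By Observation~\ref{3pcf-nbd_color}, the vertex $x_p$ sees exactly two colors on $Y$, forming a set $P$, and $y_1$ sees exactly two colors on $X$, forming a set $Q$. Since there are only three colors, $P\cap Q=\{t\}$ for a single color $t$.

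\begin{itemize}
\item The neighbors of $y_2$ are $x_i,\dots,x_p$ for some $i$, and by Observation~\ref{3pcf-nbd_color} they must carry both colors of $Q$. Hence $c(y_2)\notin Q$, so $c(y_2)=s$, the unique color of $P\setminus Q$.
\item Symmetrically, $x_{p-1}$ sees both colors of $P$, so $c(x_{p-1})$ is the unique color of $Q\setminus P$. The same reasoning gives $c(y_1)=s$.
\item Since $y_2$ sees both colors of $Q$, some neighbor $x_j$ of $y_2$ has $c(x_j)=t$.
\item By nestedness, $x_j$ is adjacent to both $y_1$ and $y_2$, so $\deg(x_j)\ge 2$. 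By Observation~\ref{3pcf-nbd_color}, $x_j$ must see both colors of $P$, and in particular a neighbor colored $t$. This contradicts $c(x_j)=t$ and properness.
\end{itemize}

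The main obstacle is this last contradiction. The key is to locate a vertex forced to have color $t$ whose neighborhood is forced to contain color $t$, and nestedness of the neighborhoods is what makes this work.

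Combining the four items: $\chi_{pcf}$ equals $1$, $2$ or $3$ exactly in the stated cases, and equals $4$ otherwise.
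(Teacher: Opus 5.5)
Your proposal is correct and follows essentially the same route as the paper. You use the same explicit colorings in the star and double-star cases and Observation~\ref{chain_bound} for the upper bound. You also reach the same contradiction: a vertex $x_j\in N(y_2)$ colored differently from $x_p$ (your color $t$) has at least two neighbors, all inside $Y=N(x_p)$, and they are forced to share a single color. The bullets identifying $s$ are unnecessary, since only $t\in P\cap Q$ is used. Choosing $x_j$ explicitly inside $N(y_2)$ makes precise the condition $\deg(x_j)\ge 2$, which the paper's write-up leaves implicit.
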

\begin{proof}
	It is clear that $\chi_{pcf}(G) = 1$ if and only if $|E(G)|=0$ and $\chi_{pcf}(G) = 2$ if and only if $|E(G)|=1$. Therefore, $\chi_{pcf}(G)\geq 3$ if $|E(G)| \geq 2$. 
	Since $|E(G)|\geq 2$, $G$ has a vertex with degree at least $2$. Let chain orderings of $G$ be $\sigma_X = (x_1, x_2, \ldots, x_p)$ and $\sigma_Y = (y_1, y_2, \ldots, y_q)$ such that $N(x_1) \subseteq N(x_2) \subseteq \ldots \subseteq N(x_p)$ and $N(y_1) \supseteq N(y_2) \supseteq \ldots \supseteq N(y_q)$. 
	%First, assume that $G$ does not have any vertex with degree at least $2$. Then, $|E(G)| = 1$ and $\chi_{pcf}(G)=2$.
	First, assume that $G$ has exactly one vertex of degree at least $2$, say $u$. Without loss of generality, assume that $u\in X$.
	Since $N(x_1) \subseteq N(x_2) \subseteq \ldots \subseteq N(x_p)$ and $N(y_1) \supseteq N(y_2) \supseteq \ldots \supseteq N(y_q)$, we have $u=x_p$. Since every vertex of $Y$ is a pendant vertex, $|X| = 1$.
	Define a coloring $c$ such that $c(x_p) = 1$, $c(y) = 2$ for every vertex $y\in Y\setminus \{y_1\}$, and $c(y_1) = 3$. Clearly, $x_p$ has a unique color in $N(x_p)$ assigned to vertex $y_1$. Since every vertex $y\in Y$ is a pendant vertex, $y\in Y$ has a unique color in $N(y)$.
	
	\medskip
	
	Now, assume that $G$ has exactly two vertices of degree at least $2$. Since $N(x_1) \subseteq N(x_2) \subseteq \cdots \subseteq N(x_p)$ and $N(y_1) \supseteq N(y_2) \supseteq \cdots \supseteq N(y_q)$, $x_p$ has the maximum degree among all vertices of $X$ and $y_1$ has the maximum degree among all vertices of $Y$. Therefore, $x_p$ and $y_1$ are the vertices of degree at least $2$. %Let $u = x_p$ and $v = y_1$.
	Define a coloring $c$ such that $c(x_p) = 1$, $c(y_1) = 2$, and $c(v) = 3$ for every vertex $v\in V(G)\setminus\{x_p, y_1\}$. Clearly, $x_p$ has a unique color in $N(x_p)$ assigned to vertex $y_1$. Also, $y_1$ has a unique color in $N(y_1)$ assigned to vertex $x_p$. 
	Since every vertex $v\in V(G)\setminus\{x_p, y_1\}$ is a pendant vertex, every vertex $v\in V(G)\setminus\{x_p, y_1\}$ has a unique color in $N(v)$.
	
	\medskip
	
	Now, assume that $G$ has three vertices, say $u$, $v$, and $w$ of degree at least $2$. Clearly, at least two of these vertices belong to one partition, say $X$. Consequently, $x_p$, $x_{p-1}$, and $y_1$ are vertices with degree at least~$2$. 
	Since $G$ is a chain graph and $x_p$ and $x_{p-1}$ are vertices with degree at least $2$, degrees of $y_1$ and $y_2$ are at least~$2$. Also, the degree of every vertex $x\in N(y_2)$ is at least~$2$. By Observation~\ref{3pcf-nbd_color}, we need two colors, say $c_1$ and $c_2$, in $N(y_2)$. Let $x_j\in X$ be a vertex which is assigned a color different than the color assigned to $x_p$. Since $N(x_j)\subseteq N(x_p)$, $c_1$ and $c_2$ do not appear on any vertex $y\in N(x_j)$. Consequently, we need two new colors distinct from $c_1$ and $c_2$ to color the neighborhood of the vertex $x_j$. Therefore, $\chi_{pcf}(G) \geq 4$. By Observation~\ref{chain_bound}, $\chi_{pcf}(G) \leq 4$. 
	Hence, Theorem~\ref{chain_characterization_pcf} holds.    
\end{proof}

We can find a linear-time algorithm to compute an optimal PCF coloring for chain graphs as shown in the proof of Observation~\ref{chain_bound} and Theorem~\ref{chain_characterization_pcf}.

\subsection{Pseudo-split graphs}

A graph $G$ is \emph{pseudo-split} if $V(G)$ can be partitioned into three (possibly empty) sets $C$, $S$, and $I$ such that 
\begin{enumerate}
	\item $C$ induces a clique, $I$ induces an independent set, and $S$ induces a $C_5$,
	\item every vertex of $C$ is adjacent to every vertex of $S$,
	\item every vertex of $I$ is non-adjacent to every vertex of $S$.
\end{enumerate}

Let $G$ be a pseudo-split graph with $V(G) = S \cup C \cup I$, where $C$ induces a clique, $I$ induces an independent set, and $S$ induces a $C_5$. If $C=\emptyset$ and $S=\emptyset$, then $|V(G)| = \emptyset$. If $C=\emptyset$ and $S\neq\emptyset$, then $G\cong C_5$. Thus, $\chi_{pcf}(G) = 5$. 

\medskip

From now on, we consider $G\ncong C_5$.

\begin{theorem}\label{Sneqemptyset}
	Let $G$ be a pseudo-split graph with $V(G) = S \cup C \cup I$, where $C\neq\emptyset$ induces a clique, $I$ induces an independent set, and $S\neq\emptyset$ induces a $C_5$. Then $\chi_{pcf}(G) = \omega(G)+1$.
\end{theorem}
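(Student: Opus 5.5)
The plan is to show the two inequalities $\chi_{pcf}(G)\ge \omega(G)+1$ and $\chi_{pcf}(G)\le\omega(G)+1$ separately, after first computing $\omega(G)$ exactly.

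\textbf{Computing $\omega(G)$.} I claim $\omega(G)=|C|+2$.
\begin{itemize}
\item Since $S$ induces a $C_5$, a clique meets $S$ in at most two vertices.
\item A clique meets $I$ in at most one vertex, and a vertex of $I$ has no neighbours in $S$.
\item Hence every clique has size at most $\max\{|C|+2,\,|C|+1\}=|C|+2$.
\item An edge of $S$ together with $C$ is a clique of size $|C|+2$, since $C$ is complete to $S$.
\end{itemize}

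\textbf{Lower bound.} It is enough to show $\chi(G)\ge |C|+3$. Every vertex of $C$ is adjacent to every other vertex of $C\cup S$. So in any proper coloring, the colors used on $C$ are pairwise distinct and disjoint from those used on $S$. The odd cycle $G[S]\cong C_5$ needs three colors. Thus $\chi_{pcf}(G)\ge\chi(G)\ge |C|+3=\omega(G)+1$.

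\textbf{Upper bound.} Let $S$ induce the cycle $s_1s_2s_3s_4s_5s_1$ and $C=\{u_1,\dots,u_m\}$. Define
\[ f(s_1)=f(s_3)=1,\quad f(s_2)=f(s_4)=2,\quad f(s_5)=3,\quad f(u_j)=j+3 \text{ for } j\in[m],\quad f(w)=1 \text{ for } w\in I. \]
This uses $m+3=\omega(G)+1$ colors. It is proper: the colors on $S$ form a proper coloring of $C_5$, the colors on $C$ are distinct and larger than $3$, and $I$ is independent with neighbours only in $C$.

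Next I check the unique-color condition for each type of vertex.
\begin{itemize}
\item For $s\in S$: $N(s)\supseteq C\neq\emptyset$ and $N(s)$ contains no vertex of $I$. The colors $4,\dots,m+3$ each occur exactly once in $N(s)$, because the two $S$-neighbours of $s$ have colors in $\{1,2,3\}$.
\item For $u\in C$: $N(u)\supseteq S$, and color $3$ occurs only on $s_5$. No vertex of $C$ or $I$ carries color $3$, so color $3$ is unique in $N(u)$.
\item For $w\in I$ non-isolated: $N(w)\subseteq C$ is a clique with distinct colors, so any of them is unique.
\end{itemize}
Hence $f$ is a PCF $(\omega(G)+1)$-coloring.

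The only delicate point is choosing the colors on $I$ so that color $3$ stays unique in $N(u)$ for every $u\in C$. Giving $I$ a color from $\{1,2\}$ handles this, because these colors already occur on $S$ and never on $C$.
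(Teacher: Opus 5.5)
Your proof is correct and follows essentially the same route as the paper. The lower bound comes from $C$ being complete to the $C_5$ on $S$, which forces $|C|+3$ colors. The upper bound uses the paper's coloring up to a permutation of colors: a $1,2,1,2,3$ pattern on $S$, distinct colors on $C$, and $I$ reusing a color of $S$ so that the color of $s_5$ stays unique for every vertex of $C$. Your explicit check that $\omega(G)=|C|+2$ is a small addition that the paper leaves implicit.
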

\begin{proof}
	Let $S = \{s_1, s_2, s_3, s_4, s_5\}$, $C = \{c_1, c_2, \ldots, c_k\}$, and $I = \{x_1, x_2, \ldots, x_r\}$. Since $C$ is a clique with $k$ vertices, at least $k$ colors are needed to color the vertices of $C$. Since every vertex of $C$ is adjacent to every vertex of $S$ and $S$ induces a $C_5$, at least three new colors are needed to color the vertices of the set $S$. %Since $C\cup \{s_1, s_2\}$ forms the maximum clique in $G$, $\omega(G) = k+2$.
	Therefore, $\chi_{pcf}(G)\geq\omega(G)+1$.
	
	\medskip
	
	Let $f$ be a coloring such that $f(c_i) = i$ for each $c_i \in C$, $f(x_i) = k+1$ for each $x_i\in I$, $f(s_1) = f(s_3)=k+1$, $f(s_2) = f(s_4)=k+2$, and $f(s_5) = k+3$. 
	Now, we show that $f$ is a PCF coloring of $G$. Since adjacent vertices are assigned different colors, $f$ is a proper coloring. Since every vertex $v\in C$ is adjacent to every vertex $v'\in S$, $k+3$ is a unique color in $N(v)$ for each vertex $v\in C$. Since every vertex $v'\in S$ is adjacent to every vertex $v\in C$, $f(c_1)$ is a unique color in $N(v')$ for each vertex $v'\in S$. Since $N(v'')$ is a clique for each $v''\in I$, each vertex $v''\in I$ has a unique color in $N(v'')$. Therefore, $f$ is a PCF coloring of the graph $G$. Clearly, $f$ uses $\omega(G)+1$ colors.
	Thus, $\chi_{pcf}(G)=\omega(G)+1$.
\end{proof}

Now, we consider the case when $S$ is empty, that is, $G$ is a split graph. 

\begin{obs}
	If $|C|=1$, then $G$ is a star and $\chi_{pcf}(G) = \begin{cases} 
		2, & \text{ if } G\cong K_2; \\
		3, & \text{ otherwise }.
	\end{cases}$
\end{obs}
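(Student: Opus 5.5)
We are in the case $S=\emptyset$ and $|C|=1$, say $C=\{c\}$. The plan is to first pin down the structure of $G$, then give matching upper and lower bounds.

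\emph{Structure.} Since $G$ is connected and $I$ is independent, every vertex of $I$ must be adjacent to $c$. There are no other edges, because $I$ is independent and $S$ is empty. Hence $G\cong K_{1,r}$ with centre $c$, where $r=|I|$. The case $r=0$, a single vertex, is excluded implicitly or treated as trivial, so we assume $r\geq 1$.

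\emph{The case $G\cong K_2$.} If $r=1$, then $G\cong K_2$. A proper coloring needs two colors. With two colors, each endpoint sees exactly one neighbor, so its color is unique. Hence $\chi_{pcf}(G)=2$.

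\emph{The case $r\geq 2$, lower bound.} Every proper coloring gives $c$ one color and every leaf a different color. Suppose only two colors are used. Then all $r\geq 2$ leaves carry the same color, so that color appears at least twice in $N(c)$, and $c$ has no unique color. Thus $\chi_{pcf}(G)\geq 3$.

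\emph{The case $r\geq 2$, upper bound.} Color $c$ with $1$, one leaf with $3$, and all other leaves with $2$. This is proper. Color $3$ appears exactly once in $N(c)$. Each leaf has the single neighbor $c$, so its neighbor's color is trivially unique. Hence $\chi_{pcf}(G)=3$.

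This is the same construction used in the first case of the proof of Theorem~\ref{chain_characterization_pcf}; indeed a star is a chain graph, so that theorem applies directly as an alternative argument. There is no real obstacle. The only point needing care is the structural claim that connectivity forces every vertex of $I$ to be adjacent to $c$.
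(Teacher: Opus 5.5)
Your proof is correct and follows essentially the same approach as the paper. The paper states this observation without proof, and your $3$-coloring of $K_{1,r}$ (centre $1$, one leaf $3$, remaining leaves $2$) is exactly the coloring used in the first case of the proof of Theorem~\ref{chain_characterization_pcf}; your two-color lower bound is the natural justification of what the paper calls ``clear'' there. Your remark about $I=\emptyset$ is also apt: there $G\cong K_1$ and $\chi_{pcf}(G)=1$, so the ``otherwise'' branch of the statement tacitly assumes $G$ has an edge.
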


From now on, we assume that $|C|\geq 2$.
\begin{defi}[\textsf{Special split graph}]\label{special_split}
	A split graph $G$ is \emph{special} if there exist vertices $v_i, v_j \in C$ such that 
	\begin{enumerate}
		\item if $x\in I$ is not adjacent to $v_i$, then $x$ has a non-neighbor in $C\setminus\{v_i\}$.
		\item if $y\in N(v_i)\cap I$ is not adjacent to $v_j$, then $y$ has a non-neighbor in $C\setminus\{v_j\}$.
	\end{enumerate}
\end{defi}

\begin{theorem}\label{split}
	Let $G$ be a split graph with $V(G) = C \cup I$, where $C$ is a maximal clique with $|C|\geq 2$ and $I$ is an independent set. Then 
	\begin{enumerate}
		\item $\chi_{pcf}(G) = \omega(G)$ if and only if $G$ is special.
		\item $\chi_{pcf}(G) = \omega(G)+1$ if and only if $G$ is not special.
	\end{enumerate}
\end{theorem}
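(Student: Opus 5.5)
The plan is to reduce the theorem to the single statement that \emph{$\chi_{pcf}(G)=\omega(G)$ if and only if $G$ is special}. The second item then follows from the universal bounds $\omega(G)\le\chi_{pcf}(G)\le\omega(G)+1$.

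\textbf{Preliminaries.} Since $C$ is a maximal clique, no vertex of $I$ is adjacent to all of $C$, so $\omega(G)=|C|=:k$ and every $x\in I$ has a non-neighbor in $C$. The lower bound $\chi_{pcf}(G)\ge\chi(G)=k$ is immediate. For the upper bound I would color $C$ with $1,\dots,k$ and every vertex of $I$ with $k+1$. A vertex of $I$ sees a clique, so all colors in its neighborhood are distinct. A vertex $v\in C$ sees each color of $f(C\setminus\{v\})$ exactly once, and this set is nonempty because $|C|\ge 2$. Hence $\chi_{pcf}(G)\le k+1$.

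\textbf{Structure of a PCF $k$-coloring.} In any proper $k$-coloring $f$, the clique $C$ uses every color exactly once. Each $x\in I$ must get the color $f(u)$ of some non-neighbor $u\in C$. Vertices of $I$ are automatically fine, since their neighborhoods are cliques. For $v\in C$, the neighborhood $N(v)$ consists of $C\setminus\{v\}$, which carries every color except $f(v)$ once, together with $N(v)\cap I$, whose colors are all different from $f(v)$. So $v$ has a unique color if and only if there is some $u\in C\setminus\{v\}$ such that no vertex of $N(v)\cap I$ has color $f(u)$.

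\textbf{Special $\Rightarrow$ PCF $k$-colorable.} Take $v_i,v_j$ as in Definition~\ref{special_split} and color $C$ bijectively with $[k]$. There are two kinds of $I$-vertices to color.
\begin{itemize}
\item For $x\in I$ not adjacent to $v_i$, give $x$ the color of a non-neighbor in $C\setminus\{v_i\}$, which exists by condition~1.
\item For $y\in N(v_i)\cap I$, give $y$ the color of a non-neighbor different from $v_j$. Such a non-neighbor exists by condition~2 when $y$ is not adjacent to $v_j$. It exists trivially when $y$ is adjacent to $v_j$, because then any non-neighbor of $y$ differs from $v_j$; and some non-neighbor exists by maximality of $C$.
\end{itemize}
Then no vertex of $I$ has color $f(v_i)$. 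Vertices adjacent to $v_i$ cannot have it, and non-neighbors of $v_i$ were chosen to avoid it. Hence every $v\in C\setminus\{v_i\}$ sees $f(v_i)$ only on $v_i$. Also, no $I$-neighbor of $v_i$ has color $f(v_j)$, so $f(v_j)$ is unique in $N(v_i)$. This gives a PCF $k$-coloring.

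\textbf{PCF $k$-colorable $\Rightarrow$ special (main obstacle).} Given a PCF $k$-coloring $f$, I would choose the pair $(v_i,v_j)$ with $v_j$ the vertex whose color is unique in $N(v_i)$. The characterization above then gives condition~2 directly: a vertex $y\in N(v_i)\cap I$ not adjacent to $v_j$ cannot have color $f(v_j)$, so its color is that of a non-neighbor other than $v_j$.

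The hard part is choosing $v_i$ so that condition~1 also holds. Condition~1 fails at $v_i$ exactly when some $x\in I$ has $C\setminus\{v_i\}$ as its whole neighborhood. In that case $x$ is forced to have color $f(v_i)$, so $f(v_i)$ appears twice in $N(v)$ for every $v\ne v_i$.

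I would argue by contradiction. Suppose every $v\in C$ has such a vertex $x_v\in I$ with $N(x_v)=C\setminus\{v\}$. Then for each $v\in C$, the vertices $x_u$ with $u\ne v$ are all adjacent to $v$, and they carry every color of $C\setminus\{v\}$. So every color in $N(v)$ appears at least twice, and $v$ has no unique color, a contradiction. Hence some $v_i$ satisfies condition~1, and I take $v_j$ to be the owner of its unique color.

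The remaining worry is that condition~1 for $v_i$ is stated for all non-neighbors of $v_i$, not only for those adjacent to all of $C\setminus\{v_i\}$. I expect the two to coincide, since a non-neighbor of $v_i$ that is also non-adjacent to some other vertex of $C$ already has a non-neighbor in $C\setminus\{v_i\}$. I would check this carefully, because this equivalence is where the argument is most likely to break.
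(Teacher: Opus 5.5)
Your proposal is correct and follows essentially the paper's argument. It uses the same explicit coloring in the special case. For the converse it uses the same two observations: if every $v\in C$ has some $x_v\in I$ with $N(x_v)=C\setminus\{v\}$, then no vertex of $C$ has a unique color; and the owner of $v_i$'s unique color can serve as $v_j$. The paper phrases these contrapositively, as a claim that non-special graphs need $\omega(G)+1$ colors.

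The equivalence you flag at the end holds immediately and closes the argument. Negating condition~1 at $v_i$ gives some $x\in I$ that is non-adjacent to $v_i$ and adjacent to all of $C\setminus\{v_i\}$. Since $N(x)\subseteq C$, this says exactly that $N(x)=C\setminus\{v_i\}$.
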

\begin{proof}
	
	To prove (a) and (b), we first prove the following claim.
	\begin{claim}\label{bound_pseudo-split}
		If $G$ is not special, then $\chi_{pcf}(G)\geq \omega(G)+1$.
	\end{claim}
	\begin{proof}
		For the sake of contradiction, assume that there exists a PCF $\omega(G)$-coloring, say $f$.
		Let $G$ be a split graph which is not special. Let $C = \{c_1, c_2, \ldots, c_k\}$. 
		First, assume that $v_i$ does not exist. 
		Consequently, there exists a vertex $x_i\in I$ such that $N(x_i) = C\setminus\{c_i\}$ for each vertex $c_i\in C$. Since $C$ is a clique, each vertex of $C$ must be assigned distinct colors under $f$. Without loss of generality, assume that $f(c_i) = i$ for each $c_i\in C$. Clearly, $x_i$ must be assigned color $i$ for $f$ to be a proper $k$-coloring. Therefore, $f(x_i) = i$ for every vertex $x_i$, $i\in [k]$.
		It is easy to see that none of the vertices of $C$ has a unique color in its neighborhood. This is a contradiction to the fact that there exists a PCF $\omega(G)$-coloring. Thus, $\chi_{pcf}(G)\geq \omega(G)+1$.
		
		\medskip
		
		Now, assume that $v_i$ exists but $v_j$ does not exist.  Since $C$ is a clique, each vertex of $C$ must be assigned distinct colors under $f$. Without loss of generality, assume that $f(c_i) = i$ for each $c_i\in C$. 
		Since $v_j$ does not exist, there exists a vertex $x_j\in N(v_i)\cap I$ such that $N(x) = C\setminus\{v_j\}$ for each vertex $c_j\in C\setminus\{v_i\}$. Note that $x_j$ must be assigned the color $j$ for $f$ to be a proper $k$-coloring. 
		Therefore, each color appears twice in $N(v_i)$. Thus, $v_i$ does not have a unique color in its neighborhood. 
		This is a contradiction to the fact that there exists a PCF $\omega(G)$-coloring. Thus, $\chi_{pcf}(G)\geq \omega(G)+1$.
		This completes the proof of Claim~\ref{bound_pseudo-split}.
	\end{proof}
	
	Now, we prove statements (a) and (b).
	
	\medskip
	
	(a) The necessary part follows by Claim~\ref{bound_pseudo-split}. To prove the sufficiency, let $G$ be a special split graph. Then, either $|C| = 1$ or there exist vertices $v_i, v_j \in C$ such that 
	\begin{itemize}
		\item if $x\in I$ is not adjacent to $v_i$, then $x$ has a non-neighbor in $C\setminus\{v_i\}$.
		\item if $y\in N(v_i)\cap I$ is not adjacent to $v_j$, then $y$ has a non-neighbor in $C\setminus\{v_j\}$.
	\end{itemize}
	
	Let $C = \{c_1, c_2, \ldots, c_k\}$.
	Let $f$ be a coloring such that $f(c_i) = i$ for each $c_i \in C$. Since $C$ is a maximal clique, every vertex of the set $I$ is non-adjacent to some vertex of $C$. Now color every vertex $v\in I\setminus N(v_i)$ with a color from the set $\{1, 2, \ldots, k\}$ different than the colors assigned to its neighbors and $f(v_i)$. Since there is no vertex $x\in I$ such that $f(v_i)$ is the only color available from the set $[k]$ for $x$ due to properness, we can color each vertex $x\in I\setminus N(v_i)$ with a color distinct from its neighbors and $f(v_i)$. 
	Finally, color every vertex $v\in N(v_i)\cap I$ with a color from the set $\{1, 2, \ldots, k\}$ different than the colors assigned to its neighbors and $f(v_j)$. Since there is no vertex $x'\in N(v_i)\cap I$ such that $f(v_j)$ is the only color available from the set $[k]$ for $x'$ due to properness, we can color each vertex $x\in N(v_i)\cap I$ with a color distinct from its neighbors and $f(v_j)$. 
	Clearly, $f$ uses $\omega(G)$ colors. Since every vertex is assigned a color different from the colors assigned to its neighbors, $f$ is a proper coloring. Since $N(v)$ is a clique for each vertex $v\in I$, each vertex $v\in I$ has a unique color in $N(v)$. Since the color $f(v_i)$ does not appear on any vertex of the set $I$, $f(v_i)$ is a unique color in $N(x)$ for each vertex $x\in C\setminus\{v_i\}$. Since the color $f(v_j)$ does not appear on any vertex $v'\in N(v_i)\cap I$, $f(v_j)$ is a unique color of $v_i$ in $N(v_i)$. Since we need at least $\omega(G)$ colors for proper coloring of $G$ and $f$ uses $\omega(G)$ colors, $f$ is an optimal PCF $\omega(G)$-coloring of the graph $G$. Thus, $\chi_{pcf}(G) = \omega(G)$.
	
	\medskip
	
	(b) We provide a PCF coloring with $\omega(G)+1$ colors. Let $f$ be a coloring of $G$ such that $f(c_i) = i$ for every vertex $c_i\in C$ and $f(x) = k+1$ for every vertex $x\in I$. Since adjacent vertices are assigned different colors, $f$ is a proper coloring. Note that $C$ has at least two vertices.
	Clearly, every vertex $c_i\in C\setminus\{c_1\}$ has a unique color $f(c_1)$ in $N(c_i)$. The vertex $c_1$ has a unique color $f(c_2)$ in $N(c_1)$.
	Since $N(v)$ is a clique for each $v\in I$, each vertex $v\in I$ has a unique color in $N(v)$. Therefore, $f$ is a PCF coloring with $\omega(G)+1$ colors. Consequently, $\chi_{pcf}(G)\leq\omega(G)+1$. By Claim~\ref{bound_pseudo-split}, $\chi_{pcf}(G)\geq\omega(G)+1$. 
	Thus, $\chi_{pcf}(G) = \omega(G)+1$.
\end{proof}

Now, we present the running time analysis of finding an optimal PCF coloring for pseudo-split graphs. 
Let $G$ be a pseudo-split graph with $V(G) = S \cup C \cup I$, where $C$ induces a clique, $I$ induces an independent set, and $S$ induces a $C_5$. It can be checked in $O(|V|+|E|)$ time whether $C=\emptyset$. If $C=\emptyset$, then $G\cong C_5$. Therefore, $\chi_{pcf}(G) = 5$ and an optimal coloring can be find in linear-time by assigning distinct colors to each vertex of $G$. Now, we consider the case when $C\neq\emptyset$ and $S\neq\emptyset$. An optimal PCF coloring can be found in linear-time as shown in the proof of Theorem~\ref{Sneqemptyset}. 

\medskip

Finally, we consider the case when $S=\emptyset$. If $|C|=1$ and $G\cong K_2$, then $\chi_{pcf}(G) = 2$. Therefore, we can find an optimal PCF coloring of $G$ in $O(|V|+|E|)$ time. If $|C|=1$ and $G\not\cong K_2$, then $\chi_{pcf}(G) = 3$. We can find an optimal PCF coloring of $G$ in $O(|V|+|E|)$ time. 
First, we see how to check whether $G$ is special. Since $C$ is a maximal clique, $\deg(v)\leq |C|-1$ for each $v\in I$.
First, we construct a set of vertices, say $M$, such that the vertices of the set $C\setminus M$ are the candidates for $v_i$ and $v_j$ as mentioned in Definition~\ref{special_split}. Initialize $M = \emptyset$.
For each vertex $v\in I$, if $\deg(v) = |C|-1$, then add the non-neighbor of $v$ to $M$. This can be done in $O(|E|)$ time.

\begin{obs}\label{first}
	If $v_i$ and $v_j$ exist, then $v_i, v_j\in C\setminus M$.
\end{obs}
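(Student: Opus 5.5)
We need to show: if $v_i,v_j$ as in Definition~\ref{special_split} exist, then neither lies in $M$, where $M$ is the set of vertices $c\in C$ that are the unique non-neighbor of some $x\in I$ with $\deg(x)=|C|-1$. The plan is a direct contrapositive argument, handling $v_i$ and $v_j$ separately through conditions (a) and (b) of the definition.

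\textbf{The vertex $v_i$.} Suppose $v_i\in M$. Then some $x\in I$ has $\deg(x)=|C|-1$ and $N(x)=C\setminus\{v_i\}$. So $x$ is not adjacent to $v_i$. Condition (a) then requires $x$ to have a non-neighbor in $C\setminus\{v_i\}$, but $x$ is adjacent to all of $C\setminus\{v_i\}$. This contradicts (a), so $v_i\notin M$.

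\textbf{The vertex $v_j$.} Suppose $v_j\in M$, witnessed by $y\in I$ with $N(y)=C\setminus\{v_j\}$. We first need $y\in N(v_i)$, which requires $v_i\neq v_j$. This is the one delicate point, since the definition does not explicitly demand $v_i\neq v_j$.

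If $v_i\neq v_j$, then $v_i\in C\setminus\{v_j\}=N(y)$, so $y\in N(v_i)\cap I$ and $y$ is not adjacent to $v_j$. Condition (b) then forces $y$ to have a non-neighbor in $C\setminus\{v_j\}$, which is impossible. Hence $v_j\notin M$.

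If $v_i=v_j$, then $v_j=v_i$, and $v_i\notin M$ has already been shown. So $v_j\notin M$ in this case as well.

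In both cases $v_i,v_j\in C\setminus M$, which proves the observation. The only obstacle is this degenerate case $v_i=v_j$, and it is dispatched by the first step. Everything else is immediate from the definition of $M$ together with the fact that $C$ is a maximal clique, which guarantees that each $x\in I$ has $\deg(x)\le |C|-1$ and so has at least one non-neighbor in $C$.
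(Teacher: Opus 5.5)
Your proof is correct and follows the paper's argument. A vertex of $M$ is the unique non-neighbor of some $x\in I$ that is adjacent to all of the rest of $C$; this contradicts condition (a) for $v_i$ and, via $y\in N(v_i)\cap I$, condition (b) for $v_j$. Your separate handling of the case $v_i=v_j$ is fine, though the paper gets $v_i\neq v_j$ automatically: in its second case it assumes $v_i\in C\setminus M$ and $v_j\in M$.
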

\begin{proof}
	First, assume that $v_i\in M$. Consequently, there exists a vertex $x\in I$ such that $x$ does not have a non-neighbor in $C\setminus\{v_i\}$. % is adjacent to each vertex $v\in C\{v_i\}$. 
	This is a contradiction to the definition of $v_i$. Now, assume that $v_i\in C\setminus M$ and $v_j\in M$. Consequently, there exists a vertex $y\in I$ such that $y$ does not have a non-neighbor in $C\setminus\{v_j\}$. %is adjacent to each vertex $v\in C\{v_j\}$. 
	Note that $y\in N(v_i)\cap I$. This is a contradiction to the definition of $v_j$. Therefore, $v_i, v_j\notin M$.
\end{proof}

\begin{obs}\label{second}
	If $|C\setminus M|\geq2$, then $v_i$ and $v_j$ exist.
\end{obs}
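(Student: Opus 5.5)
The plan is to take \emph{any} two distinct vertices of $C\setminus M$ as $v_i$ and $v_j$ and check the two conditions of Definition~\ref{special_split} directly. The argument rests on one characterization of $M$. By construction, a vertex $c\in C$ lies in $M$ exactly when some $x\in I$ has $\deg(x)=|C|-1$ and $c$ is the unique non-neighbor of $x$ in $C$. Equivalently, $c\in M$ if and only if there is some $x\in I$ with $N(x)=C\setminus\{c\}$. This uses that $I$ is independent, so $N(x)\subseteq C$ for $x\in I$. It also uses that every $x\in I$ has at least one non-neighbor in $C$, which holds because $C$ is a maximal clique.

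So, assuming $|C\setminus M|\ge 2$, we fix distinct $v_i,v_j\in C\setminus M$.

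\emph{Condition 1.} Let $x\in I$ be non-adjacent to $v_i$, and suppose for contradiction that $x$ has no non-neighbor in $C\setminus\{v_i\}$. Then $x$ is adjacent to every vertex of $C\setminus\{v_i\}$ and not to $v_i$. Since $N(x)\subseteq C$, this gives $N(x)=C\setminus\{v_i\}$. By the characterization above, $v_i\in M$, which contradicts the choice of $v_i$.

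\emph{Condition 2.} Let $y\in N(v_i)\cap I$ be non-adjacent to $v_j$, and suppose $y$ has no non-neighbor in $C\setminus\{v_j\}$. By the same reasoning, $N(y)=C\setminus\{v_j\}$, hence $v_j\in M$, again a contradiction. The hypothesis $y\in N(v_i)$ plays no role here beyond being consistent with $N(y)=C\setminus\{v_j\}$, which is why $v_i\neq v_j$ is needed.

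Both conditions therefore hold, so $v_i$ and $v_j$ witness that $G$ is special. The only delicate step is justifying the equivalence ``$c\in M$ if and only if $N(x)=C\setminus\{c\}$ for some $x\in I$.'' It follows from $N(x)\subseteq C$ together with $\deg(x)\le |C|-1$, which the paper has already noted for maximal $C$. Once this is in place, the rest is immediate. Combined with Observation~\ref{first}, this shows that $G$ is special if and only if $|C\setminus M|\ge 2$, which can be tested in linear time.
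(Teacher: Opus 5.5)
Your proof is correct and takes essentially the same route as the paper: choose any two distinct vertices of $C\setminus M$ as $v_i$ and $v_j$, then verify both conditions of Definition~\ref{special_split} directly. The only difference is that you spell out the step the paper leaves implicit, namely that $c\in M$ exactly when some $x\in I$ has $N(x)=C\setminus\{c\}$; this follows from $N(x)\subseteq C$ alone, without needing maximality of $C$.
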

\begin{proof}
	Let $u, z\in C\setminus M$. For any $x\in I$, if $x$ is not adjacent to $u$, then $x$ has a non-neighbor in $C\setminus\{u\}$. Therefore, $u$ is a candidate for $v_i$. For any $y\in N(u)\cap I$, if $y$ is not adjacent to $z$, then $y$ has a non-neighbor in $C\setminus\{z\}$. Thus, $z$ is a candidate for $v_j$ corresponding to $v_i$.
	Therefore, $v_i$ and $v_j$ exist.
\end{proof}

If $|C\setminus M| \leq 1$, then $v_i$ and $v_j$ does not exist by Observation~\ref{first} and $G$ is not special. 
If $|C\setminus M| \geq 2$, then by Observation~\ref{second} $v_i$ and $v_j$ exist and $G$ is special. This can be checked in linear-time. Therefore, we can find an optimal PCF coloring of $G$ in $O(|V|+|E|)$ time, as shown in the proof of Theorem~\ref{split}.

\section{Conclusion}
In this paper, we have shown that the PCF chromatic number cannot be approximated within a factor of $O(n^{1-\varepsilon})$, unless P=NP. Using the same reduction, we have shown that PCF $k$-\textsc{colorability} is NP-complete for dually chordal graphs.
On the positive side, we provided polynomial time algorithms in subclasses of dually chordal graphs. We have mainly focused on finding linear-time algorithms. In particular, we present linear-time algorithms to find an optimal PCF coloring of block graphs, proper interval graphs, chain graphs, and pseudo-split graphs.
Moreover, we have shown that $\chi_{pcf}(G)$ is bounded by a linear function $f(\omega(G))$ for all these graph classes. $f(\omega(G))=\omega(G)+2$ for chain graphs; $f(\omega(G))=\omega(G)+1$ for block graphs, proper interval graphs, and pseudo-split graphs (except $C_5$). Further, it is natural to investigate the complexity of PCF $k$-\textsc{colorability} in the broader class of chordal graphs.

\section*{Declarations}

\noindent{\bf Conflict of interest} The authors do not have financial or non-financial interests that are directly or indirectly related to the work submitted for publication.

\noindent{\bf Data availability}
No data was used for the research described in this paper.

\end{document}